\documentclass[11pt]{article}
\date{\today}

\usepackage[margin=1in]{geometry}
\usepackage[utf8]{inputenc}
\usepackage{amsthm}
\usepackage{amsmath}
\usepackage{amsfonts}
\usepackage{amssymb}
\usepackage[colorlinks]{hyperref}
\hypersetup{
  colorlinks,
  citecolor=Blue,
  linkcolor=Blue,
  urlcolor=Blue}
\usepackage{cleveref}
\usepackage[dvipsnames]{xcolor}
\usepackage{fullpage}
\usepackage{algorithm,caption}
\captionsetup[algorithm]{
  labelfont = bf,
  labelsep = period
}

\usepackage{algpseudocode}
\usepackage{xspace}
\usepackage{multirow}
\usepackage{diagbox}
\usepackage{thm-restate}
\usepackage{multirow}
\usepackage{array}
\usepackage{siunitx}
\usepackage{booktabs}

\usepackage{titlesec}

\algnewcommand\algorithmicinput{\textbf{Parameters:}}
\algnewcommand\Params{\item[\algorithmicinput]}
\usepackage{derivative}
\usepackage{mathtools}
\usepackage{bbm}
\usepackage{tikz}
\usepackage{forest}

\newtheorem{theorem}{Theorem}[section]

\newtheorem{lemma}[theorem]{Lemma}

\newtheorem{definition}[theorem]{Definition}

\newtheorem{claim}[theorem]{Claim}

\newtheorem{insight}{Insight}

\newcommand{\maxcsps} {\textsc{Max-CSP}s\xspace}
\newcommand{\maxkcsps} {\textsc{Max-$k$-CSP}s\xspace}

\newcommand{\maxcsp} {\textsc{Max-CSP}\xspace}
\newcommand{\ug} {\textsc{Unique Games}\xspace}
\newcommand{\mincsp}{\textsc{Min-CSP}\xspace}
\newcommand{\minkcsp}{\textsc{Min-\mbox{$k$}-CSP}\xspace}
\newcommand{\minkcsps}{\textsc{Min-\mbox{$k$}-CSP}s\xspace}
\newcommand{\mincsps}{\textsc{Min-CSP}s\xspace}

\newcommand{\minuncut}{\textsc{Min-Uncut}\xspace}
\newcommand{\mintwocnf}{\textsc{Min-$2$-CNF-Deletion}\xspace}
\newcommand{\mintwo}{\textsc{Min-$2$-SAT}\xspace}
\newcommand{\maxtwo}{\textsc{Max-$2$-SAT}\xspace}

\newcommand{\maxnaethree}{\textsc{Max-NAE-\mbox{$3$}-SAT}\xspace}
\newcommand{\minnaethree}{\textsc{Min-NAE-\mbox{$3$}-SAT}\xspace}
\newcommand{\minthree}{\textsc{Min-\mbox{$3$}-SAT}\xspace}
\newcommand{\nae}{\textsc{NAE\mbox{-}$3$-SAT}\xspace}

\newcommand{\nsat}[1]{$#1$-\textsc{SAT}\xspace}
\newcommand{\ksat}{\nsat{k}}

\newcommand{\kcsp}{\ncsp{k}}
\newcommand{\kcsps}{$k$-\textsc{CSP}s\xspace}

\newcommand{\threesat}{\nsat{3}}
\newcommand{\twosat}{\nsat{2}}
\newcommand{\klin}{$k$-\textsc{Lin}}
\newcommand{\kand}{$k$-\textsc{And}}

\newcommand{\csp}{\textsc{CSP}\xspace}
\newcommand{\csps}{\textsc{CSP}s\xspace}
\newcommand{\ncsp}[1]{$#1$-\textsc{CSP}\xspace}

\newcommand{\np}{\textsf{NP}\xspace}
\newcommand{\nph}{\textsf{NP}\mbox{-}hard\xspace}
\newcommand{\nphn}{\textsf{NP}\mbox{-}hardness\xspace}
\newcommand{\bpp}{\textsf{BPP}\xspace}

\newcommand{\poly}{\text{\rm poly}}
\newcommand{\polylog}{\text{\rm  polylog}}

\newcommand{\lp}{\ensuremath{\mathrm{LP}}}

\newcommand{\opt}{\ensuremath{\mathrm{OPT}}}
\newcommand{\eps}{\varepsilon}

\newcommand{\val}{\ensuremath{\mathrm{val}}}
\newcommand{\supp}{\ensuremath{\mathrm{supp}}}
\newcommand{\roundalgo}{\textsc{round-pd}}
\newcommand{\kl}{\textsc{KL}}
\newcommand{\dkl}{D_\kl}
\renewcommand{\hat}{\widehat}
\renewcommand{\tilde}{\widetilde}
\DeclareMathOperator*{\ex}{\mathbb{E}}
\newcommand{\fix}{\ensuremath{\mathrm{fix}}}
\newcommand{\unsat}{\ensuremath{\mathrm{unsat}}}
\newcommand{\rate}{\ensuremath{\mathrm{rate}}}
\newcommand{\eul}{\ensuremath{\mathrm{e}}}

\newcommand{\id}{\ensuremath{\mathrm{Id}}}
\newcommand{\inv}{\ensuremath{\overline{\mathrm{Id}}}}

\newcommand{\vecc}{\ensuremath{\mathbf{c}}}

\newcommand{\vecs}{\ensuremath{\mathbf{s}}}
\newcommand{\vect}{\ensuremath{\mathbf{t}}}

\newcommand{\cC}{\ensuremath{\vecc}}
\newcommand{\calC}{\ensuremath{\mathcal{C}}}

\newcommand{\cD}{\ensuremath{\mathcal{D}}}
\newcommand{\E}{\ensuremath{\mathbb{E}}}

\newcommand{\rF}{\ensuremath{\mathrm{F}}}

\newcommand{\cI}{\ensuremath{\mathcal{I}}}

\newcommand{\N}{\ensuremath{\mathbb{N}}}

\newcommand{\cP}{\ensuremath{\mathcal{P}}}

\newcommand{\R}{\ensuremath{\mathbb{R}}}

\newcommand{\cS}{\ensuremath{\vecs}}

\newcommand{\cT}{\ensuremath{\vect}}

\newcommand{\rU}{\ensuremath{\mathrm{U}}}

\title{Min-CSPs on Complete Instances II:\\
	Polylogarithmic Approximation for Min-NAE-3-SAT}
\author{Aditya Anand\thanks{University of Michigan. Email: \url{adanand@umich.edu}.}
	\and Euiwoong Lee\thanks{University of Michigan. Email: \url{euiwoong@umich.edu}.
		Supported in part by NSF grant CCF-2236669 and Google.} \and Davide Mazzali\thanks{EPFL. Email: \url{davide.mazzali@epfl.ch}.} \and Amatya Sharma\thanks{University of Michigan. Email: \url{amatya@umich.edu}.}}
\date{}

\begin{document}
	\pagenumbering{gobble}
	
	\maketitle
	
	\begin{abstract}

		This paper studies complete $k$-Constraint Satisfaction Problems (CSPs), where an $n$-variable instance has exactly one nontrivial constraint for each subset of $k$ variables, i.e., it has $\binom{n}{k}$ constraints. A recent work started a systematic study of complete $k$-CSPs [Anand, Lee, Sharma, SODA'25], and showed a quasi-polynomial time algorithm that decides if there is an assignment satisfying all the constraints of any complete Boolean-alphabet $k$-CSP, algorithmically separating complete instances from dense instances.

		The tractability of this decision problem is necessary for any nontrivial (multiplicative) approximation for the minimization version, whose goal is to minimize the number of violated constraints. The same paper raised the question of whether it is possible to obtain nontrivial approximation algorithms for complete \textsc{Min}-$k$-CSPs with $k \geq 3$.

		In this work, we make progress in this direction and show a quasi-polynomial time $\polylog(n)$-approximation to \textsc{Min-NAE-3-SAT} on complete instances, which asks to minimize the number of $3$-clauses where all the three literals equal the same bit. To the best of our knowledge, this is the first known example of a CSP whose decision version is \textsf{NP-Hard} in general (and dense) instances 
		while admitting a $\polylog(n)$-approximation in complete instances. Our algorithm presents a new iterative framework for rounding a solution from the Sherali-Adams hierarchy, where each iteration interleaves the two well-known rounding tools: the {\em conditioning} procedure, in order to ``almost fix'' many variables, and the {\em thresholding} procedure, in order to ``completely fix'' them.
		
		Finally, we improve the running time of the decision algorithms of Anand, Lee, and Sharma and show a simple algorithm that decides any complete Boolean-alphabet $k$-CSP in polynomial time.
	\end{abstract}
	
	\newpage{}
	
	\tableofcontents
	
	\newpage{}
	
	\pagenumbering{arabic}
	\setcounter{page}{1}
	
	\section{Introduction}

Constraint Satisfaction Problems (\csps) provide a unified framework for expressing a wide range of combinatorial problems, including SAT, Graph Coloring, and Integer Programming. 
A \csp consists of a set $n$ of variables that must be assigned values from a given alphabet, subject to a collection of $m$ constraints. The goal is typically to determine whether there exists an assignment that satisfies all constraints (decision \csps) or to optimize some measure of constraint satisfaction (optimization \csps).

A major example of optimization \csps is \maxcsps, where the objective is to find an assignment that maximizes the number of satisfied constraints. This class of problems has been extensively studied and it is known to admit various approximation algorithms, including the (conditionally) optimal approximability of fundamental problems such as \textsc{Max}-$3$-\textsc{LIN}, \textsc{Max}-$3$-\textsc{SAT}, \textsc{Max}-\textsc{Cut}, and \ug~\cite{Hastad01, Khot02, KKMO07}. However, significantly less is known for their minimization counterpart, namely \mincsps, which aim to minimize the number of violated constraints.

In many cases, the minimization versions of \csps are inherently harder to approximate than the maximization objective. For example, while \maxtwo admits a tight $\alpha_{\text{LLZ}}\approx 0.94016567$ approximation algorithm modulo the Unique Games Conjecture (UGC) \cite{lewin2002improved, brakensiek2024tight, Khot02}, the best-known approximation guarantee for \mintwo (also known as \mintwocnf) is an $O(\sqrt{\log n})$-approximation \cite{ACMM05} along with a hardness of $\omega(1)$ under the UGC \cite{Khot02}. Another example is that of \textsc{Not-All-Equal}-\threesat\ (\nae), where a clause (consisting of three literals) is satisfied if and only if not all the three literals evaluate to the same Boolean value. The maximization version, \maxnaethree admits a tight approximation factor of $\approx 0.9089$ modulo the UGC \cite{brakensiek2021mysteries}. In contrast, the minimization version \minnaethree cannot even admit any finite approximation in polynomial time, simply from the \nphn of the decision version \nae~\cite{sch78}.

Understanding the approximability of Min-CSPs remains a major challenge in computational complexity and approximation algorithms. On the brighter side, \cite{khanna2001approximability} proved that the optimal approximation ratio takes one of the values in $\{ 1, O(1), \polylog(n), \poly(n), \infty \}$ for any \mincsp with the Boolean alphabet (here in this paper, we will talk about \csps on the Boolean alphabet, unless specified), based on some structural classification.

Apart from the \emph{general} instances, CSPs often exhibit significantly different behavior when structural requirements are imposed on their instances. The structure of a CSP refers to assumptions on how constraints are distributed across the instance. A \ncsp{k} is defined on $n$ variables, where each constraint involves exactly $k$ variables, and the constraint structure can naturally be modeled as a $k$-uniform hypergraph, where variables correspond to vertices and each constraint corresponds to a hyperedge. Structural assumptions on \csps often translate into density conditions on the corresponding hypergraph. Two particularly important structured settings are (1) \emph{dense instances}, where the number of constraints (i.e. hyperedges) is $\Omega(n^k)$, and (2) \emph{complete instances}, where the number of constraints is exactly $\binom{n}{k}$.

\maxcsps on dense instances have been extensively studied, with powerful algorithmic techniques yielding strong approximation guarantees. In fact, for every \maxcsp on dense instances, a Polynomial-Time Approximation Scheme (PTAS) is known, achievable through any of the three major algorithmic frameworks: {\em random sampling}~\cite{arora1995polynomial, 
bazgan2003polynomial, alon2003random, de2005tensor, mathieu2008yet, KS09, barak2011subsampling, yaroslavtsev2014going, manurangsi2015approximating, fotakis2016sub}, {\em convex hierarchies}~\cite{de2007linear, arora2008unique, BRS11, GS11, yz14, alev2019approximating, jeronimo2020unique, bafna2021playing}, and {\em regularity lemmas}~\cite{frieze1996regularity, coja2010efficient, oveis2013new, jeronimo2021near}. However, in contrast, \mincsps on dense instances remain far less explored. Known results exist only for specific problems, such as $O(1)$-approximation algorithms for \ug and \minuncut~\cite{bazgan2003polynomial, KS09, GS11, meot2023voting}, and a PTAS for fragile \mincsps \cite{KS09}, where a \csp is fragile when changing the value of a variable always flips a clause containing it from satisfied to unsatisfied. Despite these advances, a general framework for tackling \mincsps in dense settings is still lacking, making it a compelling direction for further study.

Building on the study of structured CSPs, \cite{anand2025min} introduced complete instances as an extreme case of structure, where every possible $k$-set of variables forms a constraint. The primary motivation for the study of complete instances comes from their connections to machine learning and data science, their role in unifying algorithmic techniques for dense \maxcsps, and the structural insights they provide into \csps (see \cite{anand2025min} for details). They give a constant-factor polynomial time approximation for \mintwo on complete instances, which contrasts to the $\omega(1)$-hardness on dense instances, 
and quasi-polynomial (specifically, $n^{O(\log n)}$) time algorithms for the decision versions of \ksat\ and \kcsp for all constants $k$. They also give a polynomial time algorithm for deciding \nae on complete instances. On the hardness side, they prove that there is no polynomial time algorithm for exact optimization (which does not distinguish the maximization and minimization versions) of \nae, \ksat, \klin, \kand, and \ug even on complete instances unless $\np \subseteq \bpp$.

\paragraph{Our Contribution.} We prove two main results in our work: (1) a quasi-polynomial time $O(\log^6 n)$ approximation for \minnaethree on complete instances, and, (2) a polynomial time algorithm for the decision version of \kcsp for all constants $k$.

We first present a quasi-polynomial (i.e., $n^{\polylog(n)}$) time  $\polylog(n)$-approximation algorithm for \minnaethree on complete instances.
To the best of our knowledge, it is the first known example of a CSP whose decision version is \nph in general instances (and dense instances too, see Claim~\ref{claim:hardness-dense})
while admitting a $\polylog(n)$-approximation in complete instances.

\begin{theorem}\label{thm:nae}
    There is an algorithm running in time $n^{\log ^\kappa n}$ for some constant $\kappa > 0$ that gives an $O(\log^6 n)$-approximation for \minnaethree on complete instances. 
    Furthermore, the integrality gap of the degree-$O(\log ^\kappa n)$ Sherali-Adams relaxation is at most $O(\log^6 n)$.
\end{theorem}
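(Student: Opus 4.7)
The plan is to solve the Sherali--Adams (SA) relaxation of degree $r=\Theta(\log^\kappa n)$, which can be done in time $n^{\polylog(n)}$, and round its pseudo-distribution $\mu$ by iterating two subroutines across several rounds. Each round consists of a \emph{conditioning} phase followed by a \emph{thresholding} phase. In the conditioning phase, we sample a small witness set of variables and condition $\mu$ on their joint values, in the style of the global-correlation arguments of Raghavendra--Tan and Barak--Raghavendra--Steurer; this drives the average pairwise covariance among the surviving variables below a prescribed threshold $\delta$. Completeness of the instance is essential here: every pair of surviving variables is involved in $n-2$ triple constraints, so low average covariance implies that the SA pseudo-distribution is close to the product of its marginals on almost every triple. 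In the thresholding phase, every variable whose conditioned marginal lies in $[0,\epsilon]\cup[1-\epsilon,1]$ is rounded to the nearest integer. With $\epsilon$ and $\delta$ chosen appropriately, a constant (or almost-constant) fraction of the surviving variables is fixed per round, so the procedure halts after $\polylog(n)$ rounds; the total number of conditioning steps stays within $\polylog(n)$, matching the SA degree.

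For the cost analysis, consider rounding a variable $i$ with conditioned marginal $\mu_i \le \epsilon$ to the value $0$. A new NAE-violation on a triple $\{i,j,k\}$ can arise only if both $j$ and $k$ are eventually also set to $0$. Because the conditioned pseudo-distribution is $\delta$-close to the product of its marginals on almost every triple, the SA contribution of such an all-$0$ triple is at least roughly $(1-\epsilon)(1-\mu_j)(1-\mu_k)-O(\delta)$, so the ratio of integral to fractional cost on triples incident to $i$ is $O(1/\epsilon)$. Summing this per-variable overhead across all rounds and absorbing the $\polylog(n)$ factors coming from $\epsilon$, $\delta$, and the witness-set size yields the $O(\log^6 n)$ approximation ratio. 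Since the argument only ever upper-bounds the integer cost by a $\polylog(n)$ multiple of the SA objective, the same chain of inequalities implies the stated bound on the integrality gap.

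The principal obstacle is the careful bookkeeping of the thresholding costs. A triple $\{i,j,k\}$ might have its three variables thresholded in different rounds, so its eventual violation must be assigned to a unique \emph{responsible} round to avoid overcounting; this requires charging the violation against the pseudo-distribution at the round in which the \emph{first} of the three variables is thresholded, and controlling how the fractional triple-cost drifts during the earlier conditioning steps. A secondary obstacle is the quantitative variance-reduction lemma tailored to complete NAE-3-SAT: standard SA arguments only control the \emph{average} pairwise covariance, but the rounding requires that in each round a \emph{constant fraction} of surviving variables has marginals within $\epsilon$ of $\{0,1\}$. Establishing this translation from average covariance to per-variable marginal concentration is where the complete structure (every pair appearing in $n-2$ triple constraints) is used most sharply.
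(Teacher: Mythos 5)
Your proposal follows the standard dense \textsc{Max-CSP} recipe: condition to drive average covariance below $\delta$, argue near-product structure on most triples, and threshold marginals at a fixed $\epsilon$, accepting an additive $O(\delta)$ loss per constraint. This is precisely the route the minimization objective defeats. With a degree-$\polylog(n)$ relaxation the best closeness-to-product you can buy is $\delta = 1/\polylog(n)$, whereas $\opt$ on a complete instance can be as small as $\Theta(1/n^{3})$; an additive $O(\delta)$ error on each of $\binom{n}{3}$ triples then produces total cost $\delta\binom{n}{3} \gg n^{2}\cdot\opt\cdot\binom{n}{3}$, destroying any multiplicative guarantee. Your own cost bound ``$(1-\epsilon)(1-\mu_j)(1-\mu_k)-O(\delta)$'' has exactly this problem: the $-O(\delta)$ term swamps the fractional cost of a nearly-satisfied triple. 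Relatedly, the step you defer as a ``secondary obstacle'' --- translating low average covariance into a constant fraction of marginals within $\epsilon$ of $\{0,1\}$ --- is not merely technical; it is false as stated (a product pseudodistribution with all marginals equal to $1/2$ has zero covariance), and it is not the mechanism that actually fixes variables.

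The working argument is different in two essential ways. First, the rounding threshold is not a fixed $\epsilon$ but $\tau\cdot\delta$ where $\delta=\val(\cI[V_{\rU}],\mu[V_{\rU}])$ is the \emph{current fractional value} of the unfixed sub-instance, so every error incurred is a multiple of the LP cost rather than an independent additive term. Second, the fixing of $\Omega(n)$ variables comes from the \nae structure itself: a constraint violated with probability $O(\delta)$ has some satisfying assignment retaining mass $\ge 1/7$, and in that assignment two of the three literals agree; conditioning on that pair taking those values forces the third variable's minority marginal down to $O(\delta)$. Completeness guarantees enough such constraints per variable, and a counting argument gives $\Omega(n)$ fixable variables. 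Correlation reduction \`a la Yoshida--Zhou enters only to make $O(\log\log n)$ sampled pairs jointly near-independent, so that the fixing event succeeds with probability $1-O(1/\log n)$ --- needed because Markov only controls the value drift with probability $1-O(1/\log n)$ per round and both events must hold simultaneously across $O(\log n)$ rounds. Finally, the cross-round charging you correctly identify as the principal obstacle is resolved not by assigning each triple to a responsible round but by a weighted potential (the ``aggregate value'') over constraints classified by their number of unfixed variables, shown to grow by at most a $(1+O(1/\log n))$ factor per round, together with a randomized choice of threshold and two base cases (brute force when few variables remain; a \mintwo reduction when $\delta$ is large). Without the $\delta$-scaled threshold and the NAE-specific fixing lemma, the proposal does not close.
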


\noindent
Beyond addressing this specific question, our result also strengthens one of the main motivations for studying \mincsps on complete instances: understanding whether a combination of algorithmic techniques (random sampling, convex hierarchies, and regularity lemmas) can improve approximability results. As stated earlier, while each of these techniques independently yields PTASes for \maxcsps on dense/complete instances, their effectiveness for \mincsps is much less understood. Our algorithm presents a new iterative framework for rounding a solution from the $\polylog(n)$-round Sherali-Adams hierarchy, where each iteration interleaves the two well-known rounding tools: (1) the {\em conditioning} procedure, which identifies and conditions on a small set of variables in order to almost fix a constant fraction of variables and (2) the {\em thresholding} procedure, in order to completely fix them. We fix a constant fraction of variables in each iteration (for a total of $O(\log n)$ iterations) while ensuring that the value of the LP remains within a $\polylog(n)$ factor of the optimal value by the end of all the iterations.

Secondly, we give polynomial time algorithms for the decision version of \kcsps (for every constant $k$), improving over the quasi-polynomial time algorithm of~\cite{anand2025min}. 

\begin{theorem}\label{thm:kcsp}
    For every $k \ge 2$, there is a polynomial time algorithm that decides whether a complete instance of \kcsp is satisfiable or not.
\end{theorem}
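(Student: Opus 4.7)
The plan is to design a simple enumerate-and-propagate algorithm. For every $(k-1)$-subset $V_0 \subseteq V$ and every assignment $\alpha_0 : V_0 \to \{0,1\}$---a total of $\binom{n}{k-1} \cdot 2^{k-1} = O(n^{k-1})$ guesses, polynomial for constant $k$---we run a propagation subroutine that attempts to extend $\alpha_0$ to a complete assignment. If any guess yields a complete assignment that verifies against all $\binom{n}{k}$ constraints, we return SAT; otherwise we return UNSAT.

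The propagation subroutine maintains a fixed set $V_{\mathrm{fix}} \supseteq V_0$ together with a partial assignment $\alpha_{\mathrm{fix}}$ extending $\alpha_0$. In each round, for every variable $v \notin V_{\mathrm{fix}}$ and every $(k-1)$-subset $W \subseteq V_{\mathrm{fix}}$, the constraint $C_{W \cup \{v\}}$ evaluated at $\alpha_{\mathrm{fix}}|_W$ becomes a unary predicate on $v$. We intersect these predicates over all such $W$: if the intersection is empty, this guess is aborted; if it is a singleton, we fix $v$ to that value and add it to $V_{\mathrm{fix}}$. We iterate until no new variable is fixed. Crucially, as $|V_{\mathrm{fix}}|$ grows the number of derived unary conditions per remaining $v$ grows as $\binom{|V_{\mathrm{fix}}|}{k-1}$, so propagation rapidly becomes very restrictive once it gets started. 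Soundness is immediate since candidate assignments are explicitly verified, and the running time is clearly polynomial: $O(n^{k-1})$ guesses, each of which runs propagation and verification in $\poly(n)$ time.

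The main obstacle, and the technical heart of the argument, is completeness: for any satisfiable instance with a fixed satisfying assignment $\alpha^*$, at least one seed $(V_0, \alpha^*|_{V_0})$ should extend to a complete assignment via propagation. The key structural observation is that if a variable $v$ remains free when propagation stabilizes, then for \emph{every} $(k-1)$-subset $W \subseteq V_{\mathrm{fix}}$, the constraint $C_{W \cup \{v\}}$ must accept both $(\alpha_{\mathrm{fix}}|_W, 0)$ and $(\alpha_{\mathrm{fix}}|_W, 1)$. This imposes $\binom{|V_{\mathrm{fix}}|}{k-1}$ simultaneous symmetry conditions on constraints of the instance and, by completeness of the instance, these conditions are highly restrictive. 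The plan is to prove a combinatorial lemma: for any satisfiable complete $k$-CSP one can always choose the initial seed $(V_0, \alpha_0)$ so that no free variable persists at the end of propagation, either because the constraint structure forces propagation to terminate with $V_{\mathrm{fix}} = V$, or because the remaining free variables can be set arbitrarily without violating any constraint (so verification still succeeds). A secondary fallback is to resolve any residual free variables by an additional constant-depth branching step, still keeping the overall runtime polynomial. The bulk of the proof work lies in establishing this combinatorial lemma, which is where the complete structure of the instance is used decisively.
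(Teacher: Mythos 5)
There is a genuine gap at the point you yourself identify as ``the technical heart'': the completeness lemma for your propagation scheme is asserted as a plan, not proved, and as stated it is not true in the form you need. Consider the complete $2$-CSP in which every pairwise constraint is $\neg(x_i \wedge x_j)$ (satisfying assignments: at most one variable set to $1$). Seeding with $v_1 \gets 0$, which is consistent with a satisfying assignment, every derived unary predicate on every other variable accepts both bits, so propagation stabilizes immediately with $n-1$ free variables that are still mutually constrained by nontrivial NAND clauses; they can neither be ``set arbitrarily'' nor resolved by constant-depth branching (the residue is a full complete instance of the same type). So your claim must retreat to ``\emph{some} seed propagates fully or leaves a trivially extendable residue'' (here $v_1 \gets 1$ happens to work), and that existential statement is exactly what is missing; nothing in the proposal shows it for general complete $k$-CSPs, and the observation that a persistent free variable imposes $\binom{|V_{\mathrm{fix}}|}{k-1}$ symmetry conditions only constrains clauses between $v$ and the fixed set, saying nothing about the clauses among the free variables, which is where the difficulty lives.

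For contrast, the paper avoids propagation entirely. It invokes the VC-dimension bound of Anand, Lee, and Sharma (Lemma~3.1 there) stating that a complete $k$-CSP on $m$ variables has at most $O(m^{k-1})$ satisfying assignments, and then simply enumerates \emph{all} satisfying assignments of the prefix sub-instances $\cI[\{v_1,\dots,v_i\}]$ incrementally: each $\mathcal{S}_i$ is obtained from $\mathcal{S}_{i-1}$ by trying both values for $v_i$ and filtering. The crucial point is that completeness is hereditary, so every prefix sub-instance is itself complete and $|\mathcal{S}_i| = O(i^{k-1})$ never blows up; correctness is then immediate because $\mathcal{S}_n$ is exactly the set of satisfying assignments. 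If you want to keep your guess-and-propagate architecture, you would need to either prove your combinatorial lemma (which I suspect is false or at least very delicate for $k \ge 3$) or replace propagation by the paper's idea of carrying the full, polynomially bounded list of consistent partial assignments rather than a single one.
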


\noindent
Our algorithm is remarkably simple. It relies on Lemma $3.1$ of  \cite{anand2025min}, which shows that the number of satisfying assignments for a complete instance of a \kcsp is at most $O(n^{k-1})$, based on a VC-dimension argument. The algorithm first arbitrarily orders the variables $v_1, \ldots, v_n$, and then proceeds by iteratively maintaining all satisfying assignments for the partial (complete) instance induced on the first $i$ variables. Since the number of satisfying assignments for a complete instance with $i$ variables is at most $O(i^{k-1})$, we can efficiently keep track and update these solutions in polynomial time, leading to our second main result.

\paragraph{Open Questions.}  Our work establishes the first nontrivial approximation result for a \mincsp, \minnaethree on complete instances, whose decision version is \nph on general and dense instances. \nae remains hard even when every variable appears in $\Omega(n^2)$ clauses (\Cref{claim:hardness-dense}). 
However, our techniques do not currently extend to other problems including \minthree and \textsc{Min-NAE-4-SAT}.

This raises the open question regarding both approximation algorithms and hardness results for \minkcsps. While exact optimization is known to be hard, proving inapproximability beyond this --- such as APX-hardness --- remains an important direction. A natural first candidate is \minthree, for which the existence of efficient quasi-polynomial time approximation algorithms or even hardness results on complete instances, remains unresolved.

\paragraph{Organization.} \Cref{sec:overview} provides an overview of the quasi-polynomial time $O(\log^6 n)$ approximation for \minnaethree, i.e., a proof outline for \Cref{thm:nae}. \Cref{sec:prelims} establishes the notations and preliminaries required throughout the paper. \Cref{sec:roundingalgo} shows the proof of the rounding algorithm, \Cref{alg:roundingalgo}, which proves \Cref{thm:nae} by rounding the solution of the Sherali-Adams LP. \Cref{sec:kcsp} presents the polynomial time algorithm for the decision \kcsps (\Cref{thm:kcsp}). 
	\section{Technical overview}
\label{sec:overview}
Convex programming hierarchies have proven to be instrumental in the design of approximation algorithms of \maxcsps~{\cite{de2007linear, arora2008unique, BRS11, GS11, yz14, alev2019approximating, jeronimo2020unique, bafna2021playing}} and, limited to a few special cases, \mincsps~{\cite{CCLLNV24}}. Therefore, it is natural to attempt to approximate \minnaethree on complete instances using these approaches. We prove \Cref{thm:nae} by designing an algorithm that first runs the degree-$d$ Sherali-Adams relaxation of \minnaethree, and then rounds the fractional solution to a Boolean assignment.

Informally, a fractional solution $\mu$ to such a relaxation associates each set $S$ of at most $d$ variables with distributions $\mu_S$ over assignments $\alpha \in \{0,1\}^S$ to the variables in $S$. The crucial property is that these distributions are \textit{locally consistent}: for any two variable sets $S,T$ of size at most $d$, the local distributions $\mu_S$ and $\mu_T$ agree on the probability of each assignment to $S\cap T$. For this reason, such a solution $\mu$ is called a  \textit{pseudodistribution}.

The objective of the relaxation is to minimize the average over constraints of the probability that they are violated by an assignment sampled from their respective local distribution.  In particular, the objective value of an optimal pseudodistribution $\mu^*$ is at most the fraction of violated constraints in an optimal Boolean assignment.

As solving a degree-$d$ Sherali-Adams relaxation amounts to solving a linear program with $n^{O(d)}$ variables, we would like to keep $d$ as small as possible. In our context, we will set $d=\polylog(n)$.

Having put this notation in place, we now give an overview of how the algorithm from \Cref{thm:nae} rounds these fractional solutions. Specifically, in \Cref{subsec:techoverview_pd} we highlight some of the structural properties that our problem instance impose on the pseudodistributions, and in \Cref{subsec:techoverview_algo} we discuss how to exploit these properties to design a rounding algorithm.

\subsection{Pseudodistributions and complete NAE-3-SAT}
\label{subsec:techoverview_pd}
When rounding pseudodistributions obtained from the Sherali-Adams or Sum-of-Squares hierarchies, \textit{conditioning} is perhaps the main hammer at our disposal. This technique, introduced by \cite{BRS11, GS11} and adopted ubiquitously to approximate optimization \csps, is usually implemented as follows: first, we condition on a subset of $r=O(1/\epsilon^2)$ variables to reduce the average correlations among the groups of $t$ variables to below $O(2^t\epsilon)$; then, we perform independent rounding. This scheme is particularly effective in the context of dense \maxkcsps, where we capitalize upon the fact that optimal assignment satisfies at least an $\Omega(1)$ fraction of constraints.
Hence, we can afford to additively lose a total variation distance of $O(2^k \epsilon)=O(\epsilon)$ between the pseudodistribution and independent rounding in each constraint, and still obtain a $(1-O(\epsilon))$-approximation.

\paragraph{Obstacle: we cannot afford additive error.}In the context of \minkcsps, the objective function is the fraction of violated constraints. Hence, for a \minkcsp instance $\cI$, we define $\opt$ to be the minimum over all the variable assignments $\alpha$ of the fraction of constraints in $\cI$ unsatisfied by $\alpha$. With such an objective, the trick discussed above ceases to work: a highly satisfiable instance has a very small optimal value, possibly even $\opt \le O(1/n^{k})$. Due to this eventuality, we cannot afford to additively lose an $\epsilon$ term for each constraint: it would result in an objective value of, say, $\epsilon+\opt \gtrsim n \cdot \opt$ unless $\epsilon \ll 1/n^{k-1}$. Unfortunately, the degree $d$ of the pseudodistribution needs to be larger than the number $r=O(1/\epsilon^2)$ of variables we condition on, so setting $\epsilon \ll 1/n^{k-1}$ is prohibitive.

\paragraph{Advantage: $\polylog(n)$ loss in time and approximation.} On the flip side of the above discussion, one can deduce that if $\opt \ge 1/\polylog(n)$ we can in fact afford to lose an additive $O(\epsilon)$ term for $\epsilon=1/\polylog(n)$ and still run in $n^{O(r)}=n^{\text{polylog}(n)}$ time. Even more so, if $\opt \ge 1/\polylog(n)$ then we always get a $\polylog(n)$-approximation, no matter what solution we output (since the value we obtain is always at most $1$). We can therefore restrain ourselves to consider instances with small optimal value.

\begin{insight}
\label{ins:unsat}
    Without loss of generality, we can consider only instances $\cI$ with  $\opt \le 1/\polylog(n)$.
\end{insight}

\paragraph{Advantage: the instance is complete.} Let $\mu$ be an optimal pseudodistribution for our instance $\cI$, and let us denote by $\delta = \val(\mu)$ the fractional objective value achieved by $\mu$ on $\cI$. Thanks to \Cref{ins:unsat}, we can assume $\delta \le \opt \le 1/\polylog(n)$. Recalling that we are in the realm of complete instances, a simple averaging yields the following observation.

\begin{insight}
\label{ins:unsatabs}
    For at least $\binom{n}{3}/2$ of the triples $\{u,v,w\}$, we have that the constraint $P_{\{u,v,w\}}(\alpha)$ is unsatisfied with probability at most $2\delta \le 1/\polylog(n) $ over $\alpha \sim \mu_{\{u,v,w\}}$.
\end{insight}
\noindent
We remark that a version of \Cref{ins:unsatabs} remains true for dense --- but not necessarily complete --- instances up to an $\Omega(1)$ loss in the lower bound. However, being dense is not a ``hereditary property'', while being complete is a ``hereditary property'': any variable-induced sub-instance of a complete instance is also complete and in particular dense, while this might not be the case for dense instances. Thus, assuming that $\cI$ is complete allows us to benefit from \Cref{ins:unsatabs} locally to any variable-induced sub-instance. As we will see, this is instrumental to our algorithm.

\paragraph{Advantage: the structure of \nae constraints.} Consider any constraint $P_{\{u,v,w\}}$ that is unsatisfied with probability at most $\delta'$ over $\mu_{\{u,v,w\}}$, for some $\delta' \in [0,1]$. By \Cref{ins:unsatabs}, we can assume that $\delta'\le 2\delta \le 1/\polylog(n)$. Then one can see that, among the six satisfying assignments of $P_{\{u,v,w\}}$, there is one that retains probability mass at least $(1-1/\polylog(n))/6\ge 1/7$ in $\mu_{\{u,v,w\}}$. Call this satisfying assignment $(\alpha^*_u,\alpha^*_v,\alpha^*_w)$. Assuming for simplicity that the literals of $P_{\{u,v,w\}}$ are all positive, we conclude that exactly two of $\alpha^*_u,\alpha^*_v,\alpha^*_w$ must be equal. By symmetry, we can assume that $\alpha^*_v=\alpha^*_w=0$ and $\alpha^*_u=1$. We now consider the pseudodistribution $\tilde{\mu}=\mu|_{(v,w)\leftarrow (\beta_v,\beta_w)}$ obtained by conditioning on the variables $v,w$ taking the random value $(\beta_v,\beta_w) \sim \mu_{v,w}$. We can observe that $(\beta_v,\beta_w)=(\alpha^*_v,\alpha^*_w)=(0,0)$ with probability at least $1/7$. Therefore, if such an event occurs, we have $\tilde{\mu}_u(0) \le 14\delta$, as one can deduce from \Cref{tab:fixing}. In this case, we say that $u$ becomes $O(\delta)$-fixed in $\tilde{\mu}$.

\begin{table}[H]
\begin{center}
\hspace{4.5em}
        \begin{tabular}{|@{}c cc@{}| l@{}}
            \cline{1-3}
            \, $v$ & $w$ & $u$ \,\\
            \cline{1-3}
            \multirow{2}{*}{\hspace{-4em}$ \ge 1/7$ \hspace{-1em}$ \left.\begin{array}{l}
                \\
                \\
                \end{array}\right\lbrace $}\, $0$ & $0$ & $0$ \, & \\
             \, $0$ & $0$ & $1$ \, & \multirow{6}{*}{\hspace{-1.2em}$\left.\begin{array}{l}
                \\
                \\
                \\
                \\
                \\
                \\
                \end{array}\right\rbrace \ge 1-2\delta$} \\
            \, $0$ & $1$ & $0$ \,  & \\
            \, $0$ & $1$ & $1$ \, & \\
            \, $1$ & $0$ & $0$ \, & \\
            \, $1$ & $0$ & $1$ \, & \\
            \, $1$ & $1$ & $0$ \, & \\
            \, $1$ & $1$ & $1$ \, & \\
            \cline{1-3}
        \end{tabular}
    \end{center}
\caption{Distribution of the probability mass in $\mu_{\{u,v,w\}}$ to assignments of a constraint $P_{\{u,v,w\}}$ that is unsatisfied with probability at most $2\delta$, assuming that $P_{\{u,v,w\}}$ has only positive literals.}
\label{tab:fixing}
\end{table}
\noindent
We now continue to look at a fixed constraint $P_{\{u,v,w\}}$ as above, and consider the following random experiment: draw a random pair of variables $x,y$ together with a random assignment $(\beta_x,\beta_y) \sim \mu_{x,y}$ sampled from their local distribution, and let $\tilde{\mu}=\mu|_{(x,y)\leftarrow (\beta_x,\beta_y)}$ be the pseudodistribution obtained by conditioning on the pair $(x,y)$ taking value $(\beta_x,\beta_y)$. With probability  $1/\binom{n}{2}$ we hit the pair $(v,w)$, i.e., $(x,y)=(v,w)$, and with probability $1/7$ we have $(\beta_x,\beta_y)=(\beta_v,\beta_w)=(0,0)$. This means that the variable $u$ becomes $O(\delta)$-fixed in $\tilde{\mu}$ with probability at least $0.1/\binom{n}{2}$. By \Cref{ins:unsatabs}, we know that for many (i.e., $\Omega(n)$) choices for $u$, there are many (i.e., $\Omega(n^2)$) triples $P_{\{u,v,w\}}$ that can $O(\delta)$-fix $u$, as the one considered in the discussion above. Using the linearity of expectation and Markov's inequality, we can then conclude the following.

\begin{insight}
    \label{ins:constfix}
    For a random pair of variables $x,y$ and a random assignment $(\beta_x,\beta_y) \sim \mu_{x,y}$,  with probability $1/1000$ there are  at least $n/1000$ variables $u$ that become $O(\delta)$-fixed in $\tilde{\mu}=\mu|_{(x,y) \gets (\beta_x,\beta_y)}$.
\end{insight}
\noindent
A formal version of this fact is stated in \Cref{lem:condtofix}.

\paragraph{Advantage: ``completely'' fixing $O(\delta)$-fixed variables incurs little cost.} We recall that conditioning on a random assignment $(\beta_x,\beta_y) \sim \mu_{x,y}$ preserves the objective value in expectation, i.e., $\ex[\val(\tilde{\mu})]=\val(\mu)=\delta$. Hence, \Cref{ins:constfix} seems to suggest that conditioning on just two variables allows making nontrivial progress towards the goal of rounding the pseudodistribution: the entropy of $\Omega(n)$ variables should drastically decrease while preserving the expected objective value. 
However, the conditioning alone is not enough, as its effect decreases when the entropy becomes already small but still nontrivial (e.g., if all but $\sqrt{n}$ variables are $O(\delta)$-fixed, the above argument cannot guarantee any more progress). 
While independent rounding suffices for \maxcsps in this case, its additive loss can still be prohibitive for the minimization objective. 

In order to bypass this barrier, we additionally use {\em thresholding} to round many $O(\delta)$-fixed variables to exactly $0$ or $1$. Formally,  $\hat{\mu}$ is a new pseudodistribution where for each variable $u$ that is $O(\delta)$-fixed in $\tilde{\mu}$, fix $u$ to the bit retaining $1-O(\delta)$ of the probability mass of $\tilde{\mu}_u$. 

We now zoom in on a constraint $P_{\{u,v,w\}}$ with only positive literals where, say, $u$ has been fixed to $1$ in $\hat{\mu}$. Call $\gamma$ the probability that $P_{\{u,v,w\}}$ is unsatisfied by an assignment sampled from $\tilde{\mu}_{\{u,v,w\}}$ for convenience. One can observe that the probability of $P_{\{u,v,w\}}$ being violated by an assignment sampled from $\hat{\mu}_{\{u,v,w\}}$ equals $\hat{\mu}_{\{u,v,w\}}(1,1,1)$. This is because the assignment $(0,0,0)$ has probability mass $0$ in $\hat{\mu}_{\{u,v,w\}}$, since  we fixed $u$ to $1$. Then, with the help of \Cref{tab:rounding}, we can convince ourselves that the probability of $P_{\{u,v,w\}}$ being unsatisfied by an assignment sampled from $\hat{\mu}_{\{u,v,w\}}$ is no larger than $\gamma$ up to an additive $O(\delta)$ error.
\begin{table}[H]
\begin{center}
\hspace{4.5em}
        \begin{tabular}{|@{}c cc@{}| l@{}}
            \cline{1-3}
            \, $u$ & $v$ & $w$ \,\\
            \cline{1-3}
            \, $0$ & $0$ & $0$ \, & \\
             \, $0$ & $0$ & $1$ \, & \multirow{6}{*}{\hspace{-1.2em}$\left.\begin{array}{l}
                \\
                \\
                \\
                \\
                \\
                \\
                \end{array}\right\rbrace = 1-\gamma$} \\
            \, $0$ & $1$ & $0$ \,  & \\
            \multirow{1}{*}{\hspace{-4.5em}$ \le O(\delta)$ \hspace{-1em}$ \left.\begin{array}{l}
                \\
                \end{array}\right\lbrace $}\, $0$ & $1$ & $1$ \, & \\
            \multirow{4}{*}{\hspace{-6.5em}$ \ge 1-O(\delta)$ \hspace{-1em}$ \left.\begin{array}{l}
                \\
                \\
                \\
                \\
                \end{array}\right\lbrace $}\, $1$ & $0$ & $0$ \, & \\
            \, $1$ & $0$ & $1$ \, & \\
            \, $1$ & $1$ & $0$ \, & \\
            \, $1$ & $1$ & $1$ \, & \multirow{1}{*}{\hspace{-1.1em}$\left.\begin{array}{l}
                \\
                \end{array}\right\rbrace \le \gamma$} \\
            \cline{1-3}
        \end{tabular}
    \end{center}
\caption{Distribution of the probability mass in $\tilde{\mu}_{\{u,v,w\}}$ to assignments of a constraint $P_{\{u,v,w\}}$ where $u$ is $O(\delta)$-fixed in $\tilde{\mu}$ and where $\gamma$ is the probability that $P_{\{u,v,w\}}$ is unsatisfied by an assignment sampled from $\tilde{\mu}_{\{u,v,w\}}$, assuming that $P_{\{u,v,w\}}$ has only positive literals.}
\label{tab:rounding}
\end{table}

\noindent
From the above discussion, we conclude that the fractional objective value $\val(\hat{\mu})$ of $\hat{\mu}$ is at most $\val(\tilde{\mu})+O(\delta)$. Recalling that $\ex[\val(\tilde{\mu})]=\val(\mu)=\delta$, \Cref{ins:constfix} can then be refined as follows.

\begin{insight}
    \label{ins:newpd}
    By sampling a random pair of variables, conditioning on a random assignment to them, and performing the thresholding, we can construct a new pseudodistribution $\hat{\mu}$ such that $\ex[\val(\hat{\mu})]=O(\delta)$ and with probability $1/1000$ there are at least $n/1000$ integral variables $u$ in $\hat{\mu}$.
\end{insight}
\noindent
A formal version of this fact can be obtained by combining \Cref{lem:condtofix} and \Cref{lemma:threshround}.

\subsection{Towards an algorithm}
\label{subsec:techoverview_algo}
It would be tempting to employ \Cref{ins:newpd} to get a rounding scheme as the one outlined in \Cref{alg:simple}: sample a pair of variables and a random assignment, construct $\hat{\mu}$ as above, and recurse on the sub-instance induced by the variables that are still not integral (or halt if there is no such variable). The intuition behind this \Cref{alg:simple} is backed up by a trifecta.
 \begin{enumerate}
     \item First, \Cref{ins:newpd} applies also to the sub-instance that we recurse on: it essentially only relies on \Cref{ins:unsatabs}, which also holds on the sub-instance since $\cI$ is complete (using the fact that being complete is a ``hereditary property'', as anticipated).
     \item Second, we expect that $O(\log n)$ levels of recursion will make every variable integral, by \Cref{ins:newpd}.
     \item Third, the expected value of the (integral) pseudodistribution obtained at the end is roughly $O(\delta) \le O(\opt)$, also by \Cref{ins:newpd}.
 \end{enumerate}
 However, more work is needed, since \Cref{ins:newpd} only bounds the cost in expectation at each individual recursion level.

 \begin{algorithm}[!htbp]
    \caption{Conditions on a random pair of variables and thresholds the result.}
    \label{alg:simple}
    \begin{algorithmic}[1]
        \Require a \nae instance $\cI=(V, \cP)$ and a pseudodistribution $\mu$ over $V$.
        \State Let $V_\rU = \{v \in V\ : \ 0<\mu_v(0),\mu_v(1) <1  \}$ be the set of unfixed variables.
        \If{$V_\rU =\emptyset $}
        \State Return the assignment corresponding to $\mu$, i.e., return $\alpha \in \{0,1\}^V$ such that $\mu_v(\alpha_v)=1$.
        \EndIf
        \State Let $\delta = \val(\cI[V_\rU],\mu[V_\rU])$ be the fractional value of $\mu$ on the sub-instance induced by $V_\rU$.
        \State Sample a pair of variables $x,y$ and an assignment $(\beta_x,\beta_y)\sim\mu_{x,y}$.
        \State Let  $\tilde{\mu}$ be the pseudodistribution $\mu$ conditioned on the event that $x,y$ take value $\beta_x,\beta_y$.
        \For{$v \in V_\rU$}
            \If{$v$ is $O(\delta)$-fixed in $\tilde{\mu}$}
                \State Fix $v$ by moving all the mass of $\mu_v$ on the bit $b \in \{0,1\}$ such that $\mu_v(b) \ge 1-\tilde{O}(\delta)$.
            \EndIf
        \EndFor
        \State Recurse on the newly obtained pseudodistribution $\hat{\mu}$.
    \end{algorithmic}
\end{algorithm}

\paragraph{Obstacle: the need for a high probability guarantee.} 
In order to materialize the strategy of \Cref{alg:simple}, we need to control the deviations of $\val(\hat{\mu})$ from $\delta$ at every level of the recursion. More precisely, let $\mu^{(i)}$ be the pseudodistribution at the beginning of the $i$-th recursion level, and let $\delta_i$ the fractional objective value $\val(\mu^{(i)})$ of the pseudodistribution $\mu^{(i)}$. 

For the sake of exposition, here let us ignore the hidden constant in $O(\delta)$ and assume 
$\E[\val(\mu^{(i)})]$ $=\val(\mu^{(i-1)})$
(the ignored error comes from the clauses at least one of whose variables is completely fixed by the 
threshold step, so it can be controlled via the concept of the {\em aggregate value} defined in equation~\eqref{eq:aggregate} of \Cref{sec:roundingalgo}). With this assumption, 
at every level $i$ of the recursion we have
\begin{equation*}
    \ex_{\substack{\text{$(i-1)$-th level}}}\left[\val\left(\mu^{(i)}\right)\right] = \delta_{i-1} = \val\left(\mu^{(i-1)}\right) \, .
\end{equation*}
Applying this identity at every level, we get
\begin{align*}
    \ex_{\substack{\text{ algorithm}}}\left[\val\left(\mu^{(\text{\#levels})}\right)\right] & =
    \ex_{\substack{\text{$1$-st level}}}\left[ \, \dots \, \ex_{\substack{\text{$(\text{\#levels}-2)$-th level}}}\left[
    \ex_{\substack{\text{$(\text{\#levels}-1)$-th level}}}\left[\val\left(\mu^{(\text{\#levels})}\right)\right] \right] \, \dots \, \right]\\
    & =\delta_1 \\
    & \le \opt \, ,
\end{align*}
where the last inequality follows assuming that the pseudodistribution on which we first call the algorithm is optimal. Now, one can intuitively see that the deviation of the last  pseudodistribution $\mu^{(\text{\#levels})}$ from its expected objective value translates to the approximation ratio achieved by the algorithm. Therefore, we are interested in bounding such deviation. To do so, we have to peel off one expectation operator at a time in the above equation by means of Markov's inequality. At every level $i$, the guarantee would then be $\val(\mu^{(i)}) \le \lambda \cdot \val(\mu^{(i-1)})$ with probability $1-1/\lambda$. Since this gives
\begin{equation*}
    \val\left(\mu^{(\text{\#levels})}\right) \le \lambda^{\text{\#levels}} \opt \, ,
\end{equation*}
we need a $\lambda$ such that $\lambda^{\text{\#levels}} \le \tilde{O}(1)$. Recalling that $\text{\#levels} = \Omega(\log n)$, we must hence require $\lambda \le 1+1/\log n $. Unfortunately, Markov's inequality ensures $1+1/\log n$ multiplicative increase with only $p_{\text{val}}=O(1/\log n)$ success probability. If we call $p_{\text{fix}}=1/1000$ the  probability that $\hat{\mu}$ fixes $n/1000$ variables, we realize that
\begin{equation*}
    (1-p_{\text{val}})+(1-p_{\text{fix}}) \ge 1 \, .
\end{equation*}
We are therefore unable to conclude that there exists a choice of $x,y,\beta_x,\beta_y$ for which we both have that $\mu^{(i)}$ fixes $n/1000$ variables and $\mu^{(i)} \le (1+1/\log n) \cdot \mu^{(i-1)}$. For the left-hand side above to be less than one, we need $p_{\text{fix}} \ge 1-O(1/\log n)$.

\paragraph{Advantage: $\polylog(n)$-degree pseudodistributions.} Recalling that we allow our algorithm to run in $n^{\text{polylog}(n)}$ time, we can assume to have a  $\polylog(n)$-degree pseudodistribution at our disposal. Supported by this, the natural thing to do is modifying each level of the recursion as follows: condition on $O(\log \log n)$ random pairs - as opposed to one - and their respective assignments. While this should boost $p_{\text{fix}} \ge 1-O(1/\log n)$, we now have more conditioning steps that could deviate from the expectation. However, suppose we could condition on a random set of $O(\log \log n)$ pairs of variables whose joint local distribution is close (in total variation distance) to the product of their marginal distributions: then, we would both obtain the high probability guarantee and simultaneously have only one deviation from the objective value to control. A standard way to ensure that the joint distribution of $t$ (pairs of) variables is $\epsilon$-close to independent is by additionally conditioning on $\poly(2^t/\epsilon)$ variables~\cite{rt,yz14}. Since we want $t=O(\log \log n)$ and $\epsilon=1/\polylog(n)$, we can afford to do so with our $\polylog(n)$-degree pseudodistribution.
\\~\\
A formal version of this algorithm is presented and analyzed in \Cref{sec:roundingalgo}.
	\section{Notation}\label{sec:prelims}

\paragraph{\csps.} Let $k \ge 2$ be an integer and let $V$ be a set of $n$ variables. A \kcsp is a pair $\cI = (V, \cP)$ where $V$ is a set of variables over the Boolean alphabet $\{0,1\}$, and $\cP$ represents the constraints. Each constraint $P_S\in \cP$ is defined by a $k$-subset of variables $S \in \binom{V}{k}$ and a predicate $P_S:\{0,1\}^S \rightarrow \{0,1\}$, where $1$ means that the constraint is satisfied and $0$ otherwise. We will consider \emph{complete} \kcsps, so that there is a predicate $P_{S} \in \cP$ for every $S \in \binom{V}{k}$. Furthermore, each constraint  $P_{S}$ is unsatisfied for at least one assignment to the variables $S$. Then, for a global assignment $\alpha \in \{0,1\}^V$ to the variables, we let $$\val(\cI,\alpha) = \Pr_{S \sim \binom{V}{k}}[P_S(\alpha_S)=0] \, ,$$ where $S \sim \binom{V}{k}$ indicates that $S$ is sampled uniformly from the elements of $\binom{V}{k}$, and $\alpha_S$ represents the restriction of $\alpha$ to entries in $S$.

In the \emph{decision} version of a $k$-CSP instance $\cI$, the goal is to decide if there is an assignment $\alpha^*$ such that $\val(\cI, \alpha^*)~=~0$.
In the \emph{minimization} version, which we denote by $\textsc{Min-$k$-CSP}$, the goal is to find an assignment $\alpha^*$ which minimizes $\val(\cI, \alpha^*)$, so we define
$$\opt(\cI)= \min_{\alpha \in \{0,1\}^V} \val(\cI,\alpha)\, .$$
Sometimes we will write $\val(\alpha)$ as a shorthand for $\val(\mathcal{I}, \alpha)$ when $ \mathcal{I}$ is clear from the context. For a subset of variables $W \subseteq V$, we denote by $\cI[W]=(W,\cP')$ the sub-instance of $\cI$ induced by the variables $W$.

\paragraph{\nae.} A complete \nae instance is a complete \ncsp{3} $\cI=(V,\cP)$ where each constraint (also called a {\em clause}) 
$P_S \in \cP$ is a ``not all equal'' predicate on three literals, where each literal is either a variable or its negation. For our convenience, we define this formally as follows: for each $S \in \binom{V}{3}$ and each $u \in S$, we have a polarity bijection $P_S^u:\{0,1\} \rightarrow \{0,1\}$ determining the literal pattern for the variable $u$ in $S$, and the constraint $P_S$ is defined as
\begin{equation*}
    P_S(\alpha)=\begin{cases}
        1 \, , \quad & \text{if } \left|\left\{P_S^u(\alpha_u)\right\}_{u \in S}\right|\ge 2 \\
        0 \, , \quad & \text{otherwise}
    \end{cases}
\end{equation*}
for every $\alpha\in \{0,1\}^S$.

\paragraph{Sherali-Adams notation.} For $d \ge k$, we consider the degree-$d$ Sherali-Adams relaxation of $\opt(\cI)$, which can be written as
\begin{equation*}
	\begin{array}{ll@{}ll}
		\text{minimize } & \Pr_{S \sim \binom{V}{k}, \, \alpha \sim \mu_S} [P_S(\alpha)={0}] & \\
		\text{subject to } & \mu \in \cD(V,d)
	\end{array}
\end{equation*}
where $\cD(V,d)$ is the set of pseudodistributions of degree $d$ over the Boolean variables $V$, i.e. every element $\mu$ of $\cD(V,d)$ is indexed by subsets $S \subseteq V$ with $|S| \le d$ where each $\mu_S$ is a distribution over assignments $\alpha \in \{0,1\}^S$ to the variables in $S$, with the property that
\begin{equation*}
\forall S,T \subseteq V, \beta \in \{0,1\}^{S\cap T} \text{ with } |S\cup T| \le d\, , \quad 
    \Pr_{\alpha \sim \mu_S}[\alpha_{S\cap T}=\beta] = \Pr_{\alpha \sim \mu_T}[\alpha_{S\cap T}=\beta] \, .
\end{equation*}
Moreover, for any $\alpha \in \{0,1\}^S$, we let $\mu_S(\alpha) \in [0,1]$ denote that probability of sampling $\alpha$ from $\mu_S$.
We also let for convenience
\begin{equation*}
    \val(\cI,\mu) = \Pr_{S \sim \binom{V}{k}, \, \alpha \sim \mu_S} [P_S(\alpha)={0}]
\end{equation*}
be the fractional objective value of a pseudodistribution $\mu$. Additionally, for $W \subseteq V$, we denote by $\mu[W] \in \cD(W,d)$ the restriction of $\mu$ to variables in $W$. Again, we will use the shorthand $\val(\mu)$ when $\mathcal{I}$ is clear from the context.

\paragraph{Conditioning and fixing notation.} For $\mu \in \cD(V,d)$, $S \subseteq V$ with $|S| \le d$ and $\beta \in \supp(\mu_S)$, we denote by $\mu|_{S\gets \beta} \in \cD(V,d-|S|)$ the pseudodistribution obtained from $\mu$ by conditioning on the event that the variables in $S$ are assigned $\beta$. Formally, $\mu|_{S\gets \beta}$ is the pseudodistribution defined for each $T \subseteq V$ with $|T| \le d-|S|$ and $\alpha \in \{0,1\}^T$ as
\begin{equation*}
    \left(\mu|_{S\gets \beta}\right)_T(\alpha) = \begin{cases}
        \frac{\mu_{S \cup T}(\alpha \oplus \beta)}{\mu_S(\beta)}\, , \quad & \text{if } \alpha_{S \cap T} = \beta_{S \cap T}\\
        0 \, , \quad & \text{otherwise}
    \end{cases}
\end{equation*}
where $\alpha\oplus\beta \in \{0,1\}^{T\cup S}$ is the vector assigning $\alpha$ to $T$ and $\beta_{S\setminus T}$ to $S \setminus T$.
Furthermore, we denote by $\mu^{S\gets \beta} \in \cD(V,d)$ the pseudodistribution obtained by fixing the variables in $S$ to take the assignment $\beta$, which results from moving all the probability mass of $\mu_{\{v_i\}}$ to $\mu_{\{v_i\}}(b_i)$ for each $v_i\in S$ and the corresponding $b_i \in \beta$.
Formally, $\mu^{S\gets \beta}$ is the pseudodistribution defined for each $T \subseteq V$ with $|T| \le d$ and each $\alpha \in \{0,1\}^T$ as

\begin{equation*}
    \left(\mu^{S\gets \beta}\right)_T(\alpha) = \begin{cases}
        \sum_{\substack{\alpha' \in \{0,1\}^T:\\ \alpha'_{T \setminus S}=\alpha_{T \setminus S}}} \, \, {\mu_T(\alpha')} \, , \quad & \text{if } \alpha_{S \cap T} = \beta_{S \cap T}\\
        0 \, , \quad & \text{otherwise}
    \end{cases}
\end{equation*}

\paragraph{Variable vectors and variable sets.}
We will use the convention that for a vector $\vecc \in V^\ell$ written in boldface, the regular capital letter $C =\{ \vecc_i \}_{i \in [\ell]}$ will be the set formed by the union of the value taken by its coordinates. 
	\section{Polylogarithmic approximation for complete Min-NAE-3-SAT}
\label{sec:roundingalgo}

In this section, we present and analyze the rounding scheme  that we apply to the Sherali-Adams solution. Hereafter, we fix an instance $\cI$ of \nae with variable set $V$, where $|V| = n$. In order to describe the rounding algorithm, we should first introduce the necessary notions.

\paragraph{Fixed variables, ruling value, ruling assignment.}  Given a subset $W\subseteq V$, a pseudodistribution $\rho \in \cD(W,d)$, and a threshold
$\xi\in[0,1/2)$, a variable $v$ is called \textit{$(\rho,\xi)$-fixed} if for some bit $b\in\{0,1\}$ one has that the probability $\rho_{\{v\}}(b)$ is at most $\xi$. Moreover, the corresponding bit $1-b$ having probability at least $1-\xi$ is called the \emph{ruling value} of $v$. Also, we 
define 
$F_W(\rho,\xi)$ to be the the set 
of all the $(\rho,\xi)$-fixed variables in $W$. Moreover, we define $\omega^*_W(\nu,\xi) \in \{0,1\}^{F_W(\nu,\xi)}$ to be the corresponding \textit{ruling assignment}, where the entry for $v \in F_W(\nu,\xi)$ equals the ruling value of $v$ in~$\rho$.

\paragraph{LP value of constraint classes.} Given a set $V_\rU \subseteq V$ of variables (which can be thought of as the set of unfixed variables), we classify the constraints of $\cI=(V,\cP)$ into four groups: for $i \in \{0,1,2,3\}$, we let $\cI_i\{V_\rU\}=(V,\cP_i\{V_\rU\})$ be the sub-instance with constraints $\cP_i\{V_\rU\}=\{P_S \in \cP: |S\cap V_\rU|=i\}$, i.e. $\cI_i\{V_\rU\}$ can be thought of as the sub-instance consisting of constraints with exactly $i$ unfixed variables. Then, given a pseudodistribution $\rho \in \mathcal{D}(V,d)$, we define for each $i \in \{0,1,2,3\}$ the \textit{LP value of the constraints in $\cI_i\{V_\rU\}$} as
\begin{equation*}
    \lp^{V_\rU}_i(\rho) = \sum_{P_S \in \cP_i\{V_\rU\}} \Pr_{\alpha \sim \rho_S}[P_S(\alpha)=0] \, ,
\end{equation*}
\noindent
so $\lp^{V_\rU}_i(\rho)$ can be thought of as the contribution of the constraints with exactly $i$ unfixed variables to the Sherali-Adams objective.

\paragraph{Aggregate LP value.} For analyzing our algorithm, we treat the quantities $\lp^{V_\rU}_0(\rho)$, $\lp^{V_\rU}_1(\rho)$, $\lp^{V_\rU}_2(\rho)$ and $\lp^{V_\rU}_3(\rho)$ separately. However, for the sake of certain probability bounds, it turns out to be useful to combine these into a single weighted sum:
given a real parameter $\tau>0$, a set $V_\rU \subseteq V$, and a pseudodistribution $\rho \in \mathcal{D}(V,d)$, we define the \emph{aggregate value} of $\rho$ as the weighted sum

\begin{equation}
A^{V_\rU}_\tau(\rho) = \tau \cdot \log^3 n \cdot \lp^{V_\rU}_3(\rho) + 
\log^2 n \cdot \lp_2^{V_\rU}(\rho) + 
\log n \cdot \lp^{V_\rU}_1(\rho) + 
\lp^{V_\rU}_0(\rho) \, . \label{eq:aggregate}
\end{equation}

\noindent
We might drop the subscript $\tau$ when it is clear from context (and this should not create any confusion as we will use one fixed value of $\tau$ for the whole algorithm and analysis).

\paragraph{Thresholds with bounded increase.} As per the definition above, the notion of fixed variables depends on a threshold $\xi$. In our algorithm and analysis, we will want to use a threshold that ensures a certain bound on the additive increase on the aggregate value. More precisely, given $\tau>0$, $V_\rU \subseteq V$, $\rho \in \mathcal{D}(V,d)$, and a parameter $\delta \in [0,1]$, we define the \textit{set of thresholds with bounded increase} to be the set of all $\theta \in [0,1/2)$ such that
\begin{equation*}
    A^{V_{\rU}\setminus F_{V_\rU}(\rho,\theta)}(\rho^{F_{V_{\rU}}(\rho,\theta) \gets \omega^*_{V_{\rU}}(\rho,\theta)})- A^{V_{\rU}}(\rho)\leq\frac{6}{\log n}A^{{V_{\rU}}}(\rho)+ 12\tau \delta\log^2 n \binom{|V_{\rU}|}{3} \, ,
\end{equation*}
and we denote this set as $\Theta_{\tau}^{V_\rU}(\rho,\delta)$. As for $A^{V_\rU}_\tau(\rho)$, we drop the subscript $\tau$ when it is clear from context.

\paragraph{$2$-SAT instances.} Given $V_\rU \subseteq V$ and a partial assignment $\alpha \in \{0,1/2,1\}^V$ such that $\alpha_v \in \{0,1\}$ for all $v \in V\setminus V_\rU$, the \mintwo instance induced by $V_\rU$ and $\alpha$ is denoted as $\cI_2\langle V_\rU,\alpha \rangle $ and is defined as follows: we let $\cI_2\langle V_\rU,\alpha \rangle = (V_\rU,\cP_2\langle V_\rU, \alpha\rangle)$, where $\cP_2\langle V_\rU, \alpha\rangle$ is a multiset of \twosat constraints that associates to each $P_S \in \cP_1\{V_\rU\} \cup \cP_2\{V_\rU\}$ exactly one constraint $P'_S:\{0,1\}^{\{x,y\}}\rightarrow\{0,1\}$ defined as $P'_S(\alpha')=P_S(\alpha'\oplus\alpha_{S\setminus V_\rU})$ for all $\alpha' \in \{0,1\}^{S \cap V_\rU}$, where $\alpha'\oplus\alpha_{S\setminus V_\rU} \in \{0,1\}^S$ is the vector assigning $\alpha'$ to $S \cap V_\rU$ and $\alpha_{S \setminus V_\rU}$ to $S \setminus V_\rU$. We remark that since all $P_S \in \cP$ are $\nae$, every $P'_S \in \cP_2\langle V_\rU, \alpha\rangle$ is a \twosat or \nsat{1} clause.

\subsection{Algorithm outline}
\Cref{alg:roundingalgo} gives our rounding scheme. Throughout its execution, the input instance $\cI=(V,\cP)$ remains unchanged, as do the parameters $r,t,\epsilon,\tau$, which only depend on $n=|V|$ and on the fixed universal constant $K = 10^{20}$.

\begin{algorithm}[ht]
    \caption{$\roundalgo(\cI,d,\mu,\alpha,\ell)$}
    \label{alg:roundingalgo}
    \begin{algorithmic}[1]
        \Params $K = 10^{20},\, r = (\log n)^{100K}, \, t = K \log \log n, \,\epsilon = {1}/({10\log n}), \, \tau=K \log^{2} n$.
        \Require $\cI=(V, \cP), \, d \in \N,  \,  \mu \in \cD(V,d) , \, \alpha \in \{0,1/2,1\}^V, \, \ell \in \N$.
        \State Let $V_\rF = \{v \in V\ : \alpha_v \in \{0,1\} \}$ and $V_\rU = V\setminus V_\rF$.
        \If{$|V_\rU|< K^2(\log n)^{100}$} \Comment{few unfixed variables left}
            \State \Return $\alpha^* \in \arg \min_{\alpha' \in \{0,1\}^V:\, \alpha'_{V_\rF}=\alpha_{V_\rF}} \val(\cI,\alpha')$. \label{line:bruteforce} \Comment{brute force}
        \EndIf
        \State Let $\delta = \val(\cI[V_\rU],\mu[V_\rU])$.\label{step:delta}
        \If{$\delta > 1/(10\tau) $} \Comment{constraints induced by unfixed variables are not important}
        \State Let $ \alpha' = \textsc{kprt}{}(\cI_2\langle V_\rU,\alpha \rangle,\mu[V_\rU]) \in \{0,1\}^{V_\rU}$. \Comment{use the  \mintwo rounding of \cite{klein1997approximation}}
        \State Let $\alpha^* \in \{0,1\}^V$ be defined as $\alpha^*_{V_\rF}=\alpha_{V_\rF}$ and $\alpha^*_{V_\rU}=\alpha'$.
        \State \Return $\alpha^*$ \label{line:2satinstance}
        \EndIf
        \State Let $\tilde{\mu} = \mu$. \Comment{initialization of conditioned pseudodistribution}
        \For{$r' =0, \dots, r$}\label{line:conditioningst}
        \For{ $\vecc \in V_\rU^{r' + 2t}$,  $\gamma \in  \supp\left(\mu_C\right)$}
            \If{$|F_{V_\rU}(\mu|_{C \gets \gamma }, \, \tau \delta)| \ge \frac{|V_\rU|}{100}$ and $A^{V_\rU}(\mu|_{C \gets \gamma}) \leq (1 + \epsilon) A^{V_\rU}(\mu)$}
            \State Update $\tilde{\mu} = \mu_{C \gets \gamma}$. \Comment{conditioning step}
            \EndIf
        \EndFor
        \EndFor\label{line:conditioningend}
        \State Let $\theta \in \Theta^{V_\rU}(\tilde{\mu},\delta)$ if $ \Theta^{V_\rU}(\tilde{\mu},\delta) \neq \emptyset$, otherwise let $\theta=0$. \label{line:thresholdingst}
        \State Let $F = F_{V_\rU}(\tilde{\mu},\theta) $ and  $\omega = \omega_{V_\rU}^*(\tilde{\mu},\theta)$. 
        \State Let $\hat{\mu} = \tilde{\mu}^{F \gets \omega}$.  \Comment{thresholding step} \label{step:threshround}        
        \State For each $v \in F$, update $\alpha_v = \omega_v$. \label{line:thresholdingend}
        \State \Return $\roundalgo(\cI,d-r-2t,\hat{\mu},{\alpha},\ell+1)$        
    \end{algorithmic}
\end{algorithm}

The algorithm maintains a Sherali-Adams solution $\mu$ and a partial assignment $\alpha$. At any intermediate point, there are some variables $V_\rF$ which are completely fixed: these are variables $v$ for which $\alpha_v$ has already been determined, and $\alpha_v \in \{0,1\}$. The remaining variables $V_\rU$ are the unfixed variables: these are the variables $v$ for which $\alpha_v = \frac{1}{2}$, i.e. their assignment is not yet decided. The algorithm is called as \smash{$\textsc{round-pd}(\cI, d^*, \mu^*, \alpha^{(0)},0)$} where the Sherali-Adams solution $\mu^*$ is obtained by solving a $d^*$-degree Sherali-Adams LP for \minnaethree where $d^* = \polylog(n)$ will be determined in the analysis below.

On a high level, our algorithm proceeds in stages. At each stage, if the number of unfixed variables is small (specifically $|V_{\rU}| < K^2 (\log n)^{100}$), we simply solve the instance by brute force (line~\ref{line:bruteforce}). Otherwise, if the value of the instance induced on $V_{\rU}$ is small, $\val(\cI[V_{\rU}], \mu[V_{\rU}]) > {1}/({10\tau})$, we show that the instance is essentially equivalent to $2$-SAT, and solve it using a standard LP-rounding for $2$-SAT (line~\ref{line:2satinstance}, see~\Cref{lem:highlyunsatok}). The interesting case is when neither of these two cases happens: we proceed via a conditioning step and a thresholding step, followed by a recursive call. In the conditioning step (lines \ref{line:conditioningst}-\ref{line:conditioningend}), the algorithm identifies a small set of variables $C$ and an assignment to these variables $\gamma$, and conditions on the event that $C$ gets assigned $\gamma$, to obtain a new pseudodistribution $\tilde{\mu}$. Ideally, this pseudodistribution $\tilde{\mu}$ is such that  many of the variables in $V_\rU$ are $(\tilde{\mu},\tau \delta)$-fixed. The thresholding step (lines~\ref{line:thresholdingst}-\ref{line:thresholdingend}) then rounds these variables to their ruling assignment, resulting in a new pseudodistribution $\hat{\mu}$. After updating $\alpha$ for each newly rounded variable to its already integral value in $\hat{\mu}$, the algorithm continues by making a recursive call.

\subsection{Analysis}
In each stage, after the thresholding step is completed, we will show that the number of unfixed variables (the variables with $\alpha_v = \frac{1}{2}$) decreases by a constant factor. Hence the number of stages is at most $O(\log n)$. Thus if we can show that the conditioning and thresholding steps do not increase the value of the Sherali-Adams solution $\mu$ by more than a $1 + O(1/\log n)$ factor, we would obtain an integral assignment to the variables with a constant-approximation ratio at the end of $O(\log |V|)$ stages. While it is unclear if such a scheme is possible, we track the growth of the aggregate value $A_\tau^{V_\rU}(\mu) = \Theta(\polylog(n)\val(\mu))$, and show that $A_\tau^{V_\rU}(\mu)$ does not increase by more than a $1+O({1}/{\log n})$ factor after each stage. Hence, $A_\tau^{V_\rU}(\mu)$ always remains a constant approximation to $A_\tau^{V}(\mu^*$), where $\mu^*$ is the initial Sherali-Adams solution. By the definition of the aggregate value, a constant-approximation to $A_\tau^{V_\rU}(\mu)$ must imply a $\polylog(n)$-approximation, giving us the desired result. The rest of this section is devoted to formalizing and proving the above statements.

We begin by analyzing the conditioning step. Informally, we show that if we condition on a small random set of variables and a suitable assignment sampled from their local distribution, then many variables will be fixed with good probability in the obtained conditional pseudodistribution. More precisely, we have the following lemma, whose proof is deferred to \Cref{sec:condtofix}. 
\begin{restatable}[Conditioning-to-Fixing]{lemma}{condtofix}
\label{lem:condtofix}
    Let $d,r,t$ be positive integers,  let $W \subseteq V$, let $\rho \in \cD(W,d)$, and let $\tau \ge 1$. If $\val(\cI[W],\rho) \le 1/(3\tau)$, $|W| \ge 3$, $d  \ge r+2t+3$, $r \ge 10^3 3^{2t}$, $t \ge 10^6$, then there exists $r' \in \{0,\dots,r\}$ such that
    \begin{equation*}
    \Pr_{
        \substack{
            \cC \sim W^{r'+2t}, \, \gamma \sim \rho_C
        }
    }
    \left[
        \left|F_W(\rho|_{C \gets \gamma }, \, \tau \cdot \val(\cI[W],\rho))\right| \ge \frac{|W|}{100}
    \right]
    \ge
    1-\left(\frac{10^7}{\sqrt{\tau}}+5 \frac{3^{2t/3}}{ r^{1/3}}+\frac{5t^2}{|W|}+3 \eul^{-t/10^6}\right)
    \, .
    \end{equation*}
\end{restatable}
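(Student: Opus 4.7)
Let $\delta = \val(\cI[W],\rho)$, which is bounded by $1/(3\tau)$ by hypothesis. The plan unfolds in three logical phases, mirroring the intuition of \Cref{sec:overview}: a structural analysis exploiting the completeness of $\cI[W]$ and the \nae predicate, a conversion into a constant-probability fixing event via conditioning on a single random pair, and an amplification to high probability using $2t$ approximately independent pair-conditionings, enabled by a preliminary $r'$-round Raghavendra--Tan/Yoshida--Zhou style conditioning.

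First I would formalize \Cref{ins:unsatabs}: by averaging and completeness of $\cI[W]$, at least half of the triples $S \in \binom{W}{3}$ satisfy $\Pr_{\alpha \sim \rho_S}[P_S(\alpha)=0] \le 2\delta$. For each such ``good'' triple, the \nae case analysis summarized in \Cref{tab:fixing} identifies two variables in $S$ together with an assignment to them of probability at least $(1-2\delta)/6 \ge 1/7$ in $\rho_S$ such that conditioning on that pair-assignment forces the third variable to be $14\delta$-fixed (equivalently $\tau\delta$-fixed once $\tau \ge 14$). A double-counting argument over good triples then yields a formal version of \Cref{ins:constfix}: for a uniformly random pair $(x,y) \sim W^2$ and a sample $(\beta_x,\beta_y) \sim \rho_{\{x,y\}}$, with at least some absolute constant probability $p_{\mathrm{fix}}$, the conditioned pseudodistribution $\rho|_{\{x,y\} \gets (\beta_x,\beta_y)}$ has $\Omega(|W|)$ variables that become $\tau\delta$-fixed. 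The $10^7/\sqrt{\tau}$ error term will absorb the second-moment slack in this Chebyshev-type argument, and using the larger threshold $\tau\delta$ in place of $14\delta$ is what enables the error to decay like $1/\sqrt{\tau}$ rather than like $1/\delta$.

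To amplify $p_{\mathrm{fix}}$ into a $1-o(1)$ success probability, I would invoke a conditioning lemma in the spirit of Raghavendra--Tan: since the total entropy of $\rho$ is $O(|W|)$, by pigeonhole there exists $r' \in \{0,\dots,r\}$ such that, after conditioning on a uniformly random $r'$-tuple, the joint local distribution of $t$ further pairs of variables is $O(3^{2t/3}/r^{1/3})$-close in total variation to the product of the pairwise marginals. Under this approximate independence, the $t$ block-pair conditionings can be coupled to $t$ genuinely independent attempts at the Phase-2 event, giving a failure probability bounded by $(1-p_{\mathrm{fix}})^t$ plus a total variation loss $O(3^{2t/3}/r^{1/3})$, accounting respectively for the $\eul^{-t/10^6}$ and $3^{2t/3}/r^{1/3}$ terms in the conclusion. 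The $5t^2/|W|$ term accounts for the birthday-style event that the $2t$ sampled variables collide across pairs, which would degenerate the pair-conditioning arguments.

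The main obstacle I foresee is that the set of $\tau\delta$-fixed variables is not monotone under further conditioning: after one pair-conditioning succeeds, the subsequent ones inside the block of $2t$ might in principle unfix some witnessed variables, since we are analyzing the \emph{final} pseudodistribution $\rho|_{C \gets \gamma}$ rather than any intermediate one. My plan is to handle this by leveraging the same approximate-independence coupling: the final pseudodistribution should be $O(3^{2t/3}/r^{1/3})$-close in total variation to the one obtained by conditioning only on the preliminary $r'$-tuple together with the single pair that witnessed success, so the $\Omega(|W|)$ fixing guarantee survives up to the same additive error. Setting up this coupling cleanly, so that TV-closeness on the joint local distribution of the pair-assignments translates to TV-closeness of the ``many fixed variables'' event itself, is the step I expect will require the most care.
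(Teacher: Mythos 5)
Your high-level architecture matches the paper's: halve the triples by averaging over the complete instance, use the \nae{} case analysis to extract for each good triple a pair-assignment of mass $\ge 1/7$ that forces the third variable, count to get $\Omega(|W|)$ ``fixable'' variables, and then amplify by conditioning on $t$ pairs made approximately independent by a preliminary $r'$-round Yoshida--Zhou conditioning (\Cref{lem:yoshidazhou}, \Cref{lem:tvd}). However, the step you yourself flag as the delicate one --- transferring the fixing guarantee from a single witnessing pair to the final pseudodistribution $\rho|_{C\gets\gamma}$ --- is handled by an argument that does not work. You claim that $\rho|_{C\gets\gamma}$ (conditioning on all $2t$ variables) is $O(3^{2t/3}/r^{1/3})$-close in total variation to the pseudodistribution conditioned only on the $r'$-tuple and the single successful pair. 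The near-independence guarantee says only that the \emph{local distribution of the tuple $T$ itself} is close to the product of its marginals; it says nothing about how conditioning on $T$ perturbs the marginal of a \emph{third} variable $u$. Conditioning on the remaining $t-1$ pairs is conditioning on an event whose probability can be far smaller than the threshold $\tau\delta$ (indeed $\delta$ may be as small as $1/n^3$ while each further assignment has mass only about $2^{-2t}=1/\polylog(n)$), so the minority mass of $u$ can be inflated well past $\tau\delta$, and no TV bound on $\rho_T$ prevents this. The ``$(1-p_{\mathrm{fix}})^t$ coupling to independent attempts'' inherits the same problem: success of attempt $i$ is a statement about an intermediate pseudodistribution, not about the final one.

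The paper's proof (\Cref{lem:fixingpair} and \Cref{lem:fixablegetsnearlyfixed}) avoids intermediate pseudodistributions entirely, and this is the idea your plan is missing. The $t$ pairs are used so that, for each fixable $u$, with probability $1-\eul^{-\Omega(t)}$ the tuple contains $\Omega(t)$ pairs that are ``fixing'' for $u$, and (using near-independence only to run a Chernoff bound on the sampled assignment $\beta\sim\rho_T$) with probability $1-\eul^{-\Omega(t)}-O(3^{2t/3}/r^{1/3})$ at least $\Omega(t)$ of those pairs actually take equal literal values under $\beta$. Then, for \emph{any} such index $i$, one bounds the min-marginal of $u$ under the \emph{full} conditioning directly:
\[
\min_{b\in\{0,1\}}\ \Pr_{\alpha \sim (\rho|_{T\gets\beta})_{\{u\}}}[\alpha = b] \ \le\ \Pr_{\sigma \sim (\rho|_{T\gets\beta})_{\{u,v_i,w_i\}}}\left[P_{\{u,v_i,w_i\}}(\sigma)=0\right],
\]
and the exact conditional-expectation identity
\[
\ex_{\beta\sim\rho_T}\left[\Pr_{\sigma \sim (\rho|_{T\gets\beta})_{\{u,v_i,w_i\}}}\left[P_{\{u,v_i,w_i\}}(\sigma)=0\right]\right] \ =\ \Pr_{\sigma \sim \rho_{\{u,v_i,w_i\}}}\left[P_{\{u,v_i,w_i\}}(\sigma)=0\right]
\]
holds for the full conditioning at once. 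Averaging the left-hand side over the $\Omega(t)$ good indices and applying Markov's inequality yields that $u$ is $\tau\gamma_\unsat$-fixed in $\rho|_{T\gets\beta}$ except with probability $O(1/\tau')$; the $10^7/\sqrt\tau$ term then arises from splitting $\tau$ as $\tau'\cdot 2\tau'$ between this Markov step and the one controlling $\val(\cI[W],\rho|_{S\gets\sigma})$, not from a second-moment bound. No monotonicity or coupling between intermediate and final conditionings is ever needed. As written, your proposal has a genuine gap at this step and would not go through without replacing the coupling argument by something like the paper's direct bound on the fully conditioned marginal.
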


\noindent
Next, we analyze the change in the aggregate value from the thresholding step. Informally, we bound the additive increase of the aggregate value after line~\ref{line:thresholdingend} compared to the aggregate value before line~\ref{line:thresholdingst}. More precisely, we have the following lemma, whose proof is deferred to \Cref{sec:threshround}.

\begin{restatable}[Thresholding]{lemma}{threshround}
\label{lemma:threshround}
    Let $d$ be  a positive integer, let $\rho \in \cD(V,d)$, let $V_\rU \subseteq V$ such that for all $v \in V\setminus V_\rU$ there is $b \in \{0,1\}$ satisfying $\rho_{\{v\}}(b)=1$, and let $\tau \ge 1$ and $\delta \in [0,\frac{1}{10\tau}]$. Then, there exists  $\theta\in[\tau\delta, 2\tau\delta]$ such that $\theta 
    \in \Theta^{V_\rU}_\tau(\rho,\delta)$. Moreover, such a value can be found in time $\poly(|V_U|)$.
\end{restatable}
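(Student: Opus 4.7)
The plan is to find $\theta$ via an averaging argument: sample $\theta$ uniformly from $[\tau\delta, 2\tau\delta]$, show that the expected increase in aggregate value lies below the required bound, and conclude existence. For the algorithmic guarantee, I observe that the change $\Delta(\theta) := A^{V_\rU \setminus F_{V_\rU}(\rho,\theta)}_\tau(\rho^{F_{V_\rU}(\rho,\theta)\gets \omega^*_{V_\rU}(\rho,\theta)}) - A^{V_\rU}_\tau(\rho)$ is a piecewise-constant function of $\theta$, with breakpoints only at $\theta = q_v := \min_{b \in \{0,1\}} \rho_{\{v\}}(b)$ for $v \in V_\rU$. Hence at most $|V_\rU|+1$ candidate values can be enumerated and checked in $\poly(|V_\rU|)$ time.

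\textbf{Bounding $\Delta(\theta)$.} I would first decompose $\Delta(\theta)$ constraint-by-constraint. Each constraint $P_S$ is classified by its type $(a, b, c) := (|S \cap (V_\rU \setminus F_\theta)|, |S \cap F_\theta|, |S \cap V_\rF|)$, with $a + b + c = 3$; only those with $b \ge 1$ contribute. Writing $w_0=1, w_1=\log n, w_2=\log^2 n, w_3=\tau \log^3 n$ for the aggregate-value weights and $p_S, \hat{p}_S$ for the violation probabilities before and after thresholding, the total-variation bound $\hat{p}_S - p_S \le \sum_{v \in S \cap F_\theta} q_v \le b\theta$, together with the monotonicity $w_a \le w_{a+b}$ of the weights, yields the per-constraint change bound $w_a \cdot b\theta - (w_{a+b} - w_a) p_S$, where the second summand (nonnegative) is a ``discount'' from the weight drop. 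I then handle each type separately: (i) type-$(0, b, c)$ constraints contribute nonpositively, since $\hat{p}_S \in \{0, 1\}$ deterministically and the weight drop $w_b - w_0 \ge \log n - 1$ dominates when $\hat{p}_S = 1$ (in which case the TV bound forces $p_S \ge 1 - b\theta \ge 3/5$); (ii) types $(2,1,0), (1,2,0), (0,3,0)$ (fully within $V_\rU$) together contribute at most $\theta \cdot (w_2 + 2w_1 + 3w_0) \binom{|V_\rU|}{3} \le 12 \tau\delta \log^2 n \binom{|V_\rU|}{3}$, matching the additive term; (iii) type-$(1, 1, 1)$ constraints, whose count $N_{1, 1, 1} = |V_\rU \setminus F_\theta| \cdot |F_\theta| \cdot |V_\rF|$ carries the factor $|V_\rF|$, are handled by combining the cost $\log n \cdot \theta \cdot N_{1,1,1}$ with the discount $(w_2 - w_1)\sum_{(1,1,1)} p_S \le (\log^2 n - \log n)\lp_2^{V_\rU}(\rho)$ and the multiplicative slack $\frac{6}{\log n} A^{V_\rU}_\tau(\rho) \ge 6 \log n \cdot \lp_2^{V_\rU}(\rho)$, after averaging over $\theta$.

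\textbf{Main obstacle.} The delicate step is the $(1, 1, 1)$ case: the cost involves $|V_\rF|$, which may be much larger than $|V_\rU|$, so the naive per-type bound does not fit into the additive slack alone. Its resolution requires the averaging of $\theta$ over $[\tau\delta, 2\tau\delta]$ together with a careful balance between the cost, the weight-drop discount, and the $\frac{6}{\log n} A^{V_\rU}_\tau(\rho)$ slack; matching these three quantities under worst-case configurations of $\rho$ is the technical heart of the proof.
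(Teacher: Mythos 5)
Your overall architecture matches the paper's: classify each constraint by how many of its variables are thresholded, bound the change in aggregate value class by class, randomize $\theta$ over $[\tau\delta,2\tau\delta]$ for the one problematic class, and derandomize by observing that the breakpoints of $\theta\mapsto\Delta(\theta)$ are exactly the biases $q_v$ for $v\in V_\rU$ (this is \Cref{lem:derand}). Your cases (i) and (ii) are correct and correspond to \Cref{lem:10}, \Cref{lem:11} and \Cref{lem:3round}; your case (i) is in fact a slightly cleaner packaging than the paper's, which bounds the new contribution by $\tfrac{1}{1-b\theta}$ times the old one rather than noting that the weight drop makes the net change nonpositive.

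The gap is case (iii), the type-$(1,1,1)$ constraints (the paper's $\Delta^\theta_{2,1}$), which you explicitly leave unresolved — and the three quantities you propose to balance (cost $\log n\cdot\theta\cdot N_{1,1,1}$, discount $(w_2-w_1)\sum p_S$, multiplicative slack $\tfrac{6}{\log n}A$) genuinely do not suffice. Consider a $2$-constraint $P_S$ with unfixed variables $x^S,y^S$ (biases $\xi_{x^S}\le\xi_{y^S}$); it lands in type $(1,1,1)$ precisely when $\theta\in[\xi_{x^S},\xi_{y^S})$, so after averaging its expected cost is governed by the interval length $\xi_{y^S}-\xi_{x^S}$. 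If every such constraint had $p_S=0$ but $\xi_{x^S}=\tau\delta$, $\xi_{y^S}=2\tau\delta$, your discount and your slack $6\log n\cdot\lp_2^{V_\rU}(\rho)$ would both vanish while the cost would be $\Theta(\tau\delta\log n)\cdot|V_\rF|\binom{|V_\rU|}{2}$, which exceeds the additive budget $12\tau\delta\log^2 n\binom{|V_\rU|}{3}$ once $|V_\rF|\gtrsim \log n\cdot|V_\rU|$ (exactly the regime of the later recursion stages). What rules this configuration out is an NAE-specific structural fact that your sketch is missing, and which is the actual heart of \Cref{lem:20round}: either the ruling values of $x^S$ and of the already-fixed variable $z^S$ satisfy the clause for both values of $y^S$ (so the new contribution is $0$), or there is a unique bad bit $q_S$ for $y^S$ and then
\begin{equation*}
\xi_{y^S}\;\le\;\Pr_{\alpha\sim\rho_S}[\alpha_{y^S}=q_S]\;\le\;\xi_{x^S}+\Pr_{\alpha\sim\rho_S}[P_S(\alpha)=0]\,,
\end{equation*}
i.e., $\xi_{y^S}-\xi_{x^S}\le p_S$. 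This converts the expected cost into $O(\log n)\cdot\lp_2^{V_\rU}(\rho)$, which the multiplicative slack then absorbs since $\lp_2^{V_\rU}$ carries weight $\log^2 n$ in $A^{V_\rU}_\tau(\rho)$. Without this inequality (or a substitute for it), the averaging-plus-discount scheme you describe does not close.
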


\noindent
The idea is to combine the two lemmas above to ensure that when we reach the base of the recursion we have a pseudodistribution $\mu$ whose aggregate value is within a constant factor of the original one.

\begin{lemma}
\label{lem:returnpoints}
    Let  $c = 200K^2$, let $\mu^* \in \cD(V,K^2\log^c n)$, let $\alpha^{(0)} \in \{0,1/2,1\}^V$ with $\alpha_v = 1/2$ for all $v \in V$. Then, running \Cref{alg:roundingalgo} as $\roundalgo(K^2\log^{c} n,\cI,\mu^*,\alpha^{(0)},0)$ must reach line~\ref{line:bruteforce} or~\ref{line:2satinstance} with integers $d\ge 3, \ell \le 100\log n$, a pseudodistribution $\mu \in \cD(V,d)$, and a set $V_{\rU} \subseteq V$ such that
    \begin{equation*}
         A^{{V_\rU}}(\mu) \leq \eul^{5000} A^{V}(\mu^*) = O(1)A^{V}(\mu^*) \, .
    \end{equation*}
\end{lemma}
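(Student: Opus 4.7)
I would prove by induction on the recursion depth $\ell$ that at the start of each recursive call the state satisfies the three invariants (i)~$|V_\rU| \leq n (99/100)^\ell$, (ii)~$\mu \in \cD(V, K^2 \log^c n - \ell(r+2t))$, and (iii)~$A^{V_\rU}(\mu) \leq (1+ O(1/\log n))^\ell A^V(\mu^*)$. These imply everything at termination: (i) forces the first base case to trigger within $\ell \leq 100 \log n$ levels once $|V_\rU|<K^2\log^{100} n$; (ii) combined with $c=200K^2$ keeps the degree well above $3$ throughout; and (iii) evaluated at $\ell\le 100\log n$ yields the desired $e^{O(1)} A^V(\mu^*) \leq e^{5000} A^V(\mu^*)$ bound. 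The base case $\ell = 0$ is vacuous since $V_\rU = V$ and $\mu = \mu^*$.

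The crux is the inductive step when the algorithm skips both base cases (so $|V_\rU| \geq K^2 \log^{100} n$ and $\delta \leq 1/(10\tau)$). First I check that the conditioning loop finds some qualifying pair $(\cC, \gamma)$. \Cref{lem:condtofix} applied with $W=V_\rU$, $\rho=\mu$, and $\tau = K\log^2 n$ (with hypotheses verified from the inductive invariants and the parameter choices) yields some $r' \leq r$ for which a random $(\cC,\gamma)\sim V_\rU^{r'+2t}\times \mu_C$ produces a fixing set of size at least $|V_\rU|/100$ with probability $1 - \eta$, where plugging the parameters into the error expression gives $\eta = O(1/(\sqrt{K}\log n))$. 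Since $\E_\gamma[A^{V_\rU}(\mu|_{C\gets\gamma})] = A^{V_\rU}(\mu)$ by total probability, Markov's inequality bounds the probability that the aggregate check fails by $1/(1+\epsilon)$, so it passes with probability at least $\epsilon/(1+\epsilon) \geq 1/(20\log n)$ for $\epsilon = 1/(10\log n)$. Since $K=10^{20}$ makes $\eta \ll 1/(20\log n)$, a union bound gives positive joint probability, and some qualifying pair must exist; the algorithm's for-loop then sets $\tilde{\mu}$ to such a $\mu|_{C\gets\gamma}$.

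Second, I invoke \Cref{lemma:threshround} on $\tilde{\mu}$ with the same $\delta$ (a valid input as $\delta \in [0,1/(10\tau)]$), which provides $\theta \in [\tau\delta,2\tau\delta]\cap \Theta^{V_\rU}(\tilde{\mu},\delta)$. Monotonicity $F \supseteq F_{V_\rU}(\tilde{\mu},\tau\delta)$ immediately gives invariant (i) at depth $\ell+1$, while invariant (ii) follows from the $(r+2t)$-degree drop inherent to conditioning. For invariant (iii), the definition of $\Theta^{V_\rU}$ and the aggregate check combine to give
\begin{align*}
A^{V_\rU\setminus F}(\hat{\mu})
&\leq \Bigl(1+\tfrac{6}{\log n}\Bigr) A^{V_\rU}(\tilde{\mu}) + 12\tau\delta\log^2 n \binom{|V_\rU|}{3}\\
&\leq \Bigl(1+\tfrac{6}{\log n}\Bigr)(1+\epsilon)A^{V_\rU}(\mu) + \tfrac{12}{\log n}A^{V_\rU}(\mu) \leq \Bigl(1+O\bigl(\tfrac{1}{\log n}\bigr)\Bigr)A^{V_\rU}(\mu),
\end{align*}
where the additive term is rewritten using $\delta\binom{|V_\rU|}{3} = \lp^{V_\rU}_3(\mu) \leq A^{V_\rU}(\mu)/(\tau\log^3 n)$, closing the induction.

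The main obstacle is the union bound in the conditioning step: Markov's inequality only yields an $\Omega(1/\log n)$ success probability for the aggregate check, which must beat \Cref{lem:condtofix}'s failure probability. This forces the triple coupling $\tau = K\log^2 n$ (making $1/\sqrt{\tau}$ negligible versus $\epsilon$), the polylogarithmic choice of $r$ and $t$ (making $3^{2t/3}/r^{1/3}$ and $\eul^{-t/10^6}$ negligible), and the $\tau\log^3 n,\log^2 n,\log n,1$ weighting of the aggregate value (the precise device that converts \Cref{lemma:threshround}'s $12\tau\delta\log^2 n\binom{|V_\rU|}{3}$ additive slack into a $12/\log n$ multiplicative error). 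Once these parameters are in place, the remainder is bookkeeping.
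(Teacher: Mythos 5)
Your proposal is correct and follows essentially the same route as the paper: apply \Cref{lem:condtofix} to guarantee the fixing condition with failure probability $o(1/\log n)$, use Markov's inequality on the conserved expectation of the aggregate value to get the $(1+\epsilon)$ check with probability $\Omega(1/\log n)$, union bound to find a qualifying $(\cC,\gamma)$, then apply \Cref{lemma:threshround} and convert its additive slack into a $O(1/\log n)$ multiplicative loss via the $\tau\log^3 n$ weighting, compounding over at most $100\log n$ stages. Your explicit induction on $\ell$ with the three invariants is just a cleaner packaging of the paper's per-stage analysis, and your observation that $\theta\ge\tau\delta$ gives $F\supseteq F_{V_\rU}(\tilde{\mu},\tau\delta)$ makes explicit a step the paper leaves implicit.
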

\begin{proof}
    Notice that before we reach lines~\ref{line:bruteforce} or~\ref{line:2satinstance}, every recursive call to~\Cref{alg:roundingalgo} involves one conditioning step and one thresholding step. Let us consider in particular a call of \Cref{alg:roundingalgo} that starts from a pseudodistribution $\mu \in \cD(V,d)$, and a set of unfixed variables $V_\rU \subseteq V$ such that neither line~\ref{line:bruteforce} nor~\ref{line:2satinstance} is reached. Then, the conditioning step produces a pseudodistribution $\tilde{\mu}$. This step does not change the assignment $\alpha$, or the set of unfixed variables $V_\rU$. Next, the thresholding step rounds some variables in $\tilde{\mu}$ and produces a pseudodistribution $\hat{\mu}$, and updates the assignment $\alpha$. Let us call this updated assignment $\alpha'$ and let $V_\rU' = \{v \in V \mid \alpha'_v = \frac{1}{2}\}$ be the set of unfixed variables with respect to $\alpha'$. For the sake of convenience, let us set define $A(\mu) := A^{V_{\rU}}(\mu)$, $A(\tilde{\mu}) := A^{V_{\rU}}(\tilde{\mu})$, and $A(\hat{\mu}) := A^{V_{\rU}'}(\hat{\mu})$. In words, $A(\mu)$ and $A(\tilde{\mu})$ are defined with respect to the set $V_\rU$, whereas $A(\hat{\mu})$ is defined with respect to the set $V_\rU'$ obtained after thresholding. We will show two things: first, we argue that $|V_\rU'| \le 99/100 \cdot |V_\rU|$, and second we argue that the increase in the aggregate value $A(\hat{\mu}) - A(\mu)$ is at most $50 A(\mu)/\log n$.
    
    Recall our setting of parameters: $K = 10^{20}$, $r = (\log n)^{100K}$, $t = K \log \log n$, $\epsilon = {1}/{10\log n}$, $\tau =  K\log^{2} n$, $c = 200K^2$.
    Since lines~\ref{line:bruteforce} and~\ref{line:2satinstance} are not reached, we can assume that $|V_\rU| \geq c_1 \geq K^2\log^{100} n$ and that $\val(\cI[V_\rU],\mu[V_\rU]) \leq \frac{1}{10\tau}$. By the guarantee of~\Cref{lem:condtofix} with $W = V_\rU$ and $\rho=\mu[V_\rU]$, there exists $r' \in \{0,\dots,r\}$ such that

    \begin{align*}
    \Pr_{
        \substack{
            \cC \sim {V_\rU}^{r'+2t}, \\ \gamma \sim \mu_C
        }
    }
    \left[
        \left|F_{V_\rU}(\mu|_{C \gets \gamma }, \, \tau \cdot \val(\cI[V_\rU],\mu[V_\rU]))\right| \ge \frac{|V_\rU|}{100}
    \right]
    &\ge
    1-\left(\frac{10^7}{\sqrt{\tau}}+5 \frac{3^{2t/3}}{ r^{1/3}}+\frac{5t^2}{|V_\rU|}+3 \eul^{-t/10^6}\right)\\
    &\geq 1 - \frac{1}{100\log n} 
    \, .
    \end{align*}
\noindent
We also observe that for all $i \in \{0,1,2,3\}$ we have
\begin{equation*}
    \E_{        \substack{
            \cC \sim V_\rU^{r'+2t}, \, \gamma \sim \mu_C
        }}[\lp^{V_{\rU}}_i(\mu|_{C \gets \gamma})] = \lp^{V_{\rU}}_i \, ,
\end{equation*}
which follows from the property of conditional expectation\footnote{Note that while the conditioning ``fixes" the variables we condition on, we still do not add these variables to $V_\rU$ - this happens only after the thresholding step. Hence, the sets $V_\rU$ and $V_\rF$ remain unchanged, and therefore a constraint which contributes to $\lp_i^{V_{\rU}}(\mu)$ continues to contribute to $\lp_i^{V_{\rU}}(\mu|_{C \gets \gamma})$ for all $i \in \{0,1,2,3\}$.}. This in turn means that
\begin{equation*}
    \E_{        \substack{
            \cC \sim V_\rU^{r'+2t}, \, \gamma \sim \mu_C
        }}[A^{V_{\rU}}(\mu|_{C \gets \gamma})] = A^{V_{\rU}}(\mu) = A(\mu) \, .
\end{equation*}
By Markov's inequality, it follows that

\begin{align*}
\Pr_{        \substack{
            \cC \sim W^{r'+2t}, \, \gamma \sim \mu_C
        }}[A^{V_{\rU}}(\mu|_{C \gets \gamma}) \geq A(\mu)(1 + \epsilon)] \leq 1 - \epsilon/2 = 1 - \frac{1}{20\log n}.
\end{align*}

\noindent
It follows that there exists a choice of $\cC \in W^{r'+2t}$ and $\gamma \in \supp(\mu_C)$ such that we simultaneously have

$$
        \left|F_{V_\rU}(\mu|_{C \gets \gamma }, \, \tau \cdot \val(\cI[V_\rU],\mu))\right| \ge \frac{|V_\rU|}{100}
     \quad \mbox{ and } \quad  A^{V_\rU}(\mu|_{C \gets \gamma}) \leq A(\mu)(1 + \epsilon)  \, .$$

\noindent
In particular, we must have
$$
        \left|F_{V_\rU}(\tilde{\mu}, \, \tau \cdot \val(\cI[V_\rU],\mu))\right| \ge \frac{|V_\rU|}{100}
     \quad \mbox{ and } \quad  A(\tilde{\mu}) \leq A(\mu)(1 + \epsilon)  \, .$$

\noindent
    Next, we analyze the rounding step. Note that the pseudo-distribution $\hat{\mu}$ is obtained by rounding $\tilde{\mu}$. By using \Cref{lemma:threshround} with $\rho = \tilde{\mu}$, it follows that
    
    $$ A(\hat{\mu}) - A(\tilde{\mu}) \leq \frac{6}{\log n} A(\tilde{\mu}) + 12 \tau \delta \binom{|V_\rU|}{3} \log^2 n$$
\noindent
    Note that $\tau \delta \binom{|V_\rU|}{3} \log^2 n = \tau \cdot \lp_3^{V_{\rU}}(\mu) \log^2 n \leq \frac{A(\mu)}{\log n}$, since $\delta \binom{|V_U|}{3} = \lp_3^{V_U}(\mu)$. This gives
    $$A(\hat{\mu}) - A(\tilde{\mu}) \leq 20\frac{ A(\tilde{\mu}) + A(\mu)}{\log n} \, .$$
    Using the fact that $A(\tilde{\mu}) \leq (1 + \epsilon) A(\mu)$, it follows that
    $$A(\hat{\mu}) - A(\tilde{\mu}) \leq 45 \frac{A(\mu)}{\log n} \, . $$
    Finally, combining the conditioning and rounding steps, we obtain
    \begin{align*}
    A(\hat{\mu}) - A(\mu) &= A(\hat{\mu}) - A(\tilde{\mu}) +A(\tilde{\mu}) - A(\mu) \\
    &\leq 45A(\mu)/\log n + \epsilon A(\mu) \\
    &\leq 50A(\mu)/\log n,
    \end{align*}
    where the last inequality follows since $\epsilon ={1}/{10\log n}$.

    Note that after each step of conditioning and rounding described above, we fix a $\frac{1}{100}$ fraction of the unfixed variables $V_\rU$. It follows that after at most $100 \log n$ such stages of conditioning and thresholding, the distribution $\mu$ and the set of unfixed variables $V_{\rU}$ satisfy $A^{V_{\rU}}(\mu) \leq (1 + 50/\log n)^{100 \log n} A^{V}(\mu^*) \leq \eul^{5000} A^{V}(\mu^*)$.
\end{proof}
    
\noindent
The following two lemmas, whose proofs are deferred to \Cref{sec:edgecases1,sec:edgecases2} respectively, prove the guarantees for the two base cases in \Cref{alg:roundingalgo}. 
When the number of unfixed variables becomes sufficiently small (line \ref{line:bruteforce}) we can compute the optimal solution using brute force (\Cref{lem:bruteforceok}).
When $\delta = \lp_3(\mu)/\binom{|V_\rU|}{3}$ is large (line \ref{line:2satinstance}), we can simply ignore clauses whose all three variables are unfixed, which reduces the problem to \textsc{Min-2-SAT} (\Cref{lem:highlyunsatok}) that has a well-known $\polylog(n)$-approximation. 

\begin{restatable}[Brute force base case]{lemma}{bruteforce}
    \label{lem:bruteforceok}
Let  $c = 200K^2$, let $\mu^* \in \cD(V,K^2\log^c n)$, let $\alpha^{(0)} \in \{0,1/2,1\}^V$ with $\alpha_v = 1/2$ for all $v \in V$. Also let $d$, $\mu \in \cD(V,d), \alpha \in \{0, 1/2, 1\}^V$ and $V_\rU = \{v : \alpha_v =1/2\}$ be the degree, the pseudodistribution, the partial assignment and the corresponding set of unfixed variables when~\Cref{alg:roundingalgo}, run as $\roundalgo(K^2\log^{c} n,\cI,\mu^*,\alpha^{(0)})$, reaches line~\ref{line:bruteforce}. Then, the output $\alpha^*$ of line \eqref{line:bruteforce} satisfies $\val(\cI,\alpha^*) \le \val(\cI,\mu)$.
\end{restatable}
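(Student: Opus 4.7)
The plan is to show that when \Cref{alg:roundingalgo} reaches line~\ref{line:bruteforce}, the pseudodistribution $\mu$ actually induces a genuine probability distribution over $\{0,1\}^V$ supported on assignments that agree with $\alpha$ on $V_\rF$. Once this is established, the claim follows by a standard averaging argument: from $\val(\cI,\mu) = \E_{\beta \sim \mu}[\val(\cI,\beta)]$ one obtains some $\beta^*$ in the support with $\val(\cI,\beta^*) \le \val(\cI,\mu)$, and the $\alpha^*$ chosen by the brute force in line~\ref{line:bruteforce} is at least as good as $\beta^*$ since both are consistent with $\alpha_{V_\rF}$.

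First, I would track the degree of the pseudodistribution throughout the recursion. The initial degree is $K^2 \log^{200K^2} n$, each recursive call decreases the degree by exactly $r + 2t = (\log n)^{100K} + 2K \log \log n$, and by \Cref{lem:returnpoints} the recursion depth is at most $100 \log n$ before a base case is reached. The total degree loss is therefore at most $200 (\log n)^{100K + 1}$, which is dwarfed by the initial degree since $200 K^2 \gg 100K + 1$ for $K = 10^{20}$. On the other hand, the branching condition at line~\ref{line:bruteforce} guarantees $|V_\rU| < K^2 (\log n)^{100}$. Hence $d \ge |V_\rU|$, which ensures that $\mu_{V_\rU}$ is a genuine distribution over $\{0,1\}^{V_\rU}$.

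Next, I would establish by induction on the recursion depth the invariant that $\mu_{\{v\}}(\alpha_v) = 1$ for every $v \in V_\rF$. The base case is trivial since $V_\rF = \emptyset$ initially. For the inductive step, the conditioning set $C \subseteq V_\rU$ is disjoint from $V_\rF$, so a direct computation from the definition of $\mu|_{C \gets \gamma}$ combined with the local consistency of $\mu$ shows the invariant is preserved by the conditioning step; the thresholding step only fixes new variables from $F \subseteq V_\rU$ to their ruling values before moving them into $V_\rF$, so the invariant carries through this step as well.

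Combining the two facts, the random assignment $\beta \in \{0,1\}^V$ defined by $\beta_{V_\rF} = \alpha_{V_\rF}$ and $\beta_{V_\rU} \sim \mu_{V_\rU}$ satisfies $\beta_S \sim \mu_S$ for every $S \in \binom{V}{3}$: indeed, the determinism of $\mu$ on $V_\rF$ forces $\mu_S$ to be supported on assignments agreeing with $\alpha$ on $S \cap V_\rF$, and marginalizing over $S \cap V_\rU$ matches $\mu_{S \cap V_\rU}$ by consistency. Linearity of expectation then yields $\E_\beta[\val(\cI, \beta)] = \val(\cI,\mu)$, from which the claim follows by the averaging observation in the first paragraph. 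The only real obstacle is keeping the degree bookkeeping clean across the recursion, but the parameters are chosen with so much slack that this is essentially a brief comparison of exponents.
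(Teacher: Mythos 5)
Your proof is correct and follows essentially the same route as the paper's: the same degree bookkeeping (at most $100\log n$ recursive calls, each costing $r+2t$ degrees, leaving $d \ge K^2\log^{100} n > |V_\rU|$) to conclude that $\mu$ restricted to $V_\rU$ is a genuine distribution, followed by the same averaging argument. The only difference is that you spell out the invariant $\mu_{\{v\}}(\alpha_v)=1$ for $v\in V_\rF$, which the paper leaves implicit in the definition of the thresholding step.
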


\begin{restatable}[\mintwo base case]{lemma}{twosatcase}
    \label{lem:highlyunsatok}
    Let  $c = 200K^2$, let $\mu^* \in \cD(V,K^2\log^c n)$, let $\alpha^{(0)} \in \{0,1/2,1\}^V$ with $\alpha_v = 1/2$ for all $v \in V$. Also let $d$, $\mu \in \cD(V,d), \alpha \in \{0, 1/2, 1\}^V$ and $V_\rU = \{v : \alpha_v = 1/2\}$ be the degree, the pseudodistribution, the partial assignment and the corresponding set of unfixed variables when~\Cref{alg:roundingalgo}, run as $\roundalgo(K^2\log^{c} n,\cI,\mu^*,\alpha^{(0)})$, reaches line~\ref{line:2satinstance}. Then, the output $\alpha^*$ of line~\eqref{line:2satinstance} satisfies $\val(\cI,\alpha^*) \leq O(\log n)A^{V_{\rU}}(\mu)/\binom{|V_\rU|}{3}$.
\end{restatable}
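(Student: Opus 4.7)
The plan is to bound the number of constraints of $\cI$ violated by $\alpha^*$ by partitioning $\cP$ into the four classes $\cP_0\{V_\rU\}, \cP_1\{V_\rU\}, \cP_2\{V_\rU\}, \cP_3\{V_\rU\}$ according to how many of the three variables in each constraint lie in $V_\rU$, and handling each class separately. The classes with no or all unfixed variables are easy. For $P_S \in \cP_0\{V_\rU\}$ every $v \in S$ has $\mu_v(\alpha_v)=1$ (variables in $V_\rF$ remain integer-fixed after being rounded at line~\ref{step:threshround}), so $\Pr_{\beta \sim \mu_S}[P_S(\beta)=0]=\mathbf{1}[P_S(\alpha_S)=0]$; since $\alpha^*$ agrees with $\alpha$ on $V_\rF$, the violations from this class total exactly $\lp_0^{V_\rU}(\mu)$. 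For $P_S \in \cP_3\{V_\rU\}$ I would use the crude bound $\binom{|V_\rU|}{3}$; the key point is that the entry condition $\delta > 1/(10\tau)$ to line~\ref{line:2satinstance} combined with $\lp_3^{V_\rU}(\mu) = \delta\binom{|V_\rU|}{3}$ yields $\binom{|V_\rU|}{3} < 10\tau\, \lp_3^{V_\rU}(\mu) \leq 10\, A^{V_\rU}(\mu)/\log^3 n$ by \eqref{eq:aggregate}.

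The main step handles $\cP_1\{V_\rU\}$ and $\cP_2\{V_\rU\}$. After substituting the already-determined values $\alpha_{V_\rF}$ into the NAE predicates of these constraints, each reduces to either a $1$-clause, a $2$-clause, or a trivially satisfied clause, giving the instance $\cI_2\langle V_\rU,\alpha\rangle$ of Min-2-SAT. I would construct a feasible solution to the standard Min-2-SAT LP of $\cI_2\langle V_\rU,\alpha\rangle$ from the marginals of $\mu[V_\rU]$: take $y_v = \mu_v(1)$ for each $v \in V_\rU$ and $z_{P'_S} = \Pr_{\beta \sim \mu_{S \cap V_\rU}}[P'_S(\beta)=0]$ for each derived clause $P'_S$. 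Feasibility of the form $z_{P'_S} \geq 1 - y_{\ell_1(v_1)} - y_{\ell_2(v_2)}$ for a $2$-clause $\ell_1(v_1) \vee \ell_2(v_2)$ follows from a union bound on the pairwise marginal $\mu_{\{v_1, v_2\}}$, available because the Sherali-Adams degree $d$ is at least $2$; the LP objective value equals $\lp_1^{V_\rU}(\mu)+\lp_2^{V_\rU}(\mu)$ since the slack of each derived clause matches $\Pr[P_S(\mu_S)=0]$ for the original NAE constraint. Applying the \textsc{kprt} $O(\log n)$-approximation of \cite{klein1997approximation} then outputs $\alpha' \in \{0,1\}^{V_\rU}$ whose number of violated clauses---and hence of violated original constraints in $\cP_1\{V_\rU\} \cup \cP_2\{V_\rU\}$---is at most $O(\log n)\bigl(\lp_1^{V_\rU}(\mu)+\lp_2^{V_\rU}(\mu)\bigr)$.

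Summing the four contributions and invoking $\lp_i^{V_\rU}(\mu) \leq A^{V_\rU}(\mu)/\log^i n$ for $i \in \{0,1,2\}$ (immediate from \eqref{eq:aggregate}), all terms collapse to a total of $O(A^{V_\rU}(\mu))$ violated constraints. Dividing by $\binom{n}{3}$, which is at least $\binom{|V_\rU|}{3}$, yields a bound on $\val(\cI, \alpha^*)$ at least as strong as the one claimed. The main subtlety will be the $\cP_3$ bound: without the $\delta > 1/(10\tau)$ hypothesis the worst-case $\binom{|V_\rU|}{3}$ contribution of $\cP_3$ would be catastrophic, but the assumption forces $\lp_3^{V_\rU}(\mu)$ to be large enough that $\binom{|V_\rU|}{3}$ is dominated by $A^{V_\rU}(\mu)/\log^3 n$, allowing the LP-rounding argument to proceed cleanly across all four classes.
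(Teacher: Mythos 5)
Your overall architecture coincides with the paper's: the same four-way split by $|S\cap V_\rU|$, the same exact accounting for $\cP_0\{V_\rU\}$, the same observation that $\delta>1/(10\tau)$ makes the trivial bound $\binom{|V_\rU|}{3}\le 10\tau\,\lp_3^{V_\rU}(\mu)\le 10\,A^{V_\rU}(\mu)/\log^3 n$ absorb all of $\cP_3\{V_\rU\}$, and the same reduction of $\cP_1\{V_\rU\}\cup\cP_2\{V_\rU\}$ to \mintwo followed by LP rounding. However, the \mintwo step has a genuine gap. The relaxation you write down --- variables $y_v=\mu_v(1)$ and clause variables $z_{P'_S}$ with covering constraints $z_{P'_S}\ge 1-y_{\ell_1}-y_{\ell_2}$ --- is not a relaxation for which any $\polylog(n)$ rounding guarantee holds: setting every $y_v=1/2$ makes all covering constraints vacuous and drives the LP optimum to $0$ on unsatisfiable instances, so no rounding algorithm can produce an assignment violating $O(\polylog n)\cdot(\text{LP value})$ clauses for an arbitrary feasible point of that LP. The \textsc{kprt} guarantee (\Cref{thm:2satrounding}) is stated for the \emph{metric} relaxation (LP-2-SAT) over literals, whose objective is $\sum_{(\ell_1\lor\ell_2)}\frac{1}{2}(d(\neg\ell_1,\ell_2)+d(\neg\ell_2,\ell_1))$ subject to $d(v,\neg v)+d(\neg v,v)\ge 1$ and the triangle inequalities. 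The paper's proof therefore sets $d(\ell_1,\ell_2)=\Pr_{\sigma\sim\mu}[\ell_1 \text{ true},\,\ell_2\text{ false}]$ and, crucially, \emph{verifies the triangle inequality} by passing through the joint distribution $\mu_{\{u,v,w\}}$ on three variables --- this is exactly where degree $d\ge 3$ of the Sherali-Adams solution is needed, contradicting your remark that pairwise marginals (degree $2$) suffice. Your union-bound feasibility check is correct for the LP you wrote, but it is the wrong LP.

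A secondary point: you quote the \textsc{kprt} factor as $O(\log n)$, whereas the theorem the paper relies on gives $O(\log^2 n)$. This does not endanger the final bound --- with the correct factor the total count of violated constraints is $10\tau\,\lp_3+O(\log^2 n)(\lp_2+\lp_1)+\lp_0$, and since $A^{V_\rU}(\mu)$ weights $\lp_1$ by $\log n$ this is still $O(\log n)\,A^{V_\rU}(\mu)$, matching the statement after dividing by $\binom{n}{3}\ge\binom{|V_\rU|}{3}$ --- but your claimed total of $O(A^{V_\rU}(\mu))$ is not what the argument delivers. To repair the proof, replace your covering LP with the metric LP on literals, define the distances from the pairwise events of $\mu$ as above, and supply the triangle-inequality verification via the degree-$3$ marginals.
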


\noindent
We finish the proof of our main result. 

\begin{proof}[Proof of \Cref{thm:nae}]
Let $\mu^* \in \cD(V, K^2 \log^c n)$ be obtained by solving a Sherali-Adams relaxation.
 Run~\Cref{alg:roundingalgo} with $c = 200K^2$ as \textsc{round-pd}{}$(K^2\log^c n, \cI, \mu^*, \alpha^{(0)})$, where $\alpha^{(0)} \in \{0,1/2,1\}^V$ with $\alpha_v = 1/2$ for all $v \in V$. \Cref{lem:returnpoints} guarantees that one of lines~\eqref{line:bruteforce} or~\eqref{line:2satinstance} is reached with a pseudodistribution $\mu$ and a set of unfixed variables $V_{{\rU}}$ such that $A^{V_\rU}(\mu) \leq O(1) A^{V}(\mu^*)$. Now there are two cases depending on which line is reached when the algorithm terminates.

 \begin{itemize}
 \item \textbf{Line \eqref{line:bruteforce} is reached.} In this case, by~\Cref{lem:bruteforceok}, we obtain an assignment $\alpha^*$ with $\val(\cI, \alpha^*) \leq \val(\cI, \mu)$. Note that 

$$\val (\cI, \mu) \leq \frac{1}{{\binom{|V|}{3}}}\sum_{i = 0}^3 \lp^{V_{\rU}}_i(\mu)
\leq \frac{1}{{\binom{|V|}{3}}} A^{V_{\rU}}(\mu) \, .$$
This quantity is at most
$$ \frac{1}{{\binom{|V|}{3}}} O(1) A^{V}(\mu^*) \leq \frac{1}{{\binom{|V|}{3}}} O(\tau \log^3 n) \sum_{i = 0}^{3} \lp_i^{V}(\mu^*) \, ,$$
which is equal to
$$\frac{1}{{\binom{|V|}{3}}} O(\tau \log^3 n) \lp^{V}_3(\mu^*) \leq O(\tau \log^3 n) \val(\cI, \mu^*)$$
as desired.

 \item \textbf{Line \eqref{line:2satinstance} is reached.} In this case, by~\Cref{lem:highlyunsatok}, we obtain an assignment $\alpha^*$ with $\val(\cI,\alpha) \leq O(\log n)A^{V_{\rU}}(\mu)/\binom{|V_\rU|}{3}$. Again, we have 

$$\val (\cI, \alpha^*) \leq \frac{1}{{\binom{|V|}{3}}} O(\log n)A^{V_{{\rU}}}(\mu) \leq \frac{1}{{\binom{|V|}{3}}}O(\log n) A^V(\mu^*) \, ,$$
which is at most
$$\frac{1}{{\binom{|V|}{3}}} O(\tau \log^4 n) \sum_{i = 0}^{3} \lp^V(\mu^*) = \frac{1}{{\binom{|V|}{3}}} O(\tau \log^4 n) \lp^V_3(\mu^*) \, .$$
By definition, the above is $O(\tau \log^4 n) \val(\cI, \mu^*)$. Finally, we recall that $\tau = K\log^2 n = O(\log^2 n)$, so that $\val (\cI, \beta^*)\leq O(\log^6 n) \val(\cI, \mu^*).$  This concludes the proof of the correctness.
 \end{itemize}

\noindent
 For the running time, note that in a given stage of the algorithm involving one conditioning and one thresholding step, the bottleneck is guessing the set of $r' + 2t$ variables and their assignments to condition on. Since $r' + 2t = O(\log^{100K} n)$, this step takes quasi-polynomial time. There are at most $100\log n$ stages by \Cref{lem:returnpoints}, thus the total time remains quasi-polynomial. Finally, we can obtain a degree-$d$ Sherali-Adams solution in $n^{O(d)}$ time. Recall that we solve a $K^2\log^c n$-round Sherali-Adams solution to obtain $\mu^*$, where $c = 200K^2$. Thus, the running time for this step is quasi-polynomial as well.  It follows that the overall running time is quasi-polynomial.
 \end{proof}
	\section{Proof of the Conditioning-to-Fixing Lemma}
\label{sec:condtofix}

The focus of this section is proving \Cref{lem:condtofix}. The lemma states that conditioning on a small random set of variables, together with an assignment sampled from their local distribution, yields a new pseudodistribution where an $\Omega(1)$ fraction of variables are almost fixed. The precise claim is restated here for convenience of the reader. To parse the statement, it  helps to think of the parameters as being $d,r,\tau=\polylog(n)$, $t=O(\log \log n)$, and $\val(\cI[W],\mu) = \tilde{O}(\opt) \ll 1/\polylog(n)$.
\condtofix*

\noindent
To show this, we crucially use a lemma of Yoshida and Zhou~\cite{yz14}, stating that the average correlations in the local distribution of size-$t$ sets is $\poly(2^t/r)$, in expectation over the choice of a random set of $r$ variables and an assignment sampled from their local distribution. Again, it is helpful to think of $r=\polylog(n),t=O(\log \log n)$. Formally, we employ the following fact, where for any two finitely supported distributions $\nu_1,\nu_2$, we denote by $\dkl(\nu_1 \parallel \nu_2)$ the KL divergence of $\nu_1$ from $\nu_2$. 
\begin{lemma}[Lemma 6 in~\cite{yz14}]
\label{lem:yoshidazhou}
    Let $d,r,t$ be positive integers such that $d \ge r+2t$, let $W \subseteq V$, and let $\rho \in \cD(W,d)$. Then, there exists $r' \in \{0,1,\dots,r\}$ such that
    \begin{equation*}
        \ex_{\cS \sim W^{r'}, \, \sigma \sim  \rho_{S}}\left[\ex_{\cT \sim W^{2t}}\left[\dkl\left((\rho|_{S \gets \sigma})_T \, \middle\| \, \otimes_{v \in T}(\rho|_{S \gets \sigma})_{\{v\}}\right)\right]\right] \le \frac{3^{2t}}{r} \, .
    \end{equation*}
\end{lemma}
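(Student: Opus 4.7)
I will combine Lemma~\ref{lem:yoshidazhou} with the combinatorial structure from Insight~\ref{ins:constfix}: in a locally consistent distribution with small fractional NAE-value, typical triples concentrate their mass on a single NAE-satisfying assignment, so conditioning on the right values of two of their variables almost fixes the third.

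First, I would invoke Lemma~\ref{lem:yoshidazhou} with parameters $r, t, W, \rho$ to obtain an $r' \in \{0, \dots, r\}$ with the stated bound on the expected KL divergence. Write $\rho' := \rho|_{S \gets \sigma}$ and $\delta := \val(\cI[W], \rho)$. Applying Pinsker's inequality converts the KL bound into an expected squared $\ell_1$-distance bound between $\rho'_T$ and $\bigotimes_{v \in T} \rho'_{\{v\}}$, and two applications of Markov's inequality give that, with probability $\ge 1 - O(3^{2t/3}/r^{1/3})$ over the joint draw of $(S, \sigma, T)$, the total variation distance between $\rho'_T$ and $\bigotimes_{v\in T}\rho'_{\{v\}}$ is itself $O(3^{2t/3}/r^{1/3})$. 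Independently, since $\E_{S, \sigma}[\val(\cI[W], \rho')] = \delta$, a further Markov step yields $\val(\cI[W], \rho') \le \sqrt\tau \cdot \delta$ with probability $\ge 1 - 1/\sqrt\tau$ over $(S, \sigma)$.

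Condition on a realization of $(S, \sigma)$ satisfying both events above, and fix $u \in W$. I will bound the probability over $T \sim W^{2t}$ and $\gamma_T \sim \rho'_T$ that $u$ fails to be $\tau\delta$-fixed in $\rho|_{C \gets \gamma}$, where $C = S \cup T$ and $\gamma = \sigma \cup \gamma_T$. A double averaging argument on the complete sub-instance $\cI[W]$ --- which has value $\le \sqrt\tau \cdot \delta$ under $\rho'$ --- shows that, for all but an $O(1/\sqrt\tau)$ fraction of $u \in W$, at least a $1 - O(1/\sqrt\tau)$ fraction of triples $\{u, v, w\}$ have unsat probability $\le O(\sqrt\tau\, \delta)$ under $\rho'$; call these the \emph{good} triples of $u$. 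For a good triple of $u$, Table~\ref{tab:fixing} implies that one satisfying assignment to $\{v, w\}$ has mass at least $1/7$ in $\rho'_{\{v,w\}}$, and conditioning on that assignment makes $u$ an $O(\sqrt\tau\, \delta)$-fixed variable.

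Now view $T$ as $t$ (near-)disjoint pairs $(T_{2i-1}, T_{2i})$; a birthday bound contributes the $O(t^2/|W|)$ error for these pairs being pairwise disjoint and disjoint from $u$. Under the product of marginals $\bigotimes_{v \in T} \rho'_{\{v\}}$, the $t$ pairs and their assignments are independent, and for a good $u$ each pair independently has probability $\ge \Omega(1)$ of simultaneously (a) forming a good triple with $u$ and (b) receiving the dominant satisfying assignment of that triple. A Chernoff bound then yields probability $\le 3 e^{-t/10^6}$ that none of the $t$ pairs fixes $u$; replacing the product of marginals by the true $\rho'_T$ costs an additive $O(3^{2t/3}/r^{1/3})$ via the TV bound from the first step. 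One further Markov step controls the extra conditioning on $C \setminus \{v, w\}$ at the cost of an additional $O(1/\sqrt \tau)$, restoring the final $\tau\delta$-fixedness threshold. Summing these per-variable failure probabilities over $u \in W$ and applying a final Markov argument to the number of unfixed variables bounds the probability that fewer than $|W|/100$ variables are fixed by exactly the sum in the statement.

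The main obstacle I expect is the last adjustment: after a pair inside $T$ has made $u$ approximately fixed, additionally conditioning on the remaining variables of $C$ can un-fix $u$ in some realizations. This is precisely why the fixing threshold degrades from $\delta$ to $O(\sqrt\tau\, \delta)$ in the pairwise argument and is restored to $\tau \delta$ only after the extra Markov step --- explaining why the first error term scales like $1/\sqrt\tau$ rather than $1/\tau$. Carefully arranging these applications of Markov so that they combine into the exact constants and exponents of the stated error, without double-counting randomness between the three bad events (TV failure, value blow-up, and per-variable un-fixing), is the main technical subtlety.
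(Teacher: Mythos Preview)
Your proposal is not a proof of Lemma~\ref{lem:yoshidazhou} at all. The statement you were asked to prove is the Yoshida--Zhou correlation-reduction lemma: a purely information-theoretic fact about degree-$d$ pseudodistributions, asserting the existence of some $r' \le r$ for which random conditioning on $r'$ variables drives the expected KL divergence between $\rho'_T$ and its product of marginals below $3^{2t}/r$. This lemma has nothing to do with \nae constraints, the instance $\cI[W]$, the value $\val(\cI[W],\rho)$, or the fixing threshold $\tau\delta$. In the paper it is not proved; it is simply cited from~\cite{yz14}, and its proof is the standard telescoping/averaging argument over conditional entropies (the total entropy of $\rho$ is at most $|W|$, each conditioning step reduces it in expectation, so some step must reduce the average $2t$-wise multi-information by at most $3^{2t}/r$).

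What you have actually sketched is a proof of Lemma~\ref{lem:condtofix} (the Conditioning-to-Fixing Lemma). Your very first step is ``invoke Lemma~\ref{lem:yoshidazhou}'', which would be circular if you were truly proving Lemma~\ref{lem:yoshidazhou}. Everything that follows---the averaging over triples, the use of Table~\ref{tab:fixing}, the Chernoff bound over $t$ pairs, the $|W|/100$ conclusion---matches the structure and conclusion of Lemma~\ref{lem:condtofix}, not of the statement above. If your intent was to prove Lemma~\ref{lem:condtofix}, your outline is broadly along the lines of the paper's Section~\ref{sec:condtofix}; but as a proof of Lemma~\ref{lem:yoshidazhou} it is a non-starter.
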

\noindent
The above result in particular implies that with good probability the local distribution of a random size-$t$ set is $\poly(2^t/r)$-close (in total variation distance) to independent in the pseudodistribution obtained by conditioning a random set of $r$ variables and a random assignment. More precisely, \cref{lem:yoshidazhou} readily gives the following fact, where for any two distributions $\nu_1,\nu_2$ supported over a finite set $\Omega$ we denote by $\|\nu_1-\nu_2 \|_1=\sum_{\omega \in \Omega}|\nu_1(\omega)-\nu_2(\omega)|$ the total variation distance between $\nu_1$ and $\nu_2$.
\begin{lemma}
\label{lem:tvd}
    Let $d,r,t$ be positive integers such that $d \ge r+2t$, let $W \subseteq V$, and let $\rho \in \cD(W,d)$. Then, there exist $r' \in \{0,1,\dots,r\}$ such that
    \begin{equation*}
        \Pr_{\substack{\cS \sim W^{r'}, \, \sigma \sim  \rho_{S}, \\ \cT \sim W^{2t}}} \left[\left\|(\rho|_{S \gets \sigma})_T - \otimes_{v \in T}(\rho|_{S \gets \sigma})_{\{v\}}\right\|_1 > 2\cdot \frac{3^{2t/3}}{ r^{1/3}}\right] \le \frac{3^{2t/3}}{ r^{1/3}} \, .
    \end{equation*}
\end{lemma}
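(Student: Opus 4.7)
The plan is to derive Lemma~\ref{lem:tvd} from Lemma~\ref{lem:yoshidazhou} by combining Markov's inequality with Pinsker's inequality, and tuning the threshold at which Markov is applied. First, I take $r' \in \{0,1,\dots,r\}$ to be the value guaranteed by Lemma~\ref{lem:yoshidazhou}, so that the expected KL divergence of $(\rho|_{S \gets \sigma})_T$ from $\otimes_{v \in T}(\rho|_{S \gets \sigma})_{\{v\}}$, taken over the random choice of $\cS \sim W^{r'}$, $\sigma \sim \rho_S$, and $\cT \sim W^{2t}$, is at most $3^{2t}/r$.

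Next, I apply Markov's inequality: for any threshold $\eta > 0$, with probability at least $1 - \frac{3^{2t}}{r\eta}$ over the sampling of $\cS, \sigma, \cT$, the KL divergence between $(\rho|_{S \gets \sigma})_T$ and $\otimes_{v \in T}(\rho|_{S \gets \sigma})_{\{v\}}$ is at most $\eta$. Whenever this bound on the KL divergence holds, Pinsker's inequality $\|\nu_1 - \nu_2\|_1 \le \sqrt{2\, \dkl(\nu_1 \parallel \nu_2)}$ (valid for any two distributions $\nu_1, \nu_2$ on a finite sample space) implies that the total variation distance in question is at most $\sqrt{2\eta}$.

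To match the exact constants stated in the lemma, I set $\eta = 2 \cdot 3^{4t/3}/r^{2/3}$. Then $\sqrt{2\eta} = 2 \cdot 3^{2t/3}/r^{1/3}$, which matches the desired threshold on the total variation distance, while $\frac{3^{2t}}{r \eta} = \frac{3^{2t/3}}{2\, r^{1/3}} \le \frac{3^{2t/3}}{r^{1/3}}$, which matches the desired failure probability. The argument is a purely mechanical combination of two standard inequalities together with Lemma~\ref{lem:yoshidazhou}; the only design choice is the value of $\eta$, which is tuned to balance the $1/\eta$ factor arising from Markov against the $\sqrt{\eta}$ factor arising from Pinsker. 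This balancing act is precisely what produces the exponents $2t/3$ in the numerator and $1/3$ in the denominator, rather than the weaker exponents $t$ and $1/2$ one would obtain by directly applying Jensen's inequality to $\E[\sqrt{2\, \dkl}]$. Since both Markov and Pinsker are elementary and the threshold optimization is just one-dimensional, no real obstacle is expected.
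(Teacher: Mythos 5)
Your proposal is correct and follows exactly the same route as the paper's proof: fix $r'$ via Lemma~\ref{lem:yoshidazhou}, apply Markov's inequality to the KL divergence at the threshold $2\cdot 3^{4t/3}/r^{2/3}$, and convert to total variation distance via Pinsker's inequality $\|\nu_1-\nu_2\|_1\le\sqrt{2\dkl(\nu_1\parallel\nu_2)}$. The constants work out identically, so there is nothing to add.
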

\begin{proof}
As guaranteed by \Cref{lem:yoshidazhou}, let $r' \in \{0,1,\dots,r\}$ such that
    \begin{equation*}
        \ex_{\substack{\cS \sim W^{r'}, \, \sigma \sim  \rho_{S}, \\ \cT \sim W^{2t}}}\left[\dkl\left((\rho|_{S \gets \sigma})_T \, \middle\| \, \otimes_{v \in T}(\rho|_{S \gets \sigma})_{\{v\}}\right)\right] \le \frac{3^{2t}}{r} \, .
    \end{equation*}
    We recall that for any two finitely supported distributions $\nu_1,\nu_2$ one has $\|\nu_1-\nu_2\|_1\le \sqrt{2\dkl(\nu_1\parallel \nu_2)}$ (see for example Lemma 2.1 in~\cite{yz14}). Hence, it suffices to bound
    \begin{equation*}
        \Pr_{\substack{\cS \sim W^{r'}, \, \sigma \sim  \rho_{S}, \\ \cT \sim W^{2t}}} \left[\dkl\left((\rho|_{S \gets \sigma})_T \, \middle\| \, \otimes_{v \in T}(\rho|_{S \gets \sigma})_{\{v\}}\right) > 2\cdot \frac{3^{4t/3}}{ r^{2/3}}\right] \, .
    \end{equation*}
    Using Markov's inequality, we can upper bound the above probability by
    \begin{equation*}
        \frac{\ex_{\substack{\cS \sim W^{r'}, \, \sigma \sim  \rho_{S}, \\ \cT \sim W^{2t}}}\left[\dkl\left((\rho|_{S \gets \sigma})_T \, \middle\| \, \otimes_{v \in T}(\rho|_{S \gets \sigma})_{\{v\}}\right)\right]}{2\cdot \frac{3^{4t/3}}{ r^{2/3}}} \le \frac{\frac{3^{2t}}{r}}{2\cdot \frac{3^{4t/3}}{ r^{2/3}}} \le \frac{3^{2t/3}}{ r^{1/3}} \, ,
    \end{equation*}
    which concludes the proof.
\end{proof}

\noindent
Equipped with this tool, we can now proceed to prove several auxiliary lemmas, before concluding a proof of \Cref{lem:condtofix}. To do so, it will be convenient to define the notion of a \textit{fixable} variable. Informally, this is a variable $u$ such that $\Omega(n^2)$ of its constraints have the property of (1) being violated with a small probability by their local distribution, and (2) the two literals (corresponding to the other two variables $v,w$ in the constraint) are equal with at least $\Omega(1)$ probability. More precisely, we use the following definition.

\begin{definition}[Fixable variables]
    Let $d'$ be a positive integer such that $d' \ge 3$, let $W \subseteq V$ with $|W|\ge 3$, let $\rho \in \cD(W,d')$, and let $u \in W$. For any $\gamma_\unsat,\gamma_\fix,\gamma_\rate \in [0,1]$, the variable $u$ is $(\rho,\gamma_\unsat,\gamma_\fix,\gamma_\rate)$-fixable if
    \begin{equation*}
        \Pr_{\{v,w\} \sim \binom{W}{2}}
        \left[
            \begin{array}{c}
                |\{u,v,w\}|=3 \\
                \text{ \upshape and } \\
                \Pr_{\sigma \sim \rho_{\{u,v,w\}}}\left[P_{\{u,v,w\}}(\sigma)=0\right] \le \gamma_\unsat \\
                \text{ \upshape and } \\
                \Pr_{\beta \sim \rho_{\{v,w\}}}\left[P^v_{\{u,v,w\}}(\beta_v)=P^w_{\{u,v,w\}}(\beta_w)\right] \ge \gamma_\fix
            \end{array}
        \right]
        \ge \gamma_\rate \, .
    \end{equation*}
\end{definition}

\noindent
Intuitively, such a variable is fixable because if we sample a random pair of variables $v,w$ and a random assignment to them from their local distribution, the conditional pseudodistribution hence obtained should $(\gamma_\unsat/\gamma_\fix)$-fix the variable $u$, with probability $\gamma_\rate \cdot \gamma_\fix$. This should be the case because the constraint was almost satisfied to start with, and chances are that $v$ and $w$ will have the same literal, so we need to rely on the value of $u$ to keep the constraint satisfied. We refer the reader to \Cref{tab:fixing} for more intuition on the argument.

Having motivated this notion, we can exploit the fact that our instance is complete (and in particular any sub-instance $\cI[W]$ is complete) to show that there are in fact $\Omega(n)$ fixable variables, as stated by the following lemma. Again, one should think of $\val(\cI[W],\rho) = \tilde{O}(\opt)$.

\begin{lemma}
\label{lem:manyfixablevars}
Let $d'$ be a positive integer such that $d' \ge 3$, let $W \subseteq V$ with $|W|\ge 4$, let $\rho \in \cD(W,d')$, and let $\xi \in (0,1/4]$ such that $\val(\cI[W],\rho) \le \xi$. Then
    \begin{equation*}
        \left|  \left\{ u \in W: \, u \text{ \upshape is } \left(\rho, 2\xi,\frac{1}{12},\frac{1}{24}\right)\text{\upshape -fixable} \right\} \right| \ge  \frac{|W|}{24} \, .
    \end{equation*}
\end{lemma}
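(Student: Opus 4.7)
The plan is to combine a simple averaging (Markov) argument with a structural observation about \nae constraints, then finish by double counting.

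First, since $\val(\cI[W],\rho) \le \xi$, Markov's inequality yields at least $\binom{|W|}{3}/2$ triples $S \in \binom{W}{3}$ with $\Pr_{\sigma \sim \rho_S}[P_S(\sigma) = 0] \le 2\xi$. I call these \emph{good triples}; the goal is to show that each good triple ``helps'' at least one of its variables toward being fixable, and then deduce by averaging that many variables are helped sufficiently often.

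Next, I would exploit the structure of \nae: any satisfying assignment has exactly one ``odd-one-out'' variable whose literal differs from the other two. Fix a good triple $S = \{u,v,w\}$ and sample $\sigma \sim \rho_S$. Let $p_z^S$ denote the probability that $P_S$ is satisfied and $z$ is the odd one out, for $z \in S$, so that $p_u^S + p_v^S + p_w^S = 1 - \Pr[P_S(\sigma)=0]$. For any $z \in S$ with $\{v',w'\} = S \setminus \{z\}$, local consistency of $\rho$ gives
\[q_z^S \;:=\; \Pr_{\beta \sim \rho_{\{v',w'\}}}\bigl[P^{v'}_S(\beta_{v'}) = P^{w'}_S(\beta_{w'})\bigr] \;=\; p_z^S + \Pr_{\sigma \sim \rho_S}[P_S(\sigma)=0] \, ,\]
since the two non-$z$ literals agree in $\sigma$ exactly when either $\sigma$ violates the constraint (all three literals equal) or $\sigma$ satisfies it with $z$ being odd. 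Summing over $z \in S$ yields $q_u^S + q_v^S + q_w^S = 1 + 2\Pr[P_S(\sigma)=0] \ge 1$, so by pigeonhole at least one of them is $\ge 1/3 \ge 1/12$. I say that a good triple $S$ \emph{helps} any $z \in S$ with $q_z^S \ge 1/12$; thus each good triple helps at least one variable, and whenever $S$ helps $z$ the pair $\{v',w'\}$ witnesses the inner event of the fixability definition with thresholds $\gamma_\unsat = 2\xi$ and $\gamma_\fix = 1/12$.

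Finally, I conclude by double counting. Let $h_u$ be the number of good triples containing $u$ that help $u$. Since each good triple helps at least one variable, $\sum_{u \in W} h_u \ge \binom{|W|}{3}/2$. If strictly fewer than $|W|/24$ variables satisfied $h_u \ge \binom{|W|}{2}/24$, then using the trivial bound $h_u \le \binom{|W|-1}{2}$ for the ``large'' ones we would obtain
\[\sum_{u \in W} h_u \;<\; \frac{|W|}{24}\binom{|W|-1}{2} + |W| \cdot \frac{\binom{|W|}{2}}{24} \;=\; \frac{|W|(|W|-1)^2}{24} \, ,\]
which for $|W| \ge 4$ is at most $\binom{|W|}{3}/2 = |W|(|W|-1)(|W|-2)/12$, a contradiction. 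Hence at least $|W|/24$ variables $u$ satisfy $h_u \ge \binom{|W|}{2}/24$, and each such $u$ is $(\rho, 2\xi, 1/12, 1/24)$-fixable by the definition. The only nontrivial step is the \nae structural identity for $q_z^S$, which requires carefully tracking the literal polarities $P_S^z$ and invoking the local consistency of $\rho$ (using the hypothesis $d' \ge 3$); the rest is averaging and pigeonhole.
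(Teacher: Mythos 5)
Your proof is correct and follows essentially the same route as the paper's: Markov's inequality to obtain $\binom{|W|}{3}/2$ good triples, a per-triple pigeonhole showing some pair has literal-agreement probability at least $1/12$ (the paper bounds this by $(1-2\xi)/6 \ge 1/12$, while your exact identity $\sum_{z\in S} q_z^S = 1 + 2\Pr[P_S=0]$ gives the slightly cleaner $1/3$), and the same double count, which the paper phrases as an edge count in a bipartite graph between pairs and variables. No gaps.
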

\begin{proof}
We consider the bipartite graph $G=(L \cup R, E)$ with bipartition $L,R$ and \smash{$L = \binom{W}{2}, R = W$}, where \smash{$E \subseteq \binom{W}{2} \times W$} is defined as follows: for each \smash{$\{v,w\} \in \binom{W}{2}, u \in W$}, the edge $(\{v,w\},u) $ is in $E$ if and only if the following condition holds:
\begin{equation*}
\begin{array}{c}
                |\{u,v,w\}|=3 \\
                \text{ \upshape and } \\
                \Pr_{\sigma \sim \rho_{\{u,v,w\}}}\left[P_{\{u,v,w\}}(\sigma)=0\right] \le 2 \cdot \xi \\
                \text{ \upshape and } \\
                \Pr_{\beta \sim \rho_{\{v,w\}}}\left[P^v_{\{u,v,w\}}(\beta_v)=P^w_{\{u,v,w\}}(\beta_w)\right] \ge \frac{1}{12} \, .
            \end{array}
\end{equation*}
Then, to prove the lemma statement we use the following claim.
\begin{claim}
    \label{claim:manyedges}
    $|E| \ge \binom{|W|}{3}/2$.
\end{claim}
\noindent
Before showing the claim, we check that it allows to prove the lemma. In fact, suppose for the sake of a contradiction that there is an $f$ fraction of $(\rho, 2\xi,{1}/{12},{1}/{24})$-fixable variables $u$ in $W$, with $f < 1/24$. Then, the number of edges in $G$ would be
\begin{align*}
    |E| & \le f  |W| \cdot  \binom{|W|}{2}+\left(1-f\right)|W| \cdot \frac{1}{24} \binom{|W|}{2} \\
    & < |W| \cdot \frac{1}{24} \binom{|W|}{2}+|W| \cdot \frac{23}{24^2} \binom{|W|}{2} \le \frac{1}{12}|W|\binom{|W|}{2}  \le \frac{1}{2} \binom{|W|}{3} \, ,
\end{align*}
contradicting \Cref{claim:manyedges}.

    \begin{proof}[Proof of \Cref{claim:manyedges}]
    Note that for any \smash{$\{u,v,w\} \in \binom{W}{3}$}, we can pick any \smash{$\{x,y\} \in \binom{\{u,v,w\}}{2}$} and accordingly let $z$ be the unique variable in $\{u,v,w\} \setminus \{x,y\}$, hence obtaining a valid pair $(\{x,y\},z)$  with $|\{x,y,z\}|=3$. Moreover, for at least \smash{$\binom{|W|}{3}/2$} of the subsets \smash{$\{u,v,w\} \in \binom{W}{3}$} one has the probability of constraint being violated is $ \Pr_{\sigma \sim \rho_{\{u,v,w\}}}[P_{\{u,v,w\}}(\sigma)=0] \le 2 \xi$. Also, for any \smash{$\{u,v,w\} \in \binom{W}{3}$} with \smash{$ \Pr_{\sigma \sim \rho_{\{u,v,w\}}}[P_{\{u,v,w\}}(\sigma)=0] \le 2 \xi$}, there exists some \smash{$\{x,y\} \in \binom{\{u,v,w\}}{2}$} such that $\Pr_{\beta \sim \rho_{\{x,y\}}}[P^x_{\{u,v,w\}}(\beta_x)=P^y_{\{u,v,w\}}(\beta_y)] \ge (1-2\xi)/(2^3-2)$, which implies $(\{x,y\},z) \in E$ for $z$ being the unique variable in $\{u,v,w\} \setminus \{x,y\}$. Hence, \smash{$|E| \ge \binom{|W|}{3}/2$}.
    \end{proof}
\noindent
    This concludes the proof of the lemma.
\end{proof}

\noindent
We are now at a point where we have convinced ourselves that fixable variables are useful, and that there are many of them. However, all the pairs $v,w$ that make a variable fixable may be different than those that make another variable fixable. We then introduce the notion of a \textit{fixing tuple}, that is a collection of few, $t$ pairs of variables $v,w$ (again, think $t=O(\log \log n)$) so that $\Omega(t)$ of them fix a given third variable $u$. Formally, we have the following definition.

\begin{definition}[Fixing tuple]
     Let $d',t$ be positive integers such that $d' \ge 3$, let $W \subseteq V$ with $|W|\ge 3$, let $\rho \in \cD(W,d')$, and let $\cT \in W^{2t}$ be a tuple of $t$ pairs $\cT=((v_1,w_1),\dots,(v_t,w_t))$ over $W$. For any $\gamma_\unsat,\gamma_\fix,\gamma_\rate' \in [0,1]$ and any \smash{$u \in {W}$}, the tuple $\cT$ is $(\rho,u,\gamma_\unsat,\gamma_\fix,\gamma_\rate')$-fixing if
    \begin{equation*}
        \Pr_{i \sim [t]}
        \left[
            \begin{array}{c}
                |\{u,v_i,w_i\}|=3 \\
                \text{ \upshape and } \\
                \Pr_{\sigma \sim \rho_{\{u,v_i,w_i\} }}\left[P_{\{u, v_i,w_i\}}(\sigma)=0\right] \le \gamma_\unsat \\
                \text{ \upshape and } \\
                \Pr_{\beta \sim \rho_{\{v_i,w_i\}}}\left[P^{v_i}_{\{u,v_i,w_i\}}(\beta_{v_i})=P^{w_i}_{\{u,v_i,w_i\}}(\beta_{w_i})\right] \ge \gamma_\fix
            \end{array}
        \right]
        \ge \gamma_\rate' \, .
    \end{equation*}
\end{definition}

\noindent
Intuitively, the goal would be to ensure for an $\Omega(1)$ fraction of variables $u$, a random collection of $t$ pairs contains at least one pair $v,w$ that fixes it. We in fact prove a stronger statement, saying that any given variable is fixed by $\Omega(t)$ pairs with $1-\eul^{-\Omega(t)}$  probability (which is $1-1/\polylog(n)$ if we plug in $t=O(\log \log n)$).

\begin{lemma}
\label{lem:fixingtuple}
    Let $d',t$ be positive integers such that $d' \ge 3$, let $W \subseteq V$ with $|W|\ge 3$, let $\rho \in \cD(W,d')$, let $\gamma_\unsat,\gamma_\fix,\gamma_\rate \in [0,1]$, and let $u \in W$ such that $u$ is  $(\rho,\gamma_\unsat,\gamma_\fix,\gamma_\rate)$-fixable. Then
    \begin{equation*}
        \Pr_{\cT \sim W^{2t}} \left[ \cT \text{ \upshape is $\left(\rho,u,\gamma_\unsat,\gamma_\fix,\frac{\gamma_\rate}{2} \right)$-fixing} \right] \ge 1-\eul^{-\gamma_\rate \cdot t/8}-\frac{t}{|W|}  \, .
    \end{equation*}
\end{lemma}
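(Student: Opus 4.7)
The plan is to combine a simple union bound with a multiplicative Chernoff bound. For each $i \in [t]$, let $X_i \in \{0,1\}$ denote the indicator that the pair $(v_i, w_i)$ satisfies all three conditions in the definition of a fixing tuple relative to $u$: namely (a) $|\{u, v_i, w_i\}| = 3$, (b) the constraint $P_{\{u, v_i, w_i\}}$ is violated under $\rho_{\{u, v_i, w_i\}}$ with probability at most $\gamma_\unsat$, and (c) the literals of $v_i, w_i$ agree under $\rho_{\{v_i, w_i\}}$ with probability at least $\gamma_\fix$. The tuple $\cT$ is $(\rho, u, \gamma_\unsat, \gamma_\fix, \gamma_\rate/2)$-fixing if and only if $\sum_{i=1}^t X_i \ge \gamma_\rate t / 2$, so it suffices to upper bound the probability of the complementary event.

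The first step is to rule out ``pair coincidences''. Let $\cE$ be the event that $v_i \neq w_i$ for every $i \in [t]$. Since the $2t$ variables are i.i.d.\ uniform over $W$, we have $\Pr[v_i = w_i] = 1/|W|$ for each $i$, so a union bound gives $\Pr[\cE^c] \le t / |W|$. The reason to treat this event separately is that condition (a) automatically fails when $v_i = w_i$, forcing $X_i = 0$ and pulling $\E[X_i]$ slightly below $\gamma_\rate$; decoupling this contingency avoids propagating the multiplicative loss $(1-1/|W|)$ into the exponential tail.

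The second step is to apply Chernoff conditional on $\cE$. Conditioned on $\cE$, the pairs $(v_i,w_i)$ remain mutually independent, and each ordered pair is uniform over distinct elements of $W$, so each unordered pair $\{v_i, w_i\}$ is uniform over $\binom{W}{2}$. Since $u$ is $(\rho, \gamma_\unsat, \gamma_\fix, \gamma_\rate)$-fixable, conditions (a)--(c) hold for at least a $\gamma_\rate$ fraction of $\binom{W}{2}$, so $\E[X_i \mid \cE] \ge \gamma_\rate$ and the $X_i$'s are still independent. Applying the standard multiplicative Chernoff lower-tail bound with relative deviation $\delta = 1/2$ to $\sum_i X_i$, whose conditional mean is at least $\gamma_\rate t$, yields
$$\Pr\!\left[\sum_i X_i < \gamma_\rate t / 2 \,\middle|\, \cE\right] \le \exp\!\left(-\gamma_\rate t / 8\right).$$

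Combining the two steps by a union bound,
$$\Pr\!\left[\cT \text{ is not fixing}\right] \le \Pr[\cE^c] + \Pr\!\left[\sum_i X_i < \gamma_\rate t / 2 \,\middle|\, \cE\right] \le \frac{t}{|W|} + \eul^{-\gamma_\rate t / 8},$$
which matches the claimed bound. There is no substantial obstacle here: the only conceptual point is the bookkeeping around $\cE$, which cleanly separates the combinatorial ``pair coincidence'' loss from the concentration step and keeps the exponential tail sharp.
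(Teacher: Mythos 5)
Your proposal is correct and follows essentially the same route as the paper: condition on the event that no sampled pair has $v_i=w_i$ (costing $t/|W|$ by a union bound), observe that conditionally each unordered pair is uniform over $\binom{W}{2}$ so the fixability hypothesis gives conditional mean at least $\gamma_\rate t$, and apply the multiplicative Chernoff lower tail with relative deviation $1/2$ to get $\eul^{-\gamma_\rate t/8}$. Your write-up just makes explicit the bookkeeping that the paper leaves terse.
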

\begin{proof}
Let us identify each $\cT \in W^{2t}$ with a tuple of $t$ pairs $\cT=((v_1,w_1),\dots,(v_t,w_t))$ over $W$. If we sample $\cT \sim W^{2t}$, then the probability that there exists $i \in [t]$ such that $v_i=w_i$ is at most $t/|W|$.  Conditioning on this event and taking a union bound, we conclude the proof by bounding
    \begin{align*}
         \Pr_{\cT \sim \binom{W}{2}^t} \left[ \cT \text{ is not $\left(\rho,u,\gamma_\unsat,\gamma_\fix,\frac{\gamma_\rate}{2} \right)$-fixing}  \right] \le \eul^{-\gamma_\rate \cdot t/8} \, ,
    \end{align*}
    which follows from the Chernoff bound.
\end{proof}

\noindent
For technical reasons, it is useful to also introduce the notion of a \textit{fixing tuple-assignment pair}. For a given variable $u$, this is essentially a fixing tuple together with an assignment to its variables that witnesses its fixing capabilities with respect to $u$. Specifically, instead of just saying that for $\gamma_\rate' t$ pairs in the collection, the probability that the corresponding literals take the same value in the constraint with $u$ is $\gamma_\fix$, we write down an assignment to the $t$ pairs such that for $\gamma_\rate''=\gamma_\rate' \gamma_\fix t$ pairs in the collection, the corresponding literals take the same value in the constraint with $u$. This is formalized in the following definition.

\begin{definition}[Fixing tuple-assignment pair]
     Let $d',t$ be positive integers such that $d' \ge 3$, let $W \subseteq V$ with $|W|\ge 3$, let $\rho \in \cD(W,d')$, let $\cT \in W^{2t}$ be a tuple of $t$ pairs $\cT=((v_1,w_1),\dots,(v_t,w_t))$ over $W$, and let $\beta \in \{0,1\}^T$. For any $\gamma_\unsat,\gamma_\rate'' \in [0,1]$ and any $u\in W$, the tuple-assignment pair $(\cT,\beta)$ is $(\rho,u,\gamma_\unsat,\gamma_\rate'')$-fixing if
    \begin{equation*}
        \Pr_{i \sim [t]}
        \left[
            \begin{array}{c}
                |\{u,v_i,w_i\}|=k \\
                \text{ \upshape and } \\
                \Pr_{\sigma \sim \rho_{\{u,v_i,w_i\} }}\left[P_{\{u,v_i,w_i\}}(\sigma)=0\right] \le \gamma_\unsat \\
                \text{ \upshape and } \\
                P^{v_i}_{\{u,v_i,w_i\}}(\beta_{v_i})=P^{w_i}_{\{u,v_i,w_i\}}(\beta_{w_i})
            \end{array}
        \right]
        \ge \gamma_\rate'' \, .
    \end{equation*}
\end{definition}

\noindent
Given a fixing tuple, we can attach it with an assignment to obtain a fixing tuple-assignment pair by sampling the assignment to each pair $v,w$ in the tuple independently. A Chernoff bound would give a success probability of $1-\eul^{-\Omega(t)}$ (which again is $1-1/\polylog(n)$ if $t=O(\log \log n)$). However, we can only afford to sampled one assignment from the pseudodistribution we are given, for reasons mentioned in \Cref{subsec:techoverview_algo}. This is the point where we assume that the probability argument of \Cref{lem:tvd} holds: this allows us to sample a collective assignment from the local distribution of the collection of variables, as opposed to sampling from the local distribution of each of its pairs independently.

\begin{lemma}
\label{lem:fixingpair}
    Let $d',t$ be positive integers such that $d' \ge 2t+3$, let $W \subseteq V$ with $|W|\ge 3$, let $\cT \in W^{2t}$ such that $|T|=2t$, let $\rho \in \cD(W,d')$, let $\eta \in [0,1]$ such that  \smash{$\|\rho_T - \otimes_{v \in T}\rho_{\{v\}}\|_1 \le \eta$}, let $\gamma_\unsat,\gamma_\fix,\gamma_\rate' \in [0,1]$, and let \smash{$u \in W$} such that $\cT$ is $(\rho,u,\gamma_\unsat,\gamma_\fix,\gamma_\rate')$-fixing. Then
    \begin{equation*}
        \Pr_{\beta \sim \rho_T} \left[ (\cT,\beta) \text{ \upshape is $\left(\rho,u,\gamma_\unsat,\frac{\gamma_\rate' \cdot \gamma_\fix^2}{8}\right)$-fixing} \right] \ge 1  - \eul^{-\frac{\gamma_\fix^2}{32}\gamma_\rate' \cdot t}-\eta \, .
    \end{equation*}
\end{lemma}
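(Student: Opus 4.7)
}
My plan is to couple $\beta \sim \rho_T$ with $\beta' \sim \otimes_{v \in T}\rho_{\{v\}}$ using the TV-distance hypothesis, carry out the analysis under the product distribution (where the $t$ pairs of $\cT$ become fully independent, since $|T|=2t$ forces all $2t$ entries of $\cT$ to be distinct), and then transfer back at a cost of $\eta$ in probability. Since the event that $(\cT,\beta)$ is $(\rho,u,\gamma_\unsat,\gamma_\rate'\gamma_\fix^2/8)$-fixing depends on $\beta$ only, this transfer is immediate:
\begin{equation*}
\Pr_{\beta \sim \rho_T}[(\cT,\beta)\text{ is fixing}] \ge \Pr_{\beta' \sim \otimes_{v \in T}\rho_{\{v\}}}[(\cT,\beta')\text{ is fixing}] - \eta.
\end{equation*}

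The technical heart of the argument is a joint-to-product inequality for ``matching literals''. For each $i \in [t]$ set $s_{ab} = \rho_{\{v_i,w_i\}}(a,b)$. Since the polarity bijections $P^{v_i}_{\{u,v_i,w_i\}}$ and $P^{w_i}_{\{u,v_i,w_i\}}$ are each the identity or the negation, the matching event becomes either $\beta_{v_i}=\beta_{w_i}$ (joint probability $s_{00}+s_{11}$) or $\beta_{v_i}\neq\beta_{w_i}$ (joint probability $s_{01}+s_{10}$). An elementary expansion of the product-of-marginals matching probability $pq+(1-p)(1-q)$ (or $p(1-q)+(1-p)q$ in the opposite-polarity case) shows that it dominates $s_{00}^2+s_{11}^2$ (respectively $s_{01}^2+s_{10}^2$), and the AM-QM inequality then gives that either expression is at least half the square of the joint matching probability. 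Let $I \subseteq [t]$ be the ``good'' set of indices witnessing the $(\rho,u,\gamma_\unsat,\gamma_\fix,\gamma_\rate')$-fixing property of $\cT$, so that $|I| \ge \gamma_\rate' t$; for every $i \in I$ the product matching probability is then at least $\gamma_\fix^2/2$. This AM-QM step is the source of the $\gamma_\fix^2$ (rather than $\gamma_\fix$) factor in the conclusion, and I expect it to be the main technical obstacle to get right, since everything else is a routine combination of Chernoff and TV-distance transfer.

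Letting $Y_i$ be the indicator of the matching event at index $i$, the $\{Y_i\}_{i \in I}$ are mutually independent Bernoulli variables under $\beta' \sim \otimes_{v \in T}\rho_{\{v\}}$ (because they are functions of disjoint coordinates of $\beta'$) with means at least $\gamma_\fix^2/2$, so $\E[\sum_{i \in I} Y_i] \ge \gamma_\rate'\gamma_\fix^2 t/2$. A standard multiplicative Chernoff lower-tail bound then gives
\begin{equation*}
\Pr_{\beta'}\!\left[\sum_{i \in I} Y_i < \frac{\gamma_\rate'\gamma_\fix^2 t}{8}\right] \le e^{-\gamma_\rate'\gamma_\fix^2 t/32}.
\end{equation*}
Whenever this complement holds, at least $\gamma_\rate'\gamma_\fix^2 t/8$ indices $i \in [t]$ simultaneously satisfy all three conditions defining a $(\rho,u,\gamma_\unsat,\gamma_\rate'\gamma_\fix^2/8)$-fixing tuple-assignment pair: the first two conditions hold automatically for every $i \in I$ by definition, and $Y_i=1$ supplies the third. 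Combining with the TV transfer displayed above yields the stated bound.
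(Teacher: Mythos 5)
Your proposal is correct and follows essentially the same route as the paper: restrict to the good index set $I$ with $|I|\ge\gamma_\rate' t$, lower-bound the per-index matching probability under the product distribution $\otimes_{v\in T}\rho_{\{v\}}$ (using that $|T|=2t$ makes the pairs disjoint, hence the indicators independent), apply a Chernoff lower tail, and transfer back to $\rho_T$ at cost $\eta$ via the total variation hypothesis. The only difference is in the joint-to-product step, where the paper pigeonholes onto a single diagonal cell $b(i)$ retaining mass $\ge\gamma_\fix/2$ and squares the marginals to get $\gamma_\fix^2/4$, whereas your inequality $pq+(1-p)(1-q)\ge s_{00}^2+s_{11}^2\ge(s_{00}+s_{11})^2/2$ gives the slightly sharper $\gamma_\fix^2/2$; both comfortably yield the stated constants.
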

\begin{proof}
    Let us identify $\cT$ with a tuple of $t$ pairs over $W$ as $\cT=((v_1,w_1),\dots,(v_t,w_t))$. Since we assume that $\cT$ is $(\rho,u,\gamma_\unsat,\gamma_\fix,{\gamma_\rate}')$-fixing, we know that there is a set $I \subseteq [t]$ with $|I| \ge \gamma_\rate'  \cdot t$ such that for every $i \in I$ one has $|\{u,v_i,w_i\}|=k$,  $\Pr_{\sigma \sim \rho_{\{u,v_i,w_i\}}}[P_{\{u,v_i,w_i\}}(\sigma)=0] \le \gamma_\unsat$, and \smash{$\Pr_{\beta \sim \rho_{\{v_i,w_i\}}}[P^{v_i}_{\{u,v_i,w_i\}}(\beta_{v_i})=P^{w_i}_{\{u,v_i,w_i\}}(\beta_{w_i})] \ge \gamma_\fix$}. Hence, for each $i \in I$ there exists $b(i) \in \{0,1\}$ such that
    \begin{equation*}
        \Pr_{\beta \sim \rho_{\{v_i,w_i\}}}\left[(P^{v_i}_{\{u,v_i,w_i\}}(\beta_{v_i}), P^{w_i}_{\{u,v_i,w_i\}}(\beta_{w_i}))=(b(i),b(i))\right] \ge \frac{\gamma_\fix}{2} \, .
    \end{equation*}
    Then, observe that for each $i \in I$ one has
    \begin{equation*}
        \Pr_{\beta \sim \rho_{\{v_i\}} \otimes \rho_{\{w_i\}}}\left[P^{v_i}_{\{u,v_i,w_i\}}(\beta_{v_i})= P^{w_i}_{\{u,v_i,w_i\}}(\beta_{w_i})\right] \ge \frac{\gamma_\fix^2}{4} \, .
    \end{equation*}
    Together with the Chernoff bound, this gives
    \begin{equation*}
        \Pr_{\beta \sim \otimes_{i \in I} \left(\rho_{\{v_i\}} \otimes \rho_{\{w_i\}}\right)}\left[\left|\left\{i \in I: \, P^{v_i}_{\{u,v_i,w_i\}}(\beta_{v_i})= P^{w_i}_{\{u,v_i,w_i\}}(\beta_{w_i})\right\}\right| < \frac{\gamma_\fix^2}{8}|I|\right] \le   \eul^{-\frac{\gamma_\fix^2}{32}|I|}\, .
    \end{equation*}
    Recalling that $|I| \ge \gamma_\rate' \cdot t$ and using \smash{$\|\rho_T - \otimes_{v \in T}\rho_{\{v\}}\|_1 \le \eta$} we get the claim.
\end{proof}

\noindent
Now we know that a given a fixing tuple, we can obtain a fixing tuple-assignment pair with good probability. We are now in shape to formalize intuition sketched at the beginning of this section: if a tuple-assignment pair $\cT,\beta$ is fixing for a variable $u$ with good probability, then with good probability $u$ is almost fixed in the pseudodistribution obtained by conditioning on the variables in the tuple taking the assignment $\beta$.

\begin{lemma}
\label{lem:fixablegetsnearlyfixed}
     Let $d',t$ be positive integers such that $d' \ge 2t+3$, let $W \subseteq V$ with $|W|\ge 3$, let $\cT \in W^{2t}$ such that $|T|=2t$, let $\rho \in \cD(W,d')$, let $\gamma_\unsat,\gamma_\fix,\gamma_\rate'',\zeta \in [0,1]$, and let  \smash{$u \in W$} such that $\Pr_{\beta \sim \rho_T} [ (\cT,\beta) \text{ \upshape is $(\rho,u,\gamma_\unsat,\gamma''_\rate)$-fixing}] \ge 1-\zeta$. Then, for any $\tau >0$ one has
    \begin{equation*}
        \Pr_{\beta \sim \rho_T} \left[ \min_{b \in \{0,1\}} \,\, \Pr_{\alpha \sim (\rho|_{T \gets \beta})_{\{u\}}}\left[\alpha=b\right] > \tau \cdot \gamma_\unsat \right] \le \frac{1}{(1-\zeta)\tau \gamma''_\rate } +\zeta \, .
    \end{equation*}
\end{lemma}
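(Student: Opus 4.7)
The plan is to exploit the structure of NAE constraints to identify the conditional marginal $(\rho|_{T\gets\beta})_{\{u\}}$ with the conditional violation probability of each ``good'' clause, then average over $\beta$ using the consistency of the pseudodistribution and apply Markov's inequality. If $u\in T$ then the bound is trivial (since $u$ is integrally fixed in $\rho|_{T\gets\beta}$), so I assume $u\notin T$. Fix an index $i\in[t]$ with $|\{u,v_i,w_i\}|=3$, and suppose $\beta\in\{0,1\}^T$ satisfies $P^{v_i}_{\{u,v_i,w_i\}}(\beta_{v_i})=P^{w_i}_{\{u,v_i,w_i\}}(\beta_{w_i})=c_i$ for some $c_i\in\{0,1\}$. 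Since after conditioning both literals $P^{v_i}$ and $P^{w_i}$ evaluate to $c_i$, the NAE clause $P_{\{u,v_i,w_i\}}$ is violated under $(\rho|_{T\gets\beta})_{\{u,v_i,w_i\}}$ if and only if $u$ takes the unique bit $b_i\in\{0,1\}$ with $P^u_{\{u,v_i,w_i\}}(b_i)=c_i$. Writing
\begin{equation*}
    \tilde{q}_i(\beta):=\Pr_{\sigma\sim(\rho|_{T\gets\beta})_{\{u,v_i,w_i\}}}\left[P_{\{u,v_i,w_i\}}(\sigma)=0\right]\,,
\end{equation*}
this identification gives $\tilde{q}_i(\beta)=(\rho|_{T\gets\beta})_{\{u\}}(b_i)$, and in particular $\min_{b\in\{0,1\}}(\rho|_{T\gets\beta})_{\{u\}}(b)\le \tilde{q}_i(\beta)$.

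The core computation is to establish $\ex_{\beta\sim\rho_T}[\tilde{q}_i(\beta)]\le \gamma_\unsat$ whenever $i$ satisfies the unsatisfiability part of the fixing condition. Unrolling the definition of the conditioned local distribution, the $\rho_T(\beta)$ weight of the outer expectation cancels against the denominator in $(\rho|_{T\gets\beta})_{\{u,v_i,w_i\}}(\sigma)=\rho_{T\cup\{u\}}(\sigma\oplus\beta)/\rho_T(\beta)$, yielding
\begin{equation*}
    \ex_{\beta\sim\rho_T}[\tilde{q}_i(\beta)]=\sum_{\sigma:P(\sigma)=0}\,\,\sum_{\beta:\beta_{\{v_i,w_i\}}=\sigma_{\{v_i,w_i\}}}\rho_{T\cup\{u\}}(\sigma\oplus\beta)=\sum_{\sigma:P(\sigma)=0}\rho_{\{u,v_i,w_i\}}(\sigma)\le \gamma_\unsat\,,
\end{equation*}
where the second equality follows by marginalizing out the coordinates of $T\setminus\{v_i,w_i\}$ via the consistency of the pseudodistribution.

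For every $\beta$ making $(\cT,\beta)$ be $(\rho,u,\gamma_\unsat,\gamma_\rate'')$-fixing, the set $G(\beta)\subseteq[t]$ of good indices has size at least $\gamma_\rate''\cdot t$, and for each $i\in G(\beta)$ both $\min_b(\rho|_{T\gets\beta})_{\{u\}}(b)\le \tilde{q}_i(\beta)$ and the unsatisfiability condition at $i$ hold. Averaging over $G(\beta)$ gives
\begin{equation*}
    \min_{b\in\{0,1\}}(\rho|_{T\gets\beta})_{\{u\}}(b)\le \frac{1}{\gamma_\rate''\cdot t}\sum_{i=1}^{t}\tilde{q}_i(\beta)\cdot\mathbbm{1}\left[i \text{ satisfies the unsat cond}\right]\,.
\end{equation*}
Taking expectations conditional on the fixing event (which has probability at least $1-\zeta$) and combining with the per-index bound from the previous paragraph, one obtains $\ex_\beta\!\left[\min_b(\rho|_{T\gets\beta})_{\{u\}}(b)\mid\text{fixing}\right]\le \gamma_\unsat/((1-\zeta)\gamma_\rate'')$. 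Markov's inequality then yields $\Pr[\min_b>\tau\gamma_\unsat\mid\text{fixing}]\le 1/((1-\zeta)\tau\gamma_\rate'')$, and a union bound with the non-fixing $\beta$'s (total mass at most $\zeta$) concludes the proof.

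The most delicate step is the marginalization identity for $\ex_\beta[\tilde{q}_i(\beta)]$: it relies on the cancellation of $\rho_T(\beta)$ with the denominator of the conditional distribution, followed by a careful use of the pseudodistribution's consistency to collapse the sum over $T\setminus\{v_i,w_i\}$. Once this identity is in place, the remainder is a routine averaging-plus-Markov argument.
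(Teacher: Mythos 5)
Your proposal is correct and follows essentially the same route as the paper's proof: bound $\min_b (\rho|_{T\gets\beta})_{\{u\}}(b)$ by the conditional violation probability of each good clause, use the tower property of the pseudodistribution to show the expectation over $\beta$ of that conditional violation probability equals the unconditional one (hence is at most $\gamma_\unsat$), average over the $\ge \gamma''_\rate t$ good indices, and finish with Markov plus a union bound over the non-fixing $\beta$'s. The only cosmetic differences are that you spell out the marginalization identity explicitly and observe an exact equality $\tilde q_i(\beta)=(\rho|_{T\gets\beta})_{\{u\}}(b_i)$ where the paper settles for the inequality it needs.
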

\begin{proof}
    Again, let us identify $\cT \in W^{2t}$ with a tuple of $t$ pairs $\cT=((v_1,w_1),\dots,(v_t,w_t))$ over~$W$, and let $I \subseteq [t]$ be the set of $i \in [t]$ such that $|\{u,v_i,w_i\}|=3$ and $\Pr_{\sigma \sim \rho_{\{u,v_i,w_i\}}}[P_{\{u,v_i,w_i\}}(\sigma)=0] \le \gamma_\unsat$. Now, on the one hand we have
    \begin{align}
    \label{eq:ub}
           \sum_{i \in I}\Pr_{\sigma \sim \rho_{\{u,v_i,w_i\}}}\left[P_{\{u,v_i,w_i\}}(\sigma)=0 \right] \le \gamma_\unsat \cdot t \, ,
    \end{align}
    and also
    \begin{align}
    \label{eq:conditioning}
          \ex_{\beta \sim \rho_T} \left[ \Pr_{\sigma \sim (\rho|_{T \gets \beta})_{\{u,v_i,w_i\}}}\left[P_{\{u,v_i,w_i\}}(\sigma)=0 \right] \right] = \Pr_{\sigma \sim \rho_{\{u,v_i,w_i\}}}\left[P_{\{u,v_i,w_i\}}(\sigma)=0 \right]  \, .
    \end{align}
    On the other hand, for any $\beta \in \supp(\rho_T)$ and any $i \in [t]$ such that $P^{v_i}_{\{u,v_i,w_i\}}(\beta_{v_i})=P^{w_i}_{\{u,v_i,w_i\}}(\beta_{w_i})$, we have that there is \smash{$\hat{b}(\beta,i) \in \{0,1\}$} satisfying 
    \begin{equation*}
        \left(P^{u}_{\{u,v_i,w_i\}}( \hat{b}(\beta,i) ), \, P^{v_i}_{\{u,v_i,w_i\}}(\beta_{v_i}), \, P^{w_i}_{\{u,v_i,w_i\}}(\beta_{w_i})\right) = (c,c,c) \quad \text{for some $c \in \{0,1\}$.}
    \end{equation*}
    This implies that for any $\beta \in \supp(\rho_T)$ and any $i \in [t]$ such that $P^{v_i}_{\{u,v_i,w_i\}}(\beta_{v_i})=P^{w_i}_{\{u,v_i,w_i\}}(\beta_{w_i})$, we have
    \begin{align*}
        \min_{b \in \{0,1\}} \, \, \Pr_{\alpha \sim (\rho|_{T \gets \beta})_{\{u\}}}\left[\alpha=b\right] & \le
        \Pr_{\alpha \sim (\rho|_{T \gets \beta})_{\{u\}}}\left[\alpha=\hat{b}(\beta,i)\right] \\
        & = \Pr_{\sigma \sim (\rho|_{T \gets \beta})_{\{u, v_i,w_i\}}}\left[\sigma_u=\hat{b}(\beta,i)\right] \\
        & \le \Pr_{\sigma \sim (\rho|_{T \gets \beta})_{\{u, v_i,w_i\}}}\left[P_{\{u, v_i,w_i\}}(\sigma)=0\right] \, .
    \end{align*}
    In particular, if we assume that $(\cT,\beta)$ is $(\rho,u,\gamma_\unsat,\gamma''_\rate)$-fixing, we know that there are at least $\gamma''_\rate \cdot t$ indices $i \in [t]$ for which the above bound holds, since there are at least $\gamma''_\rate \cdot t$ indices $i \in [t]$ such that \smash{$P^{v_i}_{\{u,v_i,w_i\}}(\beta_{v_i})=P^{w_i}_{\{u,v_i,w_i\}}(\beta_{w_i})$}. Hence, conditioning on $(\cT,\beta)$ being \smash{$(\rho,u,\gamma_\unsat,\gamma''_\rate)$}-fixing and combining with observations~\eqref{eq:ub} and~\eqref{eq:conditioning} we get
    \begin{equation*}
        (1-\zeta)\gamma''_\rate t \cdot \ex_{\beta \sim \rho_T }\left[\min_{b \in \{0,1\}} \,\, \Pr_{\alpha \sim (\rho|_{T \gets \beta})_{\{u\}}}\left[\alpha=b\right] \, \middle| \, \text{$(\cT,\beta)$ is $(\rho,u,\gamma_\unsat,\gamma''_\rate)$-fixing} \right] \le \gamma_\unsat  t \, .
    \end{equation*}
    \noindent
    The claim then follows by conditioning and Markov's inequality with a union bound.
\end{proof}

\noindent
We can finally show the main lemma of the section.

\begin{proof}[Proof of \Cref{lem:condtofix}]
    Hereafter we fix $r' \in \{0,1,\dots,r\}$ such that
    \begin{equation*}
        \Pr_{\substack{\cS \sim W^{r'}, \, \sigma \sim  \rho_{S}, \\ \cT \sim W^{2t}}} \left[\left\|(\rho|_{S \gets \sigma})_T - \otimes_{v \in T}(\rho|_{S \gets \sigma})_{\{v\}}\right\|_1 > 2\cdot \frac{3^{2t/3}}{ r^{1/3}}\right] \le \frac{3^{2t/3}}{ r^{1/3}} \, ,
    \end{equation*}
    which is guaranteed to exist by \Cref{lem:tvd}.  For convenience, we set
    \begin{equation*}
        \tau'=\sqrt{\tau}/2 \, ,\quad \xi= \tau' \cdot \val(\cI[W],\mu)\, , \quad  \gamma_\unsat=2\xi  \, , \quad \gamma_\fix=\frac{1}{12}\, , \quad \gamma_\rate =\frac{1}{24} \, ,
    \end{equation*}
    and
    \begin{equation*}
         \eta =  2\cdot \frac{3^{2t/3}}{ r^{1/3}} \, , \quad \chi=\eul^{-\gamma_\rate \cdot t/8}+\frac{t}{|W|}  \, , \quad \zeta=\eul^{-\frac{\gamma_\fix^2}{64}\gamma_\rate \cdot t}+\eta\, .
    \end{equation*}
    For any $ \cS \in W^{r'}, \, \sigma \in \supp(\rho_S)$, we define
    \begin{equation*}
        \rho(S,\sigma) = \mu|_{S \gets \sigma} \, ,
    \end{equation*}
    and also define the set of fixable variables
    \begin{equation*}
        X(S,\sigma) = \left\{u \in W : \, u \text{ is $\left(\rho(S,\sigma),\gamma_\unsat,\gamma_\fix,\gamma_\rate\right)$-fixable} \right\} \, .
    \end{equation*}
    For any $ \cS \in W^{r'}, \, \sigma \in \supp(\rho_S), \, \cT \in W^{2t}$ we define the of variables for which $\cT$ is fixing
    \begin{equation*}
        X^*(S,\sigma,\cT) = \left\{u \in X(S,\sigma) : \, \cT \text{ is $\left(\rho(S,\sigma),u,\gamma_\unsat,\gamma_\fix,\frac{\gamma_\rate}{2}\right)$-fixing} \right\} \, .
    \end{equation*}
    With this notation in place we can easily prove the following claim, which establishes sufficient conditions for the random choice of $\cS,\sigma,\cT,\beta$ to get a large set $F_W(\rho(S,\sigma)|_{T \gets \beta}, \, \tau'  \gamma_\unsat )$.

    \begin{claim}
        \label{claim:fixedchoice}
        Let $ \cS \in W^{r'}, \, \sigma \in \supp(\rho_S)$ such that
    \begin{equation*}
        \val(\cI[W],\rho(S,\sigma)) \le \tau' \cdot \val(\cI[W],\mu)\, .
    \end{equation*}
    Let $ \cT \in W^{2t}$ such that
    \begin{equation*}
        | X^*(S,\sigma,\cT)| \ge \frac{1}{2}|X(S,\sigma)| \, , \quad |T|=2t \, , \quad \left\|(\mu|_{S \gets \sigma})_T - \otimes_{v \in T}(\mu|_{S \gets \sigma})_{\{v\}}\right\|_1 \le \eta \, .
    \end{equation*}
    Let $\beta \in \supp(\rho(S,\sigma)_T)$ such that
    \begin{equation*}
        \left|\left\{u \in X^*(S,\sigma,\cT): \,  \min_{b \in \{0,1\}} \,\, \Pr_{\alpha \sim (\rho(S,\sigma)|_{T \gets \beta})_{\{u\}}}\left[\alpha=b\right] > \tau' \cdot \gamma_\unsat \right\}\right| \le \frac{1}{2} \cdot \left|X^*(S,\sigma,\cT)\right| \, .
    \end{equation*}
    Then,
    \begin{align*}
         \left|F_W(\rho(S,\sigma)|_{T \gets \beta}, \, \tau' \cdot \gamma_\unsat )\right| \ge  \frac{1}{96} \left| W\right| \, .
    \end{align*}
    \end{claim}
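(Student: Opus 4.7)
The plan is to chain three size bounds through the nested family
$F_W(\rho(S,\sigma)|_{T \gets \beta}, \tau' \gamma_\unsat) \supseteq X^*(S,\sigma,\cT) \setminus B \subseteq X^*(S,\sigma,\cT) \subseteq X(S,\sigma)$,
where $B$ is a ``bad set'' to be defined, and at each inclusion one of the three hypotheses of the claim will contribute a factor of $1/2$. The only non-trivial input will be Lemma~\ref{lem:manyfixablevars}, invoked at the very first inclusion to seed the chain.

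Concretely, I would first apply Lemma~\ref{lem:manyfixablevars} to the pseudodistribution $\rho(S,\sigma) = \mu|_{S \gets \sigma}$ on $W$ with the threshold $\xi = \tau' \cdot \val(\cI[W],\mu)$. The precondition $\val(\cI[W], \rho(S,\sigma)) \le \xi$ is exactly the first hypothesis of the claim, while the requirement $\xi \in (0, 1/4]$ follows from the bound $\val(\cI[W], \mu) \le 1/(3\tau)$ assumed in the ambient Lemma~\ref{lem:condtofix} combined with $\tau' = \sqrt{\tau}/2$, giving $\xi \le 1/(6\sqrt{\tau}) \le 1/6$. Since the parameters $(\gamma_\unsat, \gamma_\fix, \gamma_\rate) = (2\xi, 1/12, 1/24)$ were defined to match the output of Lemma~\ref{lem:manyfixablevars}, this yields $|X(S,\sigma)| \ge |W|/24$. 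The second hypothesis of the claim then gives $|X^*(S,\sigma,\cT)| \ge |X(S,\sigma)|/2 \ge |W|/48$.

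Finally, I would define the bad set
$B = \{u \in X^*(S,\sigma,\cT): \min_{b \in \{0,1\}} \Pr_{\alpha \sim (\rho(S,\sigma)|_{T \gets \beta})_{\{u\}}}[\alpha = b] > \tau' \gamma_\unsat\}$.
The third hypothesis is precisely $|B| \le |X^*(S,\sigma,\cT)|/2$. For every $u \in X^*(S,\sigma,\cT) \setminus B$, the reverse inequality $\min_{b \in \{0,1\}} \Pr[\alpha = b] \le \tau' \gamma_\unsat$ places $u$ in $F_W(\rho(S,\sigma)|_{T \gets \beta}, \tau' \gamma_\unsat)$ by the definition of threshold-fixed. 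Combining the three inclusions, $|F_W(\cdot,\tau'\gamma_\unsat)| \ge |X^*(S,\sigma,\cT) \setminus B| \ge |W|/96$, which is the desired conclusion.

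The claim is essentially a bookkeeping step: the three hypotheses were designed so that each strips off a factor of $2$, and all the genuine content (the connection between unsatisfied constraints in a complete sub-instance and the existence of fixable variables) is already packaged inside Lemma~\ref{lem:manyfixablevars}. The only point of care is the numerical verification that $\xi$ lies in the admissible range $(0,1/4]$ of that lemma, which I handled above via the $\tau \ge 1$ assumption inherited from the ambient Lemma~\ref{lem:condtofix}.
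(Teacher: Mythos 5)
Your proof is correct and follows essentially the same route as the paper: invoke Lemma~\ref{lem:manyfixablevars} on $\rho(S,\sigma)$ (after checking $\xi\le 1/4$ via $\val(\cI[W],\mu)\le 1/(3\tau)$) to get $|X(S,\sigma)|\ge |W|/24$, then lose a factor of $2$ at each of the two remaining hypotheses to conclude $|F_W(\cdot,\tau'\gamma_\unsat)|\ge \frac12|X^*|\ge\frac14|X|\ge |W|/96$. Your explicit identification of the third hypothesis's complement with membership in $F_W$ is exactly the "requirement on $\cT,\beta$" step the paper leaves implicit.
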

    \begin{proof}
         Together with our assumption that $\val(\cI[W],\mu) \le 1/(3\tau) $, the requirement on $\cS,\sigma$ implies
    \begin{equation*}
        \val(\cI[W],\rho(S,\sigma)) \le \xi  \le \frac{1}{4} \, .
    \end{equation*}
    Thus, we can employ \Cref{lem:manyfixablevars} with $\rho=\rho(S,\sigma)$, which gives
    \begin{equation*}
        |X(S,\sigma)| \ge \frac{|W|}{24} \, .
    \end{equation*}
    Combining this with the requirement on $\cT,\beta$ we get that

    \begin{align*}
         \left|F_W(\rho(S,\sigma)|_{T \gets \beta}, \, \tau' \cdot \gamma_\unsat )\right| & \ge \frac{1}{2} \left| X^*(S,\sigma,\cT)\right| \\
         & \ge \frac{1}{4} \left| X(S,\sigma)\right| \\
         & \ge  \frac{1}{96} \left| W\right| \, .
    \end{align*}
    \end{proof}

   \noindent
    To conclude the lemma statement, all is left to do is showing that the above conditions on the choice of $\cS,\sigma,\cT,\beta$ in \Cref{claim:fixedchoice} hold with high probability. First, by Markov's inequality we have
    
    \begin{equation*}
    \Pr_{\cS \sim W^{r'}, \, \sigma \sim  \rho_{S}}\left[\val(\cI[W],\rho(S,\sigma)) > \tau' \cdot \val(\cI[W],\mu)\right] \le \frac{1}{\tau'} \, .
    \end{equation*}
    By our choice of $r'$ and definition of $\eta$ we also have
    \begin{equation*}
        \Pr_{\substack{\cS \sim W^{r'}, \, \sigma \sim  \rho_{S}, \\ \cT \sim W^{2t}}} \left[\left\|(\rho|_{S \gets \sigma})_T - \otimes_{v \in T}(\rho|_{S \gets \sigma})_{\{v\}}\right\|_1 > \eta \right] \le \frac{1}{2}\eta \, .
    \end{equation*}
    By an elementary bound, we also get
    \begin{equation*}
        \Pr_{\substack{ \cT \sim W^{2t}}} \left[|T| <2t \right] \le \frac{4t^2}{|W|} \, .
    \end{equation*}
    Using \Cref{lem:fixingtuple} with $d'=d-r' \ge 3, \, \rho=\rho(S,\sigma) \in \cD(W,d-r')$ together with Markov's inequality, we obtain
    \begin{equation*}
        \Pr_{\substack{\cS \sim W^{r'}, \, \sigma \sim  \rho_{S}, \\ \cT \sim W^{2t}}} \left[ |X(S,\sigma) \setminus X^*(S,\sigma,\cT)| > \frac{1}{2}|X(S,\sigma)|\right] \le 2\chi \, .
    \end{equation*}
    Finally, if we condition on $\cS,\sigma,\cT$ that satisfy these conditions, we can employ \Cref{lem:fixingpair} and \Cref{lem:fixablegetsnearlyfixed} with $d'=d-r' \ge 2t+3, \, \rho=\rho(S,\sigma) \in \cD(W,d-r'),\, \gamma_\rate'=\gamma_\rate/2$ and apply Markov's inequality to the latter;  then taking a union bound gives that the failure probability is bounded by
    \begin{align*}
        & \frac{1}{\tau'}+\frac{1}{2}\eta+\frac{4t^2}{|W|}+2\chi+2 \cdot \left(\frac{1}{(1-\zeta)\tau' \cdot \left( \gamma'_\rate \gamma_\fix^2/8\right) } +\zeta\right) \\
        \le & \frac{10^6}{\tau'}+5\cdot \frac{3^{2t/3}}{ r^{1/3}}+\frac{5t^2}{|W|}+3\cdot \eul^{-t/10^6} \, .
    \end{align*}
\end{proof}

	\section{Proof of the Thresholding Lemma}\label{sec:threshround}
        
        In this section, we prove \Cref{lemma:threshround}, restated below for convenience. The goal is to analyze the change in the LP value from the thresholding procedure of line \ref{step:threshround} in \Cref{alg:roundingalgo}.

        \threshround*

\noindent
        We seek to bound the increase in the aggregate value after rounding the $\theta$-fixed variables for $\theta\sim[\tau\delta, 2\tau\delta]$. The analysis considers each constraint individually. Each constraint, depending on which of the four classes $\cP_0\{V_\rU\},\cP_1\{V_\rU\},\cP_2\{V_\rU\},\cP_3\{V_\rU\}$ it belongs to, contributes to one of $\lp_0^{V_\rU}(\rho)$, $\lp_1^{V_\rU}(\rho)$, $\lp_2^{V_\rU}(\rho)$, $\lp_3^{V_\rU}(\rho)$ and hence has a different contribution to $A^{V_\rU}(\rho)$ (because the latter is a weighted sum). As we fix some variables to their ruling values, the constraints might change class, going from $\cP_j\{V_\rU\}$ to $\cP_i\{V_\rU'\}$ for some $i\leq j$, where $V_\rU'$ is the updated set of unfixed variables after rounding. As a consequence, the contribution of the constraint to $A^{V_\rU'}(\rho')$ might also be different. In the following proof, we show that for each of the four types of constraints, the new contribution is not much more than the starting one.
    
        \begin{proof}
            Recall that the aggregate value is defined as
            $$A^{V_\rU}(\rho)= \tau \cdot \log^3 n \cdot \lp^{V_\rU}_3(\rho) + \log^2 n \cdot \lp^{V_\rU}_2(\rho) + \log n \cdot \lp^{V_\rU}_1(\rho) + \lp^{V_\rU}_0(\rho) \, .$$
            We will refer to a constraint $P_S \in \cP_i\{V_\rU\}$ as an $i$-constraint, for each $i\in\{0,1,2,3\}$. Moreover, we adopt some short-hand notation for convenience, so for any $\theta \in [0,1/2)$ we call:
            \begin{itemize}
            	\item the set of $(\rho,\theta)$-fixed variables $F^\theta=F_{V_\rU}(\rho, \theta)$ and their ruling assignment $\beta^\theta=\omega^*_{V_\rU}(\rho,\theta)$;
            	\item the updated set of unfixed variables $V_\rU^\theta = V_\rU \setminus F^\theta$;
            	\item the pseudodistribution after fixing the $(\rho,\theta)$-fixed variables $\rho^\theta = \rho^{F^\theta\gets \beta^\theta}$;
            	\item the LP values  $\lp_i = \lp^{V_\rU}_i(\rho)$ and $\lp^\theta_i=\lp^{V_\rU^\theta}_i(\rho^\theta)$, for each $i\in\{0,1,2,3\}$;
            	\item the aggregate values $A(\rho)=A_\tau^{V_\rU}(\rho)$ and $A^\theta(\rho^\theta)=A_\tau^{V_\rU^\theta}(\rho^\theta)$;
                \item a constraint $P\in \cP_i\{V_U\}$ an $i$-constraint.
            \end{itemize}
            
            \noindent 
            The idea is to analyze the contribution to the $\lp^\theta_i$ from all the $j$-constraints. Such a constraint contributes to $\lp^\theta_i$ if and only if exactly $j-i$ variables are rounded, so that after rounding it moves from \smash{$\cP_j\{V_\rU\}$ to $\cP_i\{V_\rU^\theta\}$}. To do this, let us define $\Delta^\theta_{ji}$ as the contribution to $\lp_i^\theta$ from the $j$-constraints that end up in \smash{$\cP_i\{V_\rU^\theta\}$}, for all $i \le j$. More precisely, we let \begin{equation*}
                \Delta^\theta_{j,i}=\sum_{P_S \in \cP_i\{V_\rU^\theta\} \cap \cP_{j}\{V_\rU\}} \Pr_{\alpha \sim \rho^\theta}[P_S(\alpha)=0] \, ,
            \end{equation*}
            for all $i,j \in \{0,1,2,3\}$ with $i \le j$. Then, for all $i\in\{0,1,2,3\}$, we have
            $$\lp_i^\theta = \sum_{j=i}^{3} \Delta_{j,i} \, .$$
            \noindent
            In what follows, whenever $\theta$ is clear from context, we allow ourselves to simply the notation further as follows: for $F^\theta$, $\beta^\theta$ and $ \Delta^{\theta}_{ji}$  we drop the superscript, while for $V_\rU^\theta,\rho^\theta, \lp^\theta_i, A^\theta(\rho^\theta)$ we replace the superscript with a dash as  $V_\rU',\rho', \lp', A'(\rho')$. 
            
            Now, we start to individually bound the values of $\Delta^\theta_{j,i}$. Note that a trivial upper bound is $\Delta^\theta_{i,i} \leq \lp_{i}$, for all $i \in \{0,1,2,3\}$.
            
            \begin{claim}\label{lem:10}
                Let $\theta \in [0,1/5]$. Then, one has $\Delta^\theta_{1,0}\leq 2\cdot\lp_1$.
            \end{claim}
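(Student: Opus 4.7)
The plan is to argue constraint-by-constraint: I will show that each $P_S$ contributing to $\Delta^\theta_{1,0}$ also contributes at least $4/5$ to $\lp_1$, so that the total is at most $\tfrac{5}{4}\lp_1 \le 2\lp_1$. The key observation is that after the fixing step the restriction $\rho^\theta_S$ is fully deterministic, whence its mass on violating assignments is either $0$ or $1$.

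First I would fix a constraint $P_S \in \cP_1\{V_\rU\} \cap \cP_0\{V_\rU^\theta\}$, and let $u \in S \cap V_\rU$ be its unique originally-unfixed variable. Since $P_S \in \cP_0\{V_\rU^\theta\}$, this $u$ must have been rounded, so $u \in F^\theta$ with some ruling value $\beta_u = \beta^\theta_u$. The other two variables $v_1, v_2 \in S \cap V_\rF$ are already deterministic in $\rho$ with values $b_{v_1}, b_{v_2}$ by the hypothesis of the lemma. Unpacking the definition of $\rho^\theta = \rho^{F^\theta \gets \beta^\theta}$, one sees that $\rho^\theta_S$ places all its probability on the assignment $(\beta_u, b_{v_1}, b_{v_2})$, so this constraint contributes either $0$ (and is then ignorable) or exactly $1$ to $\Delta^\theta_{1,0}$.

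Next I would compare to the contribution of $P_S$ to $\lp_1$. By local consistency, since $v_1$ and $v_2$ are deterministic in $\rho$, one factors $\rho_S(\alpha_u,\alpha_{v_1},\alpha_{v_2}) = \rho_{\{u\}}(\alpha_u) \cdot \mathbbm{1}[\alpha_{v_1}=b_{v_1}] \cdot \mathbbm{1}[\alpha_{v_2}=b_{v_2}]$. For any $P_S$ that contributes $1$ to $\Delta^\theta_{1,0}$, the assignment $(\beta_u, b_{v_1}, b_{v_2})$ violates $P_S$, so its contribution to $\lp_1$ is at least $\rho_{\{u\}}(\beta_u) \ge 1-\theta \ge 4/5$, using $\theta \le 1/5$ together with the definition of the ruling value. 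Summing this pointwise comparison over all constraints counted by $\Delta^\theta_{1,0}$ yields $\Delta^\theta_{1,0} \le \tfrac{5}{4}\lp_1 \le 2\lp_1$.

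The main work here is really just definitional bookkeeping — justifying the factorization of $\rho_S$ via the marginalization property of the pseudodistribution and checking that $\rho^\theta_S$ is deterministic — after which the arithmetic collapses to a one-line per-constraint inequality.
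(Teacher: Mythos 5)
Your proof is correct and follows essentially the same route as the paper's: both argue constraint-by-constraint, split into the case where the ruling value satisfies $P_S$ (contributing $0$) and the case where it violates $P_S$ (in which the violating assignment already carries mass at least $1-\theta$ in $\rho_S$, so the constraint contributes at least $1-\theta \ge 4/5$ to $\lp_1$), and conclude $\Delta^\theta_{1,0}\le \frac{1}{1-\theta}\lp_1\le 2\lp_1$. The definitional bookkeeping you flag (determinism of the already-fixed variables and the resulting factorization of $\rho_S$) is exactly what the paper implicitly uses.
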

            \begin{proof}
                First note that any contribution to $\Delta_{1,0}$ can only come from the constraints $P_S$ in $\cP_1\{V_\rU\}$ each with exactly one variable in $V_\rU$ which gets rounded (which causes $P_S$ to move to $\cP_0\{V_\rU'\}$).
                
                Consider any such $1$-constraint $P_S$, where $v \in S$ is the only variable in $V_\rU$. It must then be the case that $v$ is $(\rho,\theta)$-fixed, i.e., either $\rho_{\{v\}}(0) \le \theta$ or $\rho_{\{v\}}(1) \le \theta$. Since $S\setminus \{v\} \subseteq V\setminus V_\rU$, this implies that either
                $$\Pr_{\alpha\sim\rho_{S}}[P_{S}(\alpha)=0]\leq \theta \quad \text{ or } \quad\Pr_{\alpha\sim\rho_{S}}[P_{S}(\alpha)=1]\leq \theta \, . $$
                The former case is trivial as it means that the ruling value of $v$ together with the already determined assignment to of $S\setminus \{v\}$ leads to an assignment that satisfies $P_S$, and thus contributes $0$ to $\lp'_0$.
                For the other case, it implies that the contribution of $P_{S}$ to $\lp_1$  is  $\Pr_{\alpha\sim\rho_{S}}[P_{S}(\alpha)=0] \ge 1-\theta$. Note that the contribution of $P_S$ to $\lp'_0$ is always upper bounded $1$, which is at most $ \frac{1}{1-\theta} \Pr_{\alpha\sim\rho_{S}}[P_{S}(\alpha)=0]$.  Thus, the total contribution of all the $1$-constraints to $\lp'_0$ is
                $$\Delta_{1,0}\leq\frac{1}{1-\theta}\lp_1\leq 2\cdot\lp_1\, .$$

                \end{proof}

                \begin{claim}
                \label{lem:11}
                    Let $\theta \in [0,1/5]$. Then, one has $\Delta^\theta_{2,0}\leq 2\cdot\lp_1$.
                \end{claim}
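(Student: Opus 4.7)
The plan is to follow the template of Claim~\ref{lem:10}. A $2$-constraint $P_S = P_{\{u,v,w\}}$ with $\{u,v\}=S\cap V_\rU$ and $w\in V_\rF$ can contribute to $\Delta^\theta_{2,0}$ only when both $u$ and $v$ lie in $F^\theta$, in which case they are rounded to their ruling values $\omega^*_u$ and $\omega^*_v$. If the resulting assignment $(\omega^*_u,\omega^*_v,\alpha_w)$ satisfies $P_S$, its contribution to $\lp_0'$ is zero, so I focus on the constraints where rounding violates $P_S$. By the NAE structure, such a violation is equivalent to $P_S^u(\omega^*_u) = P_S^v(\omega^*_v) = P_S^w(\alpha_w)$, i.e., the three literal values must coincide.

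Next, I would argue that in this case $P_S$ already contributed heavily to the original LP. Since $u$ and $v$ are $(\rho,\theta)$-fixed, each of the events $\{\alpha_u \neq \omega^*_u\}$ and $\{\alpha_v \neq \omega^*_v\}$ has $\rho_S$-probability at most $\theta$, so a union bound gives $\Pr_{\alpha\sim\rho_S}[P_S(\alpha)=0]\ge 1-2\theta$. Thus every violating $2$-constraint contributes at most $1$ to $\Delta^\theta_{2,0}$ but at least $1-2\theta$ to $\lp^{V_\rU}_2(\rho)$. Summing over all contributing constraints and using $\theta\in[0,1/5]$ yields the natural analogue of Claim~\ref{lem:10}:
$$\Delta^\theta_{2,0}\;\le\;\frac{1}{1-2\theta}\,\lp^{V_\rU}_2(\rho)\;\le\;2\,\lp^{V_\rU}_2(\rho),$$
which, although phrased in terms of $\lp_2$ rather than $\lp_1$, already suffices for the aggregate-value bookkeeping in Lemma~\ref{lemma:threshround} because $\lp_2$ is weighted by $\log^2 n$ in $A^{V_\rU}(\rho)$ while $\Delta^\theta_{2,0}$ feeds into $\lp_0'$ with weight $1$.

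The main obstacle is sharpening this to the exact statement of the claim, which replaces $\lp^{V_\rU}_2(\rho)$ by $\lp^{V_\rU}_1(\rho)$. I expect this step to exploit the completeness of $\cI$ via a charging argument: for each variable $u \in F^\theta$ with ruling value $\omega^*_u$, one identifies a suitable collection of $1$-constraints $P_{\{u,w_1,w_2\}}$ (with $w_1,w_2\in V_\rF$) whose polarities, combined with the already-fixed values $\alpha_{w_1},\alpha_{w_2}$, force them to be violated under the rounded assignment on $u$, and charges the violating $2$-constraints involving $u$ against these $1$-constraints. Making such a charging overcounting-free while preserving the multiplicative factor $2$ under $\theta\le 1/5$ is where I expect the delicate work to lie.
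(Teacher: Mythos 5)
Your core argument is correct and is exactly the paper's proof: a $2$-constraint contributes to $\Delta^\theta_{2,0}$ only if both its unfixed variables are $(\rho,\theta)$-fixed and their ruling values align all three literals, and in that case a union bound shows the constraint already contributed at least $1-2\theta$ to $\lp_2$, yielding $\Delta^\theta_{2,0}\le \frac{1}{1-2\theta}\lp_2\le 2\lp_2$. The ``$\lp_1$'' in the claim statement is a typo in the paper: the paper's own proof of this claim concludes $\Delta_{2,0}\le 2\cdot\lp_2$, and the downstream aggregation ($\lp_0'\le \lp_0+2\lp_1+2\lp_2+\cdots$) uses precisely the $2\lp_2$ bound, so your worry about sharpening to $\lp_1$ is moot and the charging argument you sketch in the last paragraph is not needed.
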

                \begin{proof}
                Similar to \Cref{lem:10}, for the analysis of $\Delta_{2,0}$ we individually consider the contribution from every constraint in $\cP_2\{V_\rU\}$ that ends up in $ \cP_0\{V_\rU'\}$, i.e., the $2$-constraints where both unfixed variables get rounded.
                
                Consider a $2$-constraint $P_{S}=P_{\{x,y,z\}}$, with $x,y \in V_\rU \cap V_\rU'$ and $z \notin V_\rU$. Let us call $b_x, b_y\in\{0,1\}$ the ruling values of $x$ and $y$ respectively, and also let $b_z \in \{0,1\}$ be such that $\rho_{\{z\}}(b_z)=1$. In the case when $P_S^x(b_x)\neq P_S^z(b_z)$ or $P_S^y(b_y)\neq P_S^z(b_z)$ we obtain an assignment that satisfies $P_S$, so the constraint does not contribute anything to $\lp'_0$. Hence, let us focus on the case when $P_S^x(b_x)= P_S^y(b_y)= P_S^z(b_z)$ and hence $P_S$ contributes $1$ to $\lp'_0$. We know that both $x$ and $y$ are $(\rho,\theta)$-fixed and therefore
                
                $$\Pr_{\alpha \sim \rho_{\{x\}}}[\alpha=1-b_x]\le \theta \quad \text{ and } \quad  \Pr_{\alpha\sim \rho_{\{y\}}}[\alpha=1-b_y] \le \theta \, . $$
                Using the union bound, this implies that the contribution of $P_S$ to $\lp_2$ can be lower-bounded as                \begin{align*}
                    \Pr_{\alpha \sim\rho_{S}}[P_{S}(\alpha)=0] = & 1-\Pr_{\alpha \sim\rho_{\{x,y\}}}[\alpha_x=1-b_x \, \mbox{ or } \, \alpha_y=1-b_y] \\
                    & \ge  1 - \Pr_{\alpha\sim \rho_{\{x\}}}[\alpha=1-b_x]-\Pr_{\alpha\sim \rho_{\{y\}}}[\alpha=1-b_y] \\
                    & \ge 1-2\theta
                \end{align*}
\noindent
                Thus, the total contribution from all such constraints is upper-bounded by
                $$\Delta_{2,0}\leq \frac{1}{1-2\theta}\lp_2\leq 2\cdot\lp_2\, . $$

            \end{proof}

            \begin{claim}\label{lem:3round}
                Let $\theta \in [0,2\tau\delta]$. Then, one has $\Delta^\theta_{3,i}\leq \lp_3 + 6\tau\delta \binom{|V_\rU|}{3}$ for all $i\in\{0,1,2\}$.
            \end{claim}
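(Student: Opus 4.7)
The goal is to bound, for each $i \in \{0,1,2\}$, the quantity $\Delta^\theta_{3,i} = \sum_{P_S \in \cP_3\{V_\rU\} \cap \cP_i\{V_\rU^\theta\}} \Pr_{\alpha \sim \rho^\theta_S}[P_S(\alpha) = 0]$ by $\lp_3 + 6 \tau \delta \binom{|V_\rU|}{3}$. My plan is to bound the contribution of each individual $3$-constraint $P_S$ that moves from $\cP_3\{V_\rU\}$ to $\cP_i\{V_\rU^\theta\}$, showing that its new violation probability exceeds the original violation probability $\Pr_{\alpha \sim \rho_S}[P_S(\alpha) = 0]$ by at most $(3-i)\theta$. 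Once this per-constraint estimate is in place, summing over at most $\binom{|V_\rU|}{3}$ constraints (a trivial bound on $|\cP_3\{V_\rU\}|$) and using $\theta \le 2\tau\delta$ along with $(3-i) \le 3$ yields the claim.

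To implement the per-constraint bound, I would fix such a $P_S$ with $S \subseteq V_\rU$, let $F \cap S = \{v_1, \dots, v_{3-i}\}$ be the variables that get rounded in the thresholding step, and let $b_{v_1}, \dots, b_{v_{3-i}}$ denote the corresponding ruling values. Unwinding the definition of $\rho^\theta = \rho^{F \gets \beta}$ on the local distribution over $S$, the new distribution $\rho^\theta_S$ is obtained from $\rho_S$ by relocating the total probability mass that $\rho_S$ placed on assignments inconsistent with $b_{v_1},\dots,b_{v_{3-i}}$ onto assignments that do agree with these ruling values (keeping the marginal on the remaining coordinates in $S \setminus F$ unchanged). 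Hence, the change in the probability of violating $P_S$ is at most the total mass of assignments with $\alpha_{v_j} \neq b_{v_j}$ for some $j$, which by a union bound is at most $\sum_{j=1}^{3-i} \rho_{\{v_j\}}(\neg b_{v_j}) \le (3-i)\theta$, where the inequality uses that each $v_j \in F$ is $(\rho, \theta)$-fixed with ruling value $b_{v_j}$.

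Putting everything together gives
\begin{align*}
\Delta^\theta_{3,i} &\le \sum_{P_S \in \cP_3\{V_\rU\} \cap \cP_i\{V_\rU^\theta\}} \left(\Pr_{\alpha \sim \rho_S}[P_S(\alpha) = 0] + (3-i)\theta\right) \\
&\le \lp_3 + (3-i)\theta \binom{|V_\rU|}{3} \\
&\le \lp_3 + 6\tau\delta \binom{|V_\rU|}{3},
\end{align*}
where the second inequality relaxes the sum to run over all of $\cP_3\{V_\rU\}$ and bounds the cardinality by $\binom{|V_\rU|}{3}$, and the last step uses $\theta \le 2\tau\delta$ and $3-i \le 3$.

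The only mildly delicate step is the per-constraint bound: one has to be careful that the fixing operation $\rho^{F\gets \beta}$ is not a conditioning operation but a marginalization-then-reinsertion, so the mass moved is captured precisely by the marginal probabilities $\rho_{\{v_j\}}(\neg b_{v_j})$ at the fixed variables rather than any joint conditional expressions. Once this is recognized, the union bound is straightforward; no deeper structural property of \nae constraints is needed here (in contrast to \Cref{lem:10,lem:11}, which exploited the constraint structure), since the bound is uniform over all $3$-constraints.
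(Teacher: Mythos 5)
Your proof is correct and follows essentially the same route as the paper's: a per-constraint bound showing that fixing the $3-i$ rounded variables to their ruling values changes the violation probability by at most the mass on assignments disagreeing with those ruling values, which a union bound over the marginals caps at $(3-i)\theta \le 6\tau\delta$, followed by summing over at most $\binom{|V_\rU|}{3}$ constraints. The paper phrases the per-constraint step as a case split on $\alpha_{\hat S}=\hat\beta$ versus $\alpha_{\hat S}\neq\hat\beta$ under $\rho_S$ (and uses the uniform bound $3\theta$ rather than $(3-i)\theta$), but this is the same estimate.
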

            \begin{proof}

                The idea here is that if we round a $(\rho,\theta)$-fixed variable then the contribution of its clauses after rounding it is increased by at most an additive term of $\theta$.

                More formally, consider a constraint $P_S \in \cP_3\{V_\rU\} \cap \cP_i\{V_\rU'\}$, which contributes  $\Pr_{\alpha \sim \rho_S}[P_S(\alpha)=0]$ to $\lp_3$. Let \smash{$\hat{S}= S\cap F$} be the $(\rho,\theta)$-fixed variables of $S$ that we round to their ruling assignment \smash{$\hat{\beta} = \beta_{\hat{S}}$}. Note that by assumption $|\hat{S}|=3-i$ with $i \in \{0,1,2\}$. Also, for any \smash{$\alpha\in \{0,1\}^{S\setminus \hat{S}}$}, let us denote by \smash{$\alpha\oplus \hat{\beta} \in \{0,1\}^S$} the vector assigning \smash{$\hat{\beta}$ to $\hat{S}$} and $\alpha$ to $S\setminus \hat{S}$. With this notation, we want to upper-bounded the contribution of $P_S$ to $\lp'_i$. First, observe that
                \begin{equation*}
                	\Pr_{\alpha\sim \rho'_{S}}[P_{S}(\alpha)=0] =\Pr_{\alpha\sim \rho'_{S\setminus\hat{S}}}[P_{S}(\alpha\oplus \hat{\beta})=0]  =\Pr_{\alpha \sim \rho_{S}}[P_{S}(\alpha_{S\setminus\hat{S}}\oplus \hat{\beta})=0]  \, ,
                \end{equation*}
                since the distribution $\rho'$ obtained after fixing preserves the marginals of the unfixed variables. Now, with basic calculations we get
                \begin{align*}
                	\Pr_{\alpha \sim \rho_{S}}[P_{S}(\alpha_{S\setminus\hat{S}}\oplus \hat{\beta})=0] & = \Pr_{\alpha \sim \rho_{S}}[P_{S}(\alpha_{S\setminus\hat{S}}\oplus \hat{\beta})=0, \, \alpha_{\hat{S}}=\hat{\beta}] + \Pr_{\alpha \sim \rho_{S}}[P_{S}(\alpha_{S\setminus\hat{S}}\oplus \hat{\beta})=0, \, \alpha_{\hat{S}}\neq\hat{\beta}] \\
                	& \le \Pr_{\alpha \sim \rho_{S}}[P_{S}(\alpha)=0] + \Pr_{\alpha \sim \rho_{S}}[ \alpha_{\hat{S}}\neq\hat{\beta}] \\
                	& \le \Pr_{\alpha \sim \rho_{S}}[P_{S}(\alpha)=0] + \sum_{v \in \hat{S}} \Pr_{\alpha \sim \rho_{\{v\}}}[ \alpha_{v}\neq\hat{\beta}_v]  \, .
                \end{align*}
                Combining the above bounds and recalling that each $v \in \hat{S}$ is $(\rho,\theta)$-fixed and that $\hat{\beta}_v$ is its ruling value we get
                \begin{equation*}
						\Pr_{\alpha\sim \rho'_{S}}[P_{S}(\alpha)=0] \le \Pr_{\alpha \sim \rho_{S}}[P_{S}(\alpha)=0] + 3\theta \, .
                \end{equation*}
    			\noindent
                Thus, we have that each $3$-constraint in $\cP_3\{V_\rU\}$, where $i\in\{0,1,2\}$ variables remain unfixed, contributes to $\lp'_i$ at most its contribution to $\lp_3$ plus $3\theta$. Thus, we get
                \begin{equation*}
                	\Delta_{3,i} \le \lp_3+3\theta \cdot \binom{|V_\rU|}{3} \, .
                \end{equation*}
            \end{proof}

            \noindent
            One interesting thing to notice from this proof is that the logic of the additive $O(\theta)$ contribution applies to constraints from any of $\cP_1\{V_\rU\},\cP_2\{V_\rU\},\cP_3\{V_\rU\}$. However, we would not get our desired bounds if we have an additive increase of $O(\theta)$ for each constraint form $\cP_1\{V_\rU\},$ or $\cP_2\{V_\rU\}$, as there need not be at most $O(|V_\rU|^3)$ many such clauses. For instance, the maximum number of $2$-constraints could be $|V\setminus V_\rU|\cdot \binom{|V_\rU|}{2}$. We fix this by randomizing the threshold $\theta$ and looking at the expected contribution of these clauses.
            
            \begin{claim}\label{lem:20round}
                One has $\E_{\theta \sim [\tau\delta, 2\tau\delta]}[\Delta^\theta_{2,1}]\leq 3\lp_2$.
            \end{claim}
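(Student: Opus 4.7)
The plan is to bound the expected contribution of each $2$-constraint $P_S$ to $\Delta^\theta_{2,1}$ by its contribution $c_S := \Pr_{\alpha \sim \rho_S}[P_S(\alpha) = 0]$ to $\lp_2$, and then sum. I would fix $P_{\{x,y,z\}} \in \cP_2\{V_\rU\}$ and label so that $p_x \le p_y$, where $p_v := \min_{b \in \{0,1\}} \rho_{\{v\}}(b)$; I would write $b_v$ for the ruling value of $v \in \{x,y\}$ and $b_z$ for the unique value with $\rho_{\{z\}}(b_z) = 1$. The constraint lies in $\cP_1\{V_\rU^\theta\}$ precisely when $\theta \in [p_x, p_y)$, in which case only $x$ is $(\rho, \theta)$-fixed and gets rounded to $b_x$.

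Let $c := P^z_S(b_z)$ and $b^* := (P^y_S)^{-1}(c)$. The rounded constraint is always satisfied when $P^x_S(b_x) \neq c$ (Case A, zero contribution), while in the opposite Case B it is violated iff $\alpha_y = b^*$, so its contribution to $\Delta^\theta_{2,1}$ is $X_S(\theta) = \rho_{\{y\}}(b^*)$ for $\theta \in [p_x, p_y)$ and $0$ otherwise, and its contribution to $\lp_2$ is $c_S = \rho_{\{x,y\}}(b_x, b^*)$. I would then show $\E_\theta[X_S(\theta)] \le 3 c_S$ for every constraint in Case B by splitting further into $b^* = b_y$ (Sub-case B1) and $b^* = 1 - b_y$ (Sub-case B2); summing then yields the claim.

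In Sub-case B1, $X_S(\theta) = 1 - p_y$, while $c_S \ge \rho_{\{y\}}(b_y) - \rho_{\{x\}}(1-b_x) = (1-p_y) - p_x$. Since $p_x \le 2\tau\delta \le 1/5$ by the hypothesis $\delta \le 1/(10\tau)$ and $p_y \le 1/2$, this gives $X_S(\theta) \le (5/3) c_S$ pointwise, hence also in expectation. In Sub-case B2, $\rho_{\{y\}}(b^*) = p_y$ and $c_S \ge \rho_{\{y\}}(b^*) - \rho_{\{x\}}(1-b_x) = p_y - p_x$. Here the randomization of $\theta$ is essential: writing $L := |[p_x, p_y) \cap [\tau\delta, 2\tau\delta]|$, I would use
\[
\E_\theta[X_S(\theta)] = p_y \cdot \frac{L}{\tau\delta} \, .
\]
When $p_y \le 2\tau\delta$, we have $L \le p_y - p_x$ and $p_y \le 2\tau\delta$, yielding $\E_\theta[X_S(\theta)] \le 2(p_y - p_x) \le 2 c_S$. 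When $p_y > 2\tau\delta$, we have $L \le \min(\tau\delta, 2\tau\delta - p_x)$, and a short case analysis on whether $p_x \le \tau\delta$ or not, using $c_S \ge p_y - p_x$, again gives a ratio bounded by $2$.

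Summing over all $2$-constraints then yields $\E_\theta[\Delta^\theta_{2,1}] \le 3 \lp_2$. The main obstacle I anticipate is Sub-case B2: the pointwise contribution $\rho_{\{y\}}(b^*) = p_y$ can be as large as $2\tau\delta$ while $c_S$ may be as small as $p_y - p_x$, so without averaging over $\theta$ the per-constraint ratio could blow up. The randomization of $\theta$ is precisely what produces the compensating factor $L/(\tau\delta) \le (p_y - p_x)/(\tau\delta)$ that makes the per-constraint bound go through.
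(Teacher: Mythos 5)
Your proposal is correct and follows essentially the same route as the paper: both identify the transition event as $\theta \in [p_x, p_y)$, bound its probability under the uniform draw by $(p_y - p_x)/(\tau\delta)$, and use the key inequality $p_y - p_x \le c_S$ (the paper's $\xi_{y^S} - \xi_{x^S} \le \Pr_{\alpha\sim\rho_S}[P_S(\alpha)=0]$, valid exactly in your Case B) to convert the interval length into the constraint's LP contribution. Your bookkeeping is slightly finer — you compute the post-rounding violation probability exactly as $\rho_{\{y\}}(b^*)$ rather than bounding it by $c_S + \theta$ — which is why you end up with a per-constraint factor of $2$ instead of the paper's $3$; both suffice for the claim.
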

            \begin{proof}
                Expanding out the definition of $\Delta^\theta_{2,1}$ and using linearity of expectation, and the definition of $V_\rU^\theta$, we can rewrite
                \begin{align*}
                    \E_{\theta \sim [\tau\delta, 2\tau\delta]}\left[\Delta^\theta_{2,1}\right] = \sum_{P_S \in \cP_2\{V_\rU\}} \E_{\theta \sim [\tau\delta, 2\tau\delta]}\left[\mathbbm{1}_{P_S \in \cP_1\{V_\rU^\theta\}} \cdot \Pr_{\alpha \in \rho^\theta_S}[P_S(\alpha)=0]\right]  \, .
                \end{align*}
                For every $v \in V_\rU$, we define for convenience \smash{$\xi_v=\min_{b \in \{0,1\}} \rho_{\{w\}}(b)$}, and for any $2$-constraint $P_{S}\in\cP_2\{V_\rU\}$, we let \smash{$x^S,y^S \in S\cap V_\rU$} be the two unfixed variables in $S$ such that $\xi_{x^S} \le \xi_{y^S}$ and let $z^S \in S\setminus V_\rU$ be the fixed variables in $S$. Then, for any $\theta \in [\tau\delta, 2\tau\delta]$ and any $P_{S}\in\cP_2\{V_\rU\}$, we have
                \begin{equation*}
                    \mathbbm{1}_{P_S \in \cP_1\{V_\rU^\theta\}} \cdot \Pr_{\alpha \in \rho^\theta_S}[P_S(\alpha)=0] = \mathbbm{1}_{\theta \in [\xi_{x^S},\xi_{y^S})} \cdot \Pr_{\alpha \in \rho^\theta_S}[P_S(\alpha)=0] \, .
                \end{equation*}
                Fix for now $\theta$ and $P_S$, and let us assume that \smash{$\theta \in [\xi_{x^S},\xi_{y^S})$}, so we have $x \notin V_\rU^\theta, y \in V_\rU^\theta$. Under this assumption, we want to bound the contribution of $P_S$ to $\lp_1^\theta$. Similarly to \Cref{lem:3round}, this contribution can be larger than the contribution to $\lp_2$ by no more than an additive term of $\theta$. More precisely, if we let $\hat{\beta}_{x^S} \in \{0,1\}$ be the ruling value of $x^S$, calculations similar to those in the proof of the proof of \Cref{lem:3round} give
                \begin{align*}
                    \Pr_{\alpha \sim \rho^\theta_{S}}[P_{S}(\alpha) = 0] & \le \Pr_{\alpha \sim \rho_{S}}[P_{S}(\alpha) = 0]+\Pr_{\alpha \sim \rho_{\{x^S\}}}[\alpha \neq \hat{\beta}_{x^S}] \\
                    & = \Pr_{\alpha \sim \rho_{S}}[P_{S}(\alpha) = 0]+\xi_{x^S} \\
                    & \le \Pr_{\alpha \sim \rho_{S}}[P_{S}(\alpha) = 0]+\theta \\
                    & \le \Pr_{\alpha \sim \rho_{S}}[P_{S}(\alpha) = 0]+2\tau\delta \, .
                \end{align*}
                The above shows that for any $\theta \in [\tau\delta, 2\tau\delta]$ and any  $P_{S}\in\cP_2\{V_\rU\}$ we have
                \begin{equation*}
                    \mathbbm{1}_{\theta \in [\xi_{x^S},\xi_{y^S})} \cdot \Pr_{\alpha \in \rho^\theta_S}[P_S(\alpha)=0] \le \left(\Pr_{\alpha \sim \rho_{S}}[P_{S}(\alpha) = 0]+2\tau\delta \right) \cdot \mathbbm{1}_{\theta \in [\xi_{x^S},\xi_{y^S})} \, .
                \end{equation*}
                Further, for every $b\in \{0,1\}$ let $\alpha^{S,b} \in \{0,1\}^S$ be the vector that assigns $y^S$ with $b$ and $x^S,z^S$ with their respective ruling values. Then, we can see that if $P_S(\alpha^{S,b})=1$ for all $b \in \{0,1\}$, then
                \begin{equation*}
                     \mathbbm{1}_{\theta \in [\xi_{x^S},\xi_{y^S})} \cdot \Pr_{\alpha \in \rho^\theta_S}[P_S(\alpha)=0] = 0 \, ,
                \end{equation*}
                which in particular gives
                \begin{align*}
                    \E_{\theta \sim [\tau\delta, 2\tau\delta]}\left[\Delta^\theta_{2,1}\right] & \le \sum_{\substack{P_S \in \cP_2\{V_\rU\} \, : \\ \exists b \in \{0,1\} \text{ s.t. } P_S(\alpha^{S,b})=0 }} \left(\Pr_{\alpha \sim \rho_{S}}[P_{S}(\alpha) = 0]+2\tau\delta \right) \cdot \Pr_{\theta \sim [\tau\delta, 2\tau\delta]}\left[\theta \in [\xi_{x^S},\xi_{y^S})\right] \\
                    & =  \lp_2+\sum_{\substack{P_S \in \cP_2\{V_\rU\} \, : \\ \exists b \in \{0,1\} \text{ s.t. } P_S(\alpha^{S,b})=0 }}2\cdot \left(\xi_{y^S}-\xi_{x^S}\right)
                    \, .
                \end{align*}
                We are then left with the task of upper-bounding $\xi_{y^S}-\xi_{x^S}$. Let us then fix $P_{S}\in\cP_2\{V_\rU\}$ such that either $P_S(\alpha^{S,0})=0$ or $P_S(\alpha^{S,1})=0$. Call $q_S \in \{0,1\}$ the (unique) bit so that $P_S(\alpha^{S,q_S})=0$. Then, by definition of $\xi_{y^S}$ we have
                \begin{align*}
                    \xi_{y^S} & = \min_{b \in \{0,1\}} \Pr_{\alpha\sim \rho_{\{y^S\}}}[\alpha=b] \\
                    & \le \Pr_{\alpha\sim \rho_{\{y^S\}}}[\alpha=q_S] \\
                    & = \Pr_{\alpha\sim \rho_{S}}\left[\alpha_{x^S}=1-\hat{\beta}_{x^S}, \, \alpha_{y^S}=q_S\right] + \Pr_{\alpha\sim \rho_{S}}\left[\alpha_{x^S}=\hat{\beta}_{x^S}, \, \alpha_{y^S}=q_S\right] \\
                    & \le \xi_{x^S} + \Pr_{\alpha\sim \rho_{S}}\left[P_S(\alpha)=0\right] \, .
                \end{align*}
                We can then conclude
                \begin{equation*}
                    \E_{\theta \sim [\tau\delta, 2\tau\delta]}\left[\Delta^\theta_{2,1}\right] \le  \lp_2+\sum_{\substack{P_S \in \cP_2\{V_\rU\} \, : \\ \exists b \in \{0,1\} \text{ s.t. } P_S(\alpha^{S,b})=0 }}2\cdot \left(\xi_{y^S}-\xi_{x^S}\right) \le 3\lp_2 \, .
                \end{equation*}

            \end{proof}

            Note that the only part of the proof of the \Cref{lemma:threshround} that uses randomization in $\theta$ is \Cref{lem:20round}. Let us see how to derandomize this. From \Cref{lem:20round}, we already know that exists a good $\theta\in[\tau\delta, 2\tau\delta]$, for which the thresholding ensures $\Delta^\theta_{2,1}\leq 2\lp_2$. The goal is to have only $O(|V_\rU|)$ many possible values of $\theta$ which we can branch on, ensuring that one of the values would satisfy the required conditions. The idea is simple, on the range from $[0,1]$, we can think of the biases of all the unfixed variables dividing the range into $|V_\rU|+1$ many intervals (this is in fact very crude for us, as we only need the range to be $[\tau\delta, 2\tau\delta]$). Now, taking any threshold in a given interval would round the same set of variables, so we can think of the thresholds as only the boundaries of the intervals. More formally,
            \begin{claim}[Derandomizing thresholding]\label{lem:derand}
                Let $S=\{\tau\delta, 2\tau\delta\}\cup (\{\delta_i\}_{i\in[|V_\rU|]}\cap[\tau\delta, 2\tau\delta])$ where $\delta_{i} = \min_{b\in\{0,1\}}(\Pr_{\alpha\sim\rho_{\{v_i\}}}[\alpha = b])$ for all $v_i\in V_\rU$. Then, there is a $\theta\in S$ such that $\Delta^\theta_{2,1}\leq 2\lp_2$. Moreover, the domain size for $\theta$ is $|S|\leq |V_\rU|+3$.
            \end{claim}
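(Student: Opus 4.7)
The plan is to observe that $\Delta^\theta_{2,1}$ is a piecewise-constant function of $\theta$ on $[0,1)$, whose only potential breakpoints are the values $\delta_i = \min_{b\in\{0,1\}} \rho_{\{v_i\}}(b)$ for $v_i \in V_\rU$. The derandomization then follows by an averaging argument applied to the finite collection of constant values taken on each piece, using \Cref{lem:20round} as the probabilistic guarantee.

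First, I would verify that the set of $(\rho,\theta)$-fixed variables satisfies $F^\theta = \{v_i \in V_\rU : \delta_i \le \theta\}$: this follows immediately from the definition, since $v_i$ is $(\rho,\theta)$-fixed exactly when some bit in $\rho_{\{v_i\}}$ has mass at most $\theta$, which is equivalent to $\delta_i \le \theta$. Moreover, for each $v_i \in F^\theta$, its ruling value $\omega^*_{V_\rU}(\rho,\theta)_{v_i}$ is the bit $b \in \{0,1\}$ maximizing $\rho_{\{v_i\}}(b)$, and this choice does not depend on $\theta$. Consequently, for any two values $\theta,\theta'$ lying in the same connected component of $[0,1) \setminus \{\delta_i\}_{i \in [|V_\rU|]}$, we have $F^\theta = F^{\theta'}$ and the same ruling assignment, hence identical pseudodistributions $\rho^\theta = \rho^{\theta'}$, identical partitions $\cP_i\{V_\rU^\theta\}$, and therefore $\Delta^\theta_{2,1} = \Delta^{\theta'}_{2,1}$.

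Next, I would partition the interval $[\tau\delta, 2\tau\delta]$ using the breakpoints $\{\delta_i\} \cap [\tau\delta, 2\tau\delta]$ together with the endpoints $\tau\delta$ and $2\tau\delta$; this gives at most $|V_\rU|+1$ sub-intervals on each of which $\Delta^\theta_{2,1}$ is constant. Call these constants $v_1,\dots,v_m$ with corresponding lengths $\ell_1,\dots,\ell_m$, so that $\E_{\theta \sim [\tau\delta, 2\tau\delta]}[\Delta^\theta_{2,1}] = \sum_k (\ell_k/\tau\delta) v_k$. By \Cref{lem:20round}, this weighted average is bounded by $3\lp_2$, so $\min_k v_k \le 3\lp_2$ (equivalently one can tighten the argument to recover the $2\lp_2$ bound in the statement by isolating the $2\tau\delta$ additive slack, whose total contribution is already absorbed in the proof of \Cref{lem:20round}). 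The minimizing sub-interval's left endpoint lies in $S$ by construction, and since $\Delta^\theta_{2,1}$ is constant on this piece, choosing $\theta$ to be this left endpoint gives the desired bound. Finally, $|S| \le |V_\rU| + 2$ by inclusion, so in particular $|S| \le |V_\rU| + 3$.

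The only real obstacle is the bookkeeping in the first step: one must check both that $F^\theta$ and the ruling assignment remain stable between consecutive breakpoints, because otherwise the pseudodistribution $\rho^\theta$ would change continuously in $\theta$ and the piecewise-constant structure would fail. Once this local invariance is established, the derandomization reduces to the elementary observation that a piecewise-constant function on an interval attains its minimum at one of the (finitely many) breakpoints.
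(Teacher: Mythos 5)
Your proof is correct and follows essentially the same route as the paper's: both observe that the set of rounded variables (and hence $\Delta^\theta_{2,1}$) is a step function of $\theta$ with breakpoints at the biases $\delta_i$, so the good threshold guaranteed in expectation by \Cref{lem:20round} can be relocated to the left endpoint of its piece, which lies in $S$ by construction. Note that both your averaging argument and the paper's own proof only extract the bound $3\lp_2$ from \Cref{lem:20round} (the constant $2$ in the claim's statement appears to be a typo, and the paper uses $3\lp_2$ in the final aggregation anyway), so your parenthetical caveat about the constant is apt.
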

            \begin{proof}
                Without loss of generality, assume that the variables as sorted in increasing order of biases, i.e., $\delta_i < \delta_j$ for all $i<j\in[|V_\rU|]$ (we can assume a strict inequality as for the threshold rounding purposes, as all the variables with the same bias can be thought of as a single variable).  Now, we prove that $\theta_{\text{good}}$ belongs to the set $S$, where $S=\{s_j\}_{j\in[|S|]}:=\{\tau\delta, 2\tau\delta\}\cup (\{\delta_i\}_{i\in[|V_\rU|]}\cap[\tau\delta, 2\tau\delta])$ is the ordered set ($s_i<s_j, \forall i<j$) of potential values for the threshold. Let us assume this is not the case, and $\theta_{\text{good}}\in(s_i, s_{i+1})$ and let $R_{\theta_{\text{good}}}=\{ \forall v_i\in V_\rU, \delta_i \leq \theta_{\text{good}}\}$ be the rounded variables with the threshold as $\theta_{\text{good}}$. Now, note that $\theta' = s_i$ has the same set of rounded variables, as $R_{\theta'}=\{ \forall v_i\in V_\rU, \delta_i \leq \theta'\} = \{ \forall v_i\in V_\rU, \delta_i \leq \theta_{\text{good}}\}$, as there is no variable in $V_\rU$ that has bias in $(s_i, s_{i+1})$. Thus, we can assume $\theta_{\text{good}} = \theta'$.
            \end{proof}
            
            \noindent
            \Cref{lem:20round} and \Cref{lem:derand} establish that there are at most $O(|V_\rU|)$ many possible values for $\theta\in[\tau\delta, 2\tau\delta]$ such that on thresholding with $\theta$ we get $\Delta^\theta_{2,1}\leq 2\lp_2$. Moreover, these values can be found in time $\poly(|V_\rU|)$. Let us then fix $\theta$ be any such that threshold, and calculate the overall increase in the aggregate value. We note that:
            \begin{itemize}
            \item $\lp'_3 \leq \lp_3$ by definition;
            \item $\lp'_2 = \Delta_{2, 2} + \Delta_{3, 2}  \leq \lp_2 + \lp_3+6\tau\delta\cdot\binom{V_\rU}{3}$ by \Cref{lem:3round};
            \item $\lp'_1 = \Delta_{1, 1} + \Delta_{2, 1} + \Delta_{3, 1}  \leq 
            \lp_1 + 3\lp_2 +\lp_3 +6\tau\delta\cdot\binom{V_\rU}{3} $ by \Cref{lem:3round} and \Cref{lem:derand};
            \item $\lp'_0 = \Delta_{0, 0} + \Delta_{1, 0} + \Delta_{2, 0} + \Delta_{3, 0}  \leq 
            \lp_0 + 
            2\lp_1 + 2\lp_2 + \lp_3+6\tau\delta\cdot\binom{V_\rU}{3}$ by \Cref{lem:10}, \Cref{lem:11}, and \Cref{lem:3round}. 
            \end{itemize}
            Combining the above three quantities, we get that
            \begin{align*}
                A'(\rho') - A(\rho)
                & \leq \left(\log n +1\right)\cdot 3\lp_2 + 2\lp_1 + \left(\lp_3+6\tau\delta\binom{V_\rU}{3}\right)(\log^2 n + \log n + 1) \\
                & \le 6 \cdot \left(\log^2 n \cdot \lp_3 + \log n \cdot \lp_2 + \lp_1 \right) + 12\tau\delta \binom{V_\rU}{3} \log^2 n \\
                & \leq \frac{6}{\log n}A(\rho)+ 12\tau\delta \binom{V_\rU}{3} \log^2 n \, ,
            \end{align*}
            which shows that there is a threshold $\theta \in [\tau\delta,2\tau\delta]$ such that $\theta \in \Theta^{V_\rU}_\tau(\rho,\delta)$.
        \end{proof}

	\section{Polynomial time algorithm for complete $k$-CSPs}
\label{sec:kcsp}

In this section, we show a simple algorithm that, given any complete $n$-variable $k$-CSP instance $\cI = (V, \cP)$ on the boolean alphabet $\{0,1\}$, and decides if there is a satisfying assignment to $\cI$ in $n^{O(k)}$ time. This improves the algorithm of~\cite{anand2025min}, which showed a quasi-polynomial time algorithm for any fixed $k$.

The quasi-polynomial time of~\cite{anand2025min} uses a bound on the number of satisfying assignments of any boolean CSP, proved by the same authors. We use the same tool, which can be stated as follows.

\begin{lemma}[Lemma 3.1 of~\cite{anand2025min}]\label{lemma:vc}
Let $\cI = (V,\cP)$ be a complete $k$-CSP. Then, the number of satisfying assignments to $\cI$ is at most $O(|V|^{k-1})$.
\end{lemma}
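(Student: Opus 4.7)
The plan is to cite the classical Sauer--Shelah bound together with a short argument that the VC-dimension of the set of satisfying assignments of a complete $k$-CSP is at most $k-1$. Concretely, let $\cF \subseteq \{0,1\}^V$ denote the set of satisfying assignments to $\cI$, viewed as a family of subsets of $V$ (with $\alpha$ identified with its support). Recall that the VC-dimension of $\cF$ is the largest integer $d$ such that some $S \subseteq V$ with $|S|=d$ is shattered by $\cF$, i.e.\ $\{\alpha_S : \alpha \in \cF\} = \{0,1\}^S$. By the Sauer--Shelah lemma, if the VC-dimension of $\cF$ is at most $d$, then $|\cF| \le \sum_{i=0}^{d}\binom{|V|}{i} = O(|V|^d)$. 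Hence it suffices to show that the VC-dimension of $\cF$ is at most $k-1$.

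The core step is a simple contradiction using completeness. Suppose the VC-dimension of $\cF$ is at least $k$, so some $S \in \binom{V}{k}$ is shattered by $\cF$: for every $\beta \in \{0,1\}^S$, there exists $\alpha \in \cF$ with $\alpha_S = \beta$. Since $\cI$ is a complete $k$-CSP, the constraint $P_S$ lies in $\cP$. Each such $\alpha$ satisfies all constraints of $\cI$ and in particular $P_S(\alpha_S)=P_S(\beta)=1$. This would force $P_S$ to be identically $1$ on $\{0,1\}^S$, contradicting the standing assumption that every $P_S \in \cP$ is a nontrivial predicate (i.e.\ is unsatisfied by at least one assignment to $S$). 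Therefore no $k$-subset is shattered, and the VC-dimension is at most $k-1$.

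Combining the two ingredients gives $|\cF| = O(|V|^{k-1})$, which is the claimed bound. The only potentially subtle point is being careful that the Sauer--Shelah bound is applied to $\cF$ as a family of $\{0,1\}$-valued functions on $V$ and that the nontriviality assumption on each $P_S$ is invoked exactly in the shattering contradiction; both are routine. No other machinery from the paper is needed for this lemma, since its role here is to be cited as a black box for the $k$-CSP decision algorithm of \Cref{thm:kcsp}.
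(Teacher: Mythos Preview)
Your proof is correct and matches the approach described in the paper: the lemma is cited from~\cite{anand2025min} without proof, but the paper explicitly notes that the original argument is ``based on a VC-dimension argument,'' which is exactly the Sauer--Shelah plus shattering contradiction you give. Nothing further is needed.
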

\noindent
Our algorithm is very simple. It chooses an ordering $v_1,v_2, \dots, v_n$ of $V$, and it keeps track of all satisfying assignments for the sub-instance induced on the first $i$ variables for $i = k, k + 1, \dots, n$. Note that the instance induced on the first $i$ variables is a complete CSP with $i$ variables, hence for any $i \in [n]$ the number of such assignments is at most $O(i^{k-1})$ by~\Cref{lemma:vc}. At the $i$-th step, we simply try to extend each satisfying assignment by trying $x_i = 0$ and $x_i = 1$. At the end, it suffices to test if there is an satisfying assignment left.

\begin{algorithm}[H]\label{alg:decisioncsp}
\caption{\textsc{decide-csp}{}(\cI)}
\begin{algorithmic}[1]
\Require complete $k$-CSP $\cI=(V,\cP)$ with $n=|V|$.
\State Arbitrarily order the variables in $V$ as $v_1, v_2, \ldots, v_n$.
\State Let $\mathcal{S}_{k-1} = \{0,1\}^{k-1}$
\For {$i = k, k + 1 \ldots n$}
\State Let $\mathcal{S}_i = \emptyset$.
\State Let $V' = \{v_1,\dots,v_{i}\}$.
\For {$\alpha \in \mathcal{S}_{i-1}$}
\For{$b = 0, 1$}
\State Let $\alpha' \in \{0,1\}^{V'}$ be defined as $\alpha'_{v_i}=b$ and $\alpha'_{v_j} = \alpha_{v_j}$ for each $j \in [i-1]$.
\If{$\val(\alpha',\cI[V'])=1$}
\State Add $\alpha'$ to $\mathcal{S}_i$.
\EndIf
\EndFor
\EndFor
\EndFor
\If{$\mathcal{S}_n \neq \emptyset$}
\State \Return \textsc{yes}
\Else
\State \Return \textsc{no}
\EndIf
\end{algorithmic}
\end{algorithm}

\noindent
The correctness of the algorithm is immediate. At every stage, the number of satisfying assignments, and hence the size of $\mathcal{S}$, is at most $n^{k-1}$. Hence the algorithm runs in $n^{O(k)}$ time.
	
	\addcontentsline{toc}{section}{References}
	\bibliographystyle{alpha}
	\bibliography{refs}

\newcommand{\etalchar}[1]{$^{#1}$}
\begin{thebibliography}{ADLVKK03}

\bibitem[ACMM05]{ACMM05}
Amit Agarwal, Moses Charikar, Konstantin Makarychev, and Yury Makarychev.
\newblock {$O(\sqrt{\log n})$} approximation algorithms for min uncut, min 2cnf
  deletion, and directed cut problems.
\newblock In {\em Proceedings of the thirty-seventh annual ACM symposium on
  Theory of computing}, pages 573--581, 2005.

\bibitem[ADLVKK03]{alon2003random}
Noga Alon, W~Fernandez De~La~Vega, Ravi Kannan, and Marek Karpinski.
\newblock Random sampling and approximation of {MAX-CSP}s.
\newblock {\em Journal of computer and system sciences}, 67(2):212--243, 2003.

\bibitem[AJT19]{alev2019approximating}
Vedat~Levi Alev, Fernando~Granha Jeronimo, and Madhur Tulsiani.
\newblock Approximating constraint satisfaction problems on high-dimensional
  expanders.
\newblock In {\em 2019 IEEE 60th Annual Symposium on Foundations of Computer
  Science (FOCS)}, pages 180--201. IEEE, 2019.

\bibitem[AKK95]{arora1995polynomial}
Sanjeev Arora, David Karger, and Marek Karpinski.
\newblock Polynomial time approximation schemes for dense instances of
  {NP}-hard problems.
\newblock In {\em Proceedings of the twenty-seventh annual ACM symposium on
  Theory of computing}, pages 284--293, 1995.

\bibitem[AKK{\etalchar{+}}08]{arora2008unique}
Sanjeev Arora, Subhash~A Khot, Alexandra Kolla, David Steurer, Madhur Tulsiani,
  and Nisheeth~K Vishnoi.
\newblock Unique games on expanding constraint graphs are easy.
\newblock In {\em Proceedings of the fortieth annual ACM symposium on Theory of
  computing}, pages 21--28, 2008.

\bibitem[ALS25]{anand2025min}
Aditya Anand, Euiwoong Lee, and Amatya Sharma.
\newblock Min-csps on complete instances.
\newblock In {\em Proceedings of the 2025 Annual ACM-SIAM Symposium on Discrete
  Algorithms (SODA)}, pages 1178--1201. SIAM, 2025.

\bibitem[BBK{\etalchar{+}}21]{bafna2021playing}
Mitali Bafna, Boaz Barak, Pravesh~K Kothari, Tselil Schramm, and David Steurer.
\newblock Playing unique games on certified small-set expanders.
\newblock In {\em Proceedings of the 53rd Annual ACM SIGACT Symposium on Theory
  of Computing}, pages 1629--1642, 2021.

\bibitem[BFdLVK03]{bazgan2003polynomial}
Cristina Bazgan, W~Fernandez~de La~Vega, and Marek Karpinski.
\newblock Polynomial time approximation schemes for dense instances of minimum
  constraint satisfaction.
\newblock {\em Random Structures \& Algorithms}, 23(1):73--91, 2003.

\bibitem[BHHS11]{barak2011subsampling}
Boaz Barak, Moritz Hardt, Thomas Holenstein, and David Steurer.
\newblock Subsampling mathematical relaxations and average-case complexity.
\newblock In {\em Proceedings of the twenty-second annual ACM-SIAM symposium on
  Discrete Algorithms}, pages 512--531. SIAM, 2011.

\bibitem[BHPZ21]{brakensiek2021mysteries}
Joshua Brakensiek, Neng Huang, Aaron Potechin, and Uri Zwick.
\newblock On the mysteries of max nae-sat.
\newblock In {\em Proceedings of the 2021 ACM-SIAM Symposium on Discrete
  Algorithms (SODA)}, pages 484--503. SIAM, 2021.

\bibitem[BHZ24]{brakensiek2024tight}
Joshua Brakensiek, Neng Huang, and Uri Zwick.
\newblock Tight approximability of max 2-sat and relatives, under ugc.
\newblock In {\em Proceedings of the 2024 Annual ACM-SIAM Symposium on Discrete
  Algorithms (SODA)}, pages 1328--1344. SIAM, 2024.

\bibitem[BRS11]{BRS11}
Boaz Barak, Prasad Raghavendra, and David Steurer.
\newblock Rounding semidefinite programming hierarchies via global correlation.
\newblock In {\em 2011 ieee 52nd annual symposium on foundations of computer
  science}, pages 472--481. IEEE, 2011.

\bibitem[CCAL{\etalchar{+}}24]{CCLLNV24}
Nairen Cao, Vincent Cohen-Addad, Euiwoong Lee, Shi Li, Alantha Newman, and
  Lukas Vogl.
\newblock Understanding the cluster {LP} for correlation clustering.
\newblock In {\em Proceedings of the 56th Annual ACM SIGACT Symposium on Theory
  of Computing}, 2024.

\bibitem[COCF10]{coja2010efficient}
Amin Coja-Oghlan, Colin Cooper, and Alan Frieze.
\newblock An efficient sparse regularity concept.
\newblock {\em SIAM Journal on Discrete Mathematics}, 23(4):2000--2034, 2010.

\bibitem[dlVKKV05]{de2005tensor}
W~Fernandez de~la Vega, Marek Karpinski, Ravi Kannan, and Santosh Vempala.
\newblock Tensor decomposition and approximation schemes for constraint
  satisfaction problems.
\newblock In {\em Proceedings of the thirty-seventh annual ACM symposium on
  Theory of computing}, pages 747--754, 2005.

\bibitem[dlVKM07]{de2007linear}
Wenceslas~Fernandez de~la Vega and Claire Kenyon-Mathieu.
\newblock Linear programming relaxations of maxcut.
\newblock In {\em Proceedings of the eighteenth annual ACM-SIAM symposium on
  Discrete algorithms}, pages 53--61. Citeseer, 2007.

\bibitem[FK96]{frieze1996regularity}
Alan Frieze and Ravi Kannan.
\newblock The regularity lemma and approximation schemes for dense problems.
\newblock In {\em Proceedings of 37th conference on foundations of computer
  science}, pages 12--20. IEEE, 1996.

\bibitem[FLP16]{fotakis2016sub}
Dimitris Fotakis, Michail Lampis, and Vangelis Paschos.
\newblock Sub-exponential approximation schemes for {CSP}s: From dense to
  almost sparse.
\newblock In {\em 33rd Symposium on Theoretical Aspects of Computer Science
  (STACS 2016)}, pages 37--1, 2016.

\bibitem[GS11]{GS11}
Venkatesan Guruswami and Ali~Kemal Sinop.
\newblock Lasserre hierarchy, higher eigenvalues, and approximation schemes for
  graph partitioning and quadratic integer programming with psd objectives.
\newblock In {\em 2011 IEEE 52nd Annual Symposium on Foundations of Computer
  Science}, pages 482--491. IEEE, 2011.

\bibitem[H{\aa}s01]{Hastad01}
Johan H{\aa}stad.
\newblock Some optimal inapproximability results.
\newblock {\em Journal of the ACM (JACM)}, 48(4):798--859, 2001.

\bibitem[JQST20]{jeronimo2020unique}
Fernando~Granha Jeronimo, Dylan Quintana, Shashank Srivastava, and Madhur
  Tulsiani.
\newblock Unique decoding of explicit $\varepsilon$-balanced codes near the
  gilbert-varshamov bound.
\newblock In {\em 2020 IEEE 61st Annual Symposium on Foundations of Computer
  Science (FOCS)}, pages 434--445. IEEE, 2020.

\bibitem[JST21]{jeronimo2021near}
Fernando~Granha Jeronimo, Shashank Srivastava, and Madhur Tulsiani.
\newblock Near-linear time decoding of ta-shma’s codes via splittable
  regularity.
\newblock In {\em Proceedings of the 53rd Annual ACM SIGACT Symposium on Theory
  of Computing}, pages 1527--1536, 2021.

\bibitem[Kho02]{Khot02}
Subhash Khot.
\newblock On the power of unique 2-prover 1-round games.
\newblock In {\em Proceedings of the thiry-fourth annual ACM symposium on
  Theory of computing}, pages 767--775, 2002.

\bibitem[KKMO07]{KKMO07}
Subhash Khot, Guy Kindler, Elchanan Mossel, and Ryan O’Donnell.
\newblock Optimal inapproximability results for {MAX-CUT} and other 2-variable
  {CSP}s?
\newblock {\em SIAM Journal on Computing}, 37(1):319--357, 2007.

\bibitem[KPRT97]{klein1997approximation}
Philip~N Klein, Serge~A Plotkin, Satish Rao, and Eva Tardos.
\newblock Approximation algorithms for {S}teiner and directed multicuts.
\newblock {\em Journal of Algorithms}, 22(2):241--269, 1997.

\bibitem[KS09]{KS09}
Marek Karpinski and Warren Schudy.
\newblock Linear time approximation schemes for the {G}ale-{B}erlekamp game and
  related minimization problems.
\newblock In {\em Proceedings of the forty-first annual ACM symposium on Theory
  of computing}, pages 313--322, 2009.

\bibitem[KSTW01]{khanna2001approximability}
Sanjeev Khanna, Madhu Sudan, Luca Trevisan, and David~P Williamson.
\newblock The approximability of constraint satisfaction problems.
\newblock {\em SIAM Journal on Computing}, 30(6):1863--1920, 2001.

\bibitem[LLZ02]{lewin2002improved}
Michael Lewin, Dror Livnat, and Uri Zwick.
\newblock Improved rounding techniques for the max 2-sat and max di-cut
  problems.
\newblock In {\em International Conference on Integer Programming and
  Combinatorial Optimization}, pages 67--82. Springer, 2002.

\bibitem[MdMMN23]{meot2023voting}
Antoine M{\'e}ot, Arnaud de~Mesmay, Moritz M{\"u}hlenthaler, and Alantha
  Newman.
\newblock Voting algorithms for unique games on complete graphs.
\newblock In {\em Symposium on Simplicity in Algorithms (SOSA)}, pages
  124--136. SIAM, 2023.

\bibitem[MM15]{manurangsi2015approximating}
Pasin Manurangsi and Dana Moshkovitz.
\newblock Approximating dense max 2-{CSP}s.
\newblock {\em Approximation, Randomization, and Combinatorial Optimization.
  Algorithms and Techniques}, page 396, 2015.

\bibitem[MS08]{mathieu2008yet}
Claire Mathieu and Warren Schudy.
\newblock Yet another algorithm for dense max cut: go greedy.
\newblock In {\em Proceedings of the nineteenth annual ACM-SIAM symposium on
  Discrete algorithms}, pages 176--182, 2008.

\bibitem[OGT13]{oveis2013new}
Shayan Oveis~Gharan and Luca Trevisan.
\newblock A new regularity lemma and faster approximation algorithms for low
  threshold rank graphs.
\newblock In {\em International Workshop on Approximation Algorithms for
  Combinatorial Optimization}, pages 303--316. Springer, 2013.

\bibitem[RT12]{rt}
Prasad Raghavendra and Ning Tan.
\newblock Approximating csps with global cardinality constraints using {SDP}
  hierarchies.
\newblock In Yuval Rabani, editor, {\em Proceedings of the Twenty-Third Annual
  {ACM-SIAM} Symposium on Discrete Algorithms, {SODA} 2012, Kyoto, Japan,
  January 17-19, 2012}, pages 373--387. {SIAM}, 2012.

\bibitem[Sch78]{sch78}
Thomas~J. Schaefer.
\newblock The complexity of satisfiability problems.
\newblock In {\em Proceedings of the Tenth Annual ACM Symposium on Theory of
  Computing}, STOC '78, page 216–226, New York, NY, USA, 1978. Association
  for Computing Machinery.

\bibitem[Yar14]{yaroslavtsev2014going}
Grigory Yaroslavtsev.
\newblock Going for speed: Sublinear algorithms for dense $r$-{CSP}s.
\newblock {\em arXiv preprint arXiv:1407.7887}, 2014.

\bibitem[YZ14]{yz14}
Yuichi Yoshida and Yuan Zhou.
\newblock Approximation schemes via sherali-adams hierarchy for dense
  constraint satisfaction problems and assignment problems.
\newblock In Moni Naor, editor, {\em Innovations in Theoretical Computer
  Science, ITCS'14, Princeton, NJ, USA, January 12-14, 2014}, pages 423--438.
  {ACM}, 2014.

\end{thebibliography}
	
	\appendix
	 \section{Deferred proofs from Section~\ref{sec:roundingalgo}}
 
\subsection{Proof of \Cref{lem:bruteforceok}}
\label{sec:edgecases1}

In this section we prove the correctness of the base case of the recursion we the execution of \Cref{alg:roundingalgo} reaches line~\ref{line:bruteforce}. The claim is restated below for convenience of the reader.

\bruteforce*

\begin{proof}
Notice that line~\ref{line:bruteforce} is reached after at most $100 \log n$ stages of conditioning and thresholding, by \Cref{lem:returnpoints}. For each conditioning step we lose at most $r + 2t \leq 2r = 2(\log n)^{100K}$ degrees from the Sherali-Adams relaxation. Hence, at this point we have a pseudodistribution $\mu \in \cD(V, d)$ where $d = K^2\log^{c} n - 100 \log n \cdot 2(\log n)^{100K}\geq K^2 \log^{100} n \geq |V_\rU|$.

This in turn means that the Sherali-Adams solution $\mu$ is a distribution over integral solutions to the unfixed variables $V_\rU$. Hence in particular, it follows that there exists an integral assignment $\beta_\rU$ to the variables in $V_\rU$ such that the combined assignment $\beta$, which assigns $\beta_\rU$ to variables of $V_\rU$, and the already fixed assignment $\alpha_{V_\rF}$ to the variables $V_\rF$, satisfies $\val(\beta) \leq \val(\mu)$, and such an assignment would be found by brute force.
\end{proof}

\subsection{Proof of \Cref{lem:highlyunsatok}}
\label{sec:edgecases2}

In this section we show that when the sub-instance induced by the unfixed variables has large objective value, it means that these constraints are not important and we can then focus on solving the \mintwo instance given by the constraints with one or two fixed variables. The claim is restated below for convenience of the reader.

\twosatcase*

\begin{proof}Notice that line~\ref{line:2satinstance} is reached after at most $100 \log n$ stages of conditioning and thresholding, by \Cref{lem:returnpoints}. For each conditioning step we lose at most $r + 2t \leq 2r = 2(\log n)^{100K}$ degrees from the Sherali-Adams relaxation. Hence, at this point we have a pseudodistribution $\mu \in \cD(V, d)$ where $d = K^2\log^{c} n - 100 \log n \cdot 2(\log n)^{100K}\geq K^2 \log^{100} n \ge 3$.

We now observe that  for all $\beta \in \{0,1\}^{V_\rU}$, since $\val(\cI[V_\rU],\mu[V_\rU]) > 1/(10\tau)$, the average value of unsatisfied constraints can be bounded as
\begin{equation*}
    \val(\cI[V_\rU],\beta) \le 1 \le 10\tau \cdot \val(\cI[V_\rU],\mu[V_\rU]) \, .
\end{equation*}
Now, \Cref{alg:roundingalgo} constructs a \mintwo instance $\cI_2\langle V_\rU,\alpha\rangle=(V_\rU, \cP_2\langle V_\rU,\alpha\rangle)$ by ignoring constraints which have all their variables from $V_\rU$ and those constraints which have all their variables from $V_\rF=V\setminus V_\rU$. Hence, every constraint in $\cI_2\langle V_\rU,\alpha\rangle$ involves $1$ or $2$ variables from $V_{\rU}$. We recall that since all $P_S \in \cP$ are $\nae$, every $P'_S \in \cP_2\langle V_\rU, \alpha\rangle$ is a \twosat or \nsat{1} clause.

Let $V_{\rU}^- = \{\neg v : v \in V_{\rU}\}$ be the set of negative literals.
For convenience, we will represent this instance as $\cI'=(V_{\rU}, \calC)$ where the set $\calC$ is the multiset of clauses in the instance, corresponding to the constraints $\cP_2\langle V_\rU,\alpha\rangle$. Each \twosat clause in $\calC$ is the disjunction of two literals $\ell_1 \lor \ell_2$ with $\ell_1,\ell_2 \in V_\rU \cup V_\rU^-$. We also allow ourselves to alternatively write \twosat clauses as follows: given $\ell_1\lor \ell_2$ where $\ell_1,\ell_2 \in V_\rU \cup V_\rU^-$ corresponding to variables $v,w \in V_\rU$ respectively, we identify $\ell_1 \lor \ell_2$ with a tuple $(v,w,p_1,p_2)$ where $p_1,q_2 : \{0,1\} \rightarrow \{0,1\}$ are bijections describing the literal pattern of $\ell_1$ and $\ell_2$ respectively. More precisely, each of $p_1$ and $p_2$ is identical to one of the mappings $\id: b \mapsto b$ or $\inv: b \mapsto 1-b$. For example, if we have a clause $\neg v \lor w$ we might also write it as $(v,w,\inv,\id)$. Note that this handles \nsat{1} clauses too, since $v$ is allowed to equal $w$. Let us further define
$$\val(\cI', \mu) = \ex_{ (v,w,p_1,p_2) \sim \calC}\left[\Pr_{\sigma \sim \mu_{\{v,w\}}}[p_1(\sigma_v) = 0, \, p_2(\sigma_w) = 0]\right]$$
to be the average probability with which a clause is unsatisfied. We note that
$$|\calC|\val(\cI', \mu) \leq \lp_2^{V_{\rU}}(\mu) + \lp_1^{V_{\rU}}(\mu) \, , $$
since each clause has either $1$ or $2$ variables from $V_{\rU}$, and the remaining variables from $V \setminus V_\rU$ (recall that \smash{$\lp^{V_\rU}_i(\mu)$} is the sum of the values of the clauses with exactly $i$ variables from $V_\rU$). 

Our goal will be to show that the Sherali-Adams solution $\mu$ at this point gives a feasible solution for a standard LP relaxation for the \textsc{Min-$2$-SAT}{} instance. If we show this, then by standard results in the literature, we know that the gap for this standard LP relaxation is at most $O(\log^2 n)$. In turn, this must imply that running the rounding algorithm certifying this gap would give an assignment $\alpha' \in \{0,1\}^{V_\rU}$ that violates at most $ O(\log^ 2 n)(\lp_2^{V_\rU}(\mu) + \lp_1^{V_\rU}(\mu))$ clauses in $\cI'$. Putting it together with $\alpha$ by including the constraints with all three variables from $V_\rU$, and the constraints with all three variables from $V \setminus V_\rU$, we get an assignment $\alpha^* \in \{0,1\}^V$ such that the total number of constraints it violates must be at most
$$10\tau \lp_3^{V_\rU}(\mu) + O(\log^2 n)(\lp^{V_\rU}_2(\mu) + \lp^{V_\rU}_1(\mu)) + \lp^{V_\rU}_0(\mu) \leq O(\log n)\, A^{V_\rU}(\mu) \, ,$$
which will finish the proof.

We now restrict our attention to showing that the Sherali-Adams solution $\mu$ is feasible for the standard LP relaxation for our \mintwo instance (see for example~\cite{klein1997approximation}), stated below for convenience. This relaxation solves for a metric $d: V_\rU\cup V_\rU^{-} \rightarrow \R_{\ge 0}$ over the set of literals  $V_\rU \cup V_\rU^-$.

\begin{align*}
    \text{minimize} \quad & \sum_{(\ell_1\lor\ell_2) \in \calC} \frac{1}{2}(d(\neg\ell_1,\ell_2) + d(\neg\ell_2,\ell_1)) \\
    \text{subject to} \quad &
    d(v, \neg v) + d(\neg v , v) \geq 1, \quad \forall v \in V_{\rU} \quad \quad \quad  \quad \quad \quad \text{(LP-2-SAT)} \\
    & d(\ell_1,\ell_2) \geq 0 \quad \forall \ell_1,\ell_2 \in  V_\rU \cup V_\rU^{-} \\
    & d(\ell_1,\ell_3) \leq d(\ell_1,\ell_2) + d(\ell_2,\ell_3) \quad \forall \ell_1,\ell_2,\ell_3 \in  V_\rU \cup V_\rU^{-}
\end{align*}

\noindent
Notice that this is indeed a relaxation: given any assignment, for each violated clause $(\ell_1 \lor \ell_2) \in \calC$ we set $d(\neg \ell,\ell) = d(\neg \ell, \ell) = 1$, for every satisfied clause $(\ell_1 \lor \ell_2) \in \calC$ we set $d(\neg \ell,\ell) = d(\neg \ell, \ell) = 0$, and for every other $\ell_1,\ell_2 \in  V_\rU \cup V_\rU^{-}$ we obtain $d(\ell_1,\ell_2)$ by metric completion. Formally, $ d(\ell_1,\ell_2)$ will be the shortest length $d(\ell_1,z_1) + d(z_1,z_2) +\ldots +d(z_t,\ell_2)$ among all choices of literals $z_1,\dots,z_t \in V_{\rU} \cup V_{\rU}^-$ such that $(\neg \ell_1 \lor z_1), (\neg z_1 \lor z_2),\dots, (\neg z_t, \ell_2)\in \calC$. In other words, we turn each clause $\ell_1 \lor \ell_2$ into two implications $\neg \ell_1 \implies \ell_2$, $\neg \ell_2 \implies \ell_1$ and assign the lengths $d({\neg \ell_1},{\ell_2}), d({\neg \ell_2},{\ell_1})$ to be $1$ if and only if the corresponding implications are unsatisfied, and assign the other distances as the shortest path distance in the graph whose edges are formed by these implications. We have the following standard result from classical approximation algorithms literature.

\begin{theorem}[\cite{klein1997approximation}]\label{thm:2satrounding}
There is a rounding algorithm that, given any feasible solution to {\upshape{(LP-2-SAT)}} of objective value $\phi \in \mathbb{R}_{\ge 0}$, outputs in polynomial time an assignment for which there are at most $O(\log^2 n) \phi$ unsatisfied clauses.
\end{theorem}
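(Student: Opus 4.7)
The plan is to prove Theorem~\ref{thm:2satrounding} via the region-growing technique pioneered by Leighton-Rao and specialized to multicut-style problems by Garg-Vazirani-Yannakakis and \cite{klein1997approximation}. First, I would reinterpret the LP solution as a fractional multicut instance on the literal set $V_\rU \cup V_\rU^-$: each clause $(\ell_1 \vee \ell_2) \in \calC$ is turned into the two directed implication edges $\neg \ell_1 \to \ell_2$ and $\neg \ell_2 \to \ell_1$, with the LP distances $d(\neg \ell_1, \ell_2)$ and $d(\neg \ell_2, \ell_1)$ as their respective lengths. The constraints $d(v, \neg v) + d(\neg v, v) \geq 1$ then play the role of unit separation demands between each variable $v$ and its negation $\neg v$, so that producing a consistent truth assignment is tantamount to multicutting the implication graph to separate $v$ from $\neg v$ for every $v \in V_\rU$.

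Next, I would iteratively grow regions in this metric. While unassigned literals remain, pick such a literal $\ell$, sample a random radius $r$ from a truncated exponential distribution on $[0, 1/2)$ with rate $\lambda = \Theta(\log n)$, and cluster together all literals within distance $r$ from $\ell$. Assign to the cluster the unique set of truth values that makes $\ell$ true, then remove the clustered literals together with their negations and repeat. The restriction $r < 1/2$ combined with the LP feasibility $d(v, \neg v) + d(\neg v, v) \geq 1$ guarantees that no ball can contain both $v$ and $\neg v$ for any $v$, ensuring that the within-cluster assignment is internally consistent.

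The quantitative heart of the argument is the standard volume-based accounting: an implication edge of length $d_e$ is cut by the boundary of a growing region with probability at most $\lambda \cdot d_e$ plus a telescoping volume contribution, so summing over all iterations yields an expected total cut cost of $O(\log n) \cdot \phi$. Since each violated clause corresponds to at least one cut implication edge and the LP objective averages the two directed contributions per clause, the expected number of violated clauses is $O(\log^2 n) \cdot \phi$; the extra logarithmic factor arises from the standard volume-budget initialization of $1/n$ per literal and the gap-widening required to make region growing succeed with constant failure probability per step. Finally, the random radius can be derandomized in polynomial time by enumerating the polynomially many breakpoints where the set of literals inside the ball changes.

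The main technical obstacle, and the focus of the \cite{klein1997approximation} analysis, is the careful bookkeeping of volumes across iterations in the presence of a directed implication graph and the literal-negation separation demands, which is somewhat more delicate than in the undirected multicut setting. In particular, one must ensure that when a ball is removed, the remaining metric still satisfies enough of the triangle inequalities for the next round of region growing to enjoy the same guarantees, and that the accumulated failure probability over all rounds telescopes correctly into the claimed $O(\log^2 n)$ bound.
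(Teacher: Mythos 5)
The paper does not actually prove this statement --- it invokes it as a black-box result of \cite{klein1997approximation} --- so the question is whether your reconstruction of that classical proof is sound. It is not, for two related reasons. The central consistency claim of your scheme is false: the LP distances form an \emph{asymmetric} quasi-metric on literals (only the directed triangle inequalities $d(\ell_1,\ell_3)\le d(\ell_1,\ell_2)+d(\ell_2,\ell_3)$ are imposed), and the separation constraint only says $d(v,\neg v)+d(\neg v,v)\ge 1$. Knowing $d(\ell,v)<1/2$ and $d(\ell,\neg v)<1/2$ does not contradict this, because the triangle inequality bounds $d(v,\neg v)$ by $d(v,\ell)+d(\ell,\neg v)$, and the reverse distance $d(v,\ell)$ is completely unconstrained by membership of $v$ in the directed ball $\{u : d(\ell,u)<r\}$. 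One can exhibit feasible (even skew-symmetric) LP solutions in which a directed ball of radius $0$ around $\ell$ contains both $u$ and $\neg u$, and even $\neg\ell$ itself, so a single round of directed region growing with $r<1/2$ can produce an internally inconsistent cluster. This is exactly the obstruction that makes symmetric/directed multicut harder than undirected multicut, and overcoming it is the real content of \cite{klein1997approximation}: one must recurse inside the grown region (equivalently, decompose by distance scales), and the $O(\log^2)$ factor arises as the product of the recursion depth and the per-level region-growing cost.

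Relatedly, your quantitative accounting is internally inconsistent. If one pass of region growing had total expected cut cost $O(\log n)\cdot\phi$ and every violated clause corresponded to a cut implication edge, you would have proved an $O(\log n)$-approximation; the second logarithm cannot materialize ``for free'' in the final step. Moreover, the source you name for it (the $1/n$ volume-budget initialization and the constant per-step failure probability) is in fact where the \emph{first} logarithm comes from in Garg--Vazirani--Yannakakis-style region growing. To repair the argument you need the recursive decomposition sketched above, plus an explicit charging scheme showing that literal pairs left unseparated at one level of the recursion are handled at a deeper level while the per-level cut cost stays $O(\log n)\cdot\phi$.
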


\noindent
Now we show that given a degree-$d$ Sherali-Adams solution $\rho$ with $d \geq 3$, we can obtain a feasible solution to (LP-2-SAT) whose value is the same as the value of the Sherali-Adams relaxation. We remark that this is a standard fact, but we prove it here for completeness.

\begin{claim}\label{lemma:convertLP}
Let $d \ge 3$ be an integer and let $\mu \in \cD(V,d)$. Then, one can construct a feasible solution to {\upshape{(LP-2-SAT)}} with objective value at most $ |\calC|\val(\cI',\mu)$.\footnote{We note that $|\calC|$ is a scaling constant since $\val$ is defined as an average, whereas the objective of (LP-2-SAT) is defined as a sum.} 
\end{claim}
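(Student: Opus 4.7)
The plan is to construct a feasible solution $d$ to (LP-2-SAT) directly from the pseudodistribution $\mu$. For any two literals $\ell_1, \ell_2$ corresponding to variables $v_1, v_2 \in V_\rU$ with polarity bijections $p_1, p_2$, I would define
\[
d(\ell_1, \ell_2) \;=\; \Pr_{\sigma \sim \mu_{\{v_1,v_2\}}}\!\bigl[\,p_1(\sigma_{v_1}) = 1 \text{ and } p_2(\sigma_{v_2}) = 0\,\bigr],
\]
with the convention that if $v_1 = v_2$, the expression is interpreted through $\mu_{\{v_1\}}$. This is the natural \emph{directed} version of the classical Klein--Rao metric for \mintwo: on an integral assignment it reduces to the $\{0,1\}$-indicator that $\ell_1$ is true and $\ell_2$ is false, which yields the standard shortest-path formulation on the implication graph used in \cite{klein1997approximation}.

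The next step is verifying feasibility. Nonnegativity is immediate. The polarity constraint holds with equality, since
\[
d(v, \neg v) + d(\neg v, v) = \Pr_{\sigma \sim \mu_{\{v\}}}[\sigma_v = 1] + \Pr_{\sigma \sim \mu_{\{v\}}}[\sigma_v = 0] = 1,
\]
which uses only the degree-$1$ marginals of $\mu$. The triangle inequality is where the proof leverages the Sherali--Adams guarantee: since $d \ge 3$, the joint local distribution $\mu_{\{v_1, v_2, v_3\}}$ exists whenever $\ell_1, \ell_2, \ell_3$ involve three distinct variables, and its consistency with the pairwise marginals gives
\[
d(\ell_1, \ell_3) \;=\; \Pr[p_1 = 1, p_2 = 1, p_3 = 0] + \Pr[p_1 = 1, p_2 = 0, p_3 = 0] \;\le\; \Pr[p_2 = 1, p_3 = 0] + \Pr[p_1 = 1, p_2 = 0],
\]
which equals $d(\ell_2, \ell_3) + d(\ell_1, \ell_2)$. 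The degenerate cases in which two or three of the $v_i$ coincide are handled by direct inspection using the degree-$2$ or degree-$1$ marginals.

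Finally, I would compute the LP objective. For each clause $(\ell_1 \lor \ell_2) \in \calC$ with underlying variables $(v,w)$ and polarities $(p_1, p_2)$, the contribution to the objective is
\[
\tfrac{1}{2}\bigl(d(\neg \ell_1, \ell_2) + d(\neg \ell_2, \ell_1)\bigr) \;=\; \Pr_{\sigma \sim \mu_{\{v,w\}}}[p_1(\sigma_v) = 0,\, p_2(\sigma_w) = 0],
\]
since $\neg \ell_j$ corresponds to the polarity $1 - p_j(\cdot)$. This is exactly the fractional probability that the clause is violated under $\mu$, and summing over $\calC$ recovers $|\calC|\val(\cI', \mu)$. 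The main obstacle is the triangle inequality: this is precisely where the degree-$3$ Sherali--Adams consistency is used, and some casework is required to handle triples of literals whose underlying variables are not distinct.
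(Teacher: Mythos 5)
Your construction is exactly the one in the paper: set $d(\ell_1,\ell_2)$ to the probability under the pairwise local distribution that the implication $\ell_1 \Rightarrow \ell_2$ is falsified, verify the polarity constraint from the singleton marginals, derive the triangle inequality by splitting $d(\ell_1,\ell_3)$ over the two values of the middle literal in the degree-$3$ local distribution, and observe that each clause's objective contribution collapses to its violation probability. The proposal is correct and matches the paper's proof step for step (the paper is equally brief about the degenerate cases with coinciding variables, which you rightly note require only the lower-degree marginals).
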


\begin{proof}
The construction is very straightforward. Then, for each $\ell_1,\ell_2 \in V_{\rU} \cup V_{\rU}^-$, letting $(v,w,p_1,p_2)$ be the tuple representation of $\ell_1 \lor \ell_2$, we set \smash{$d(\ell_1,\ell_2) = \Pr_{\sigma \in \mu_{\{v,w\}}}[p_1(\sigma_v)=1, \, p_2(\sigma_w)=0]$}. In words, the length of the edge $\ell_1,\ell_2$ is the probability that the implication $\ell_1 \implies \ell_2$ is falsified in the pseudodistribution $\mu$. It remains to verify that this solution is feasible. Clearly all the $d(\ell_1,\ell_2)$ are non-negative, and note that for each $v \in V_{\rU}$ we have
$d(v,\neg v) = \Pr_{\sigma \in \mu_{\{v\}}}[\sigma=1]$ and $d(\neg v,v) = \Pr_{\sigma \in \mu_{\{v\}}}[\sigma=0]$, which means $d(\neg v,v)+ d(v,\neg v)=\Pr_{\sigma \in \mu_{\{v\}}}[\sigma=0]+\Pr_{\sigma \in \mu_{\{v\}}}[\sigma=1]=1$. Next, we check the feasibility of the triangle inequality. Let $\ell_1,\ell_2,\ell_3 \in  V_\rU \cup V_\rU^{-}$, and let $(u,v,p_1,p_2)$, $(v,w,p_2,p_3)$, $(u,w,p_1,p_3)$ be the tuple representation of $\ell_1\lor\ell_2$, $\ell_2\lor\ell_3$, $\ell_1\lor\ell_3$, respectively. Then
\begin{align*}
& d(\ell_1, \ell_3) \\
= & \Pr_{\sigma \sim \mu_{\{u,w\}}} [p_1(\sigma_u) = 1, \, p_3(\sigma_w) = 0] \\ 
= & \Pr_{\sigma \sim \rho_{\{u,v,w\}}}[p_1(\sigma_u) = 1, \, p_2(\sigma_v) = 0, \, p_3(\sigma_w) = 0] + \Pr_{\sigma \sim \rho_{\{u,v,w\}}}[p_1(\sigma_u) = 1, \, p_2(\sigma_v) = 1, \, p_3(\sigma_w) = 0] \\
\leq & \Pr_{\sigma \sim \mu_{\{u,v\}}} [p_1(\sigma_u) = 1, \, p_2(\sigma_v) = 0]+ \Pr_{\sigma \sim \mu_{\{v,w\}}} [p_2(\sigma_v) = 1, \, p_3(\sigma_w) = 0] \\
\leq & d(\ell_1, \ell_2) + d(\ell_2, \ell_3)
\end{align*}
as desired. The LP value of this solution $\frac{1}{2}\sum_{(\ell_1 \lor \ell_2) \in \calC} d({\neg \ell_1},{\ell_2}) + d({\neg \ell_2},{\ell_1})$ is upper-bounded as
\begin{align*}
& \frac{1}{2}\sum_{(\ell_1 \lor \ell_2)=(v,w,p_1,p_2) \in \calC} \Pr_{\sigma \sim \mu_{\{v,w\}}}[\inv(p_1(\sigma_v))=1, \, p_2(\sigma_w))=0]+ \Pr_{\sigma \sim \mu_{\{v,w\}}}[\inv(p_2(\sigma_w))=1, \, p_1(\sigma_1))=0]  \\
= &  \frac{1}{2}\sum_{i,j \in C} 2\Pr_{\sigma \sim \mu_{\{v,w\}}}[p_1(\sigma_v))=0, \, p_2(\sigma_w))=0] \\
\leq & |\calC|\val(\cI',\mu) \, ,
\end{align*}
where the last inequality follows from the definition of $\val(\cI',\mu)$. 
\end{proof}
\end{proof}

	\section{Hardness of \minnaethree\ on dense instances}
We show that solving \minnaethree exactly on dense instances is almost as hard as on general instances (which is known to be \nph \cite{sch78}).

\begin{claim}
For any small enough constant $0<\eps <1/1000$, there is an approximation-preserving polynomial time reduction from a general instance of \minnaethree to a dense instance of \minnaethree whose constraint hypergraph has at least $(1 - \eps) \binom{n}{3}$ constraints and every variable appears in at least $(1-\eps)\binom{n}{2}$ constraints. 
\label{claim:hardness-dense}
\end{claim}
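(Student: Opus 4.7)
My plan is a standard padding argument. Given a general instance $\cI_0=(V_0,\cP_0)$ of \minnaethree with $n_0=|V_0|$, I will introduce a set $V_1$ of $N=\lceil 10 n_0/\eps\rceil$ fresh ``padding'' variables and set $V=V_0\cup V_1$, $n=n_0+N$. The new instance $\cI=(V,\cP)$ will consist of all the original clauses of $\cP_0$ together with one fresh $\nae$ constraint $P_S$ for each triple $S\in\binom{V}{3}$ with $|S\cap V_1|\geq 2$. For each such padding constraint I will pick two variables $v,w\in S\cap V_1$ and set the polarities $P_S^v=\id$, $P_S^w=\inv$, leaving the third polarity arbitrary. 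With this choice $(P_S^v(0),P_S^w(0))=(0,1)$, so the triple of literal values is never constant when every padding variable takes the value $0$, irrespective of the value assigned to the third variable of $S$. In particular, every padding constraint is automatically satisfied by the padding-$0$ extension of any assignment to $V_0$.

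The analysis will then show that $\opt(\cI)=\opt(\cI_0)$ when measured as the absolute number of violated clauses, from which approximation preservation follows at once: a $\rho$-approximate assignment $\beta$ for $\cI$ restricts to the $\rho$-approximate assignment $\beta|_{V_0}$ for $\cI_0$, since both the value and the optimum match across the two instances. The inequality $\opt(\cI_0)\leq\opt(\cI)$ is immediate because $\cP_0\subseteq\cP$ and every clause of $\cP_0$ depends only on $V_0$, so $\beta|_{V_0}$ violates at most as many clauses of $\cI_0$ as $\beta$ violates in $\cI$. The reverse inequality $\opt(\cI)\leq\opt(\cI_0)$ is witnessed by extending an optimal assignment of $\cI_0$ by zeros on $V_1$: every padding clause is then satisfied by construction, so no new violations are introduced.

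I finish by checking the density and degree thresholds through direct counting. The number of triples in $\binom{V}{3}$ with at most one padding variable equals $\binom{n_0}{3}+\binom{n_0}{2}N=O(n_0^3/\eps)$, while $\binom{n}{3}=\Theta(n_0^3/\eps^3)$, so $|\cP|\geq(1-\eps)\binom{n}{3}$ for the chosen $N$. Each $y\in V_1$ appears in every triple of $V$ containing it except the $\binom{n_0}{2}$ triples whose other two variables lie in $V_0$, for a degree of $\binom{n-1}{2}-\binom{n_0}{2}\geq(1-\eps)\binom{n}{2}$. Each $v\in V_0$ appears in at least the $\binom{N}{2}$ triples $\{v,a,b\}$ with $a,b\in V_1$; since $N/n=1-\Theta(\eps)$, this yields $\binom{N}{2}\geq(1-\eps)\binom{n}{2}$. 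The only genuine subtlety --- the step I expect to be the main obstacle --- is the polarity choice for the padding clauses, which exploits that every padding constraint contains at least two padding variables to force satisfaction under the intended $0$-assignment; everything else reduces to elementary bookkeeping.
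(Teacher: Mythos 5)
Your proposal is correct and follows essentially the same construction as the paper: pad with $\Theta(n_0/\eps)$ dummy variables, place a \nae constraint on every triple containing at least two dummies with one positive and one negated dummy literal (so a fixed constant assignment to the dummies satisfies all of them), and observe that the optimum is unchanged; the counting is identical. The only cosmetic differences are that the paper fixes the dummies to true rather than to $0$, and argues the forward direction by flipping all dummies to true in a given solution rather than by restriction, but both arguments are valid.
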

\begin{proof}
Given a general instance of \minnaethree with variable set $V_0$ with $n_0 = |V_0|$ and constraints $\calC$, add $|V_d|= O(n_0 / \eps)$ dummy variables. The total number of variables becomes $n=n_0+|V_d|$. First, we add $\binom{|V_d|}{3}$ constraints on $V_d$, to make sure that the all-true assignment satisfies them. To ensure this, for every three variables $v_1, v_2, v_3\in V_d$, we add a \nae clause $(v_1, v_2, \neg v_3)$. Now, we add the constraints between the dummy variables $V_d$ and $V_0$. For all pairs of variables $v_1, v_2\in V_d$ and variables $v\in V_0$, add a \nae constraint $(v, v_1, \neg v_2)$. Output the original instance combined with the dummy variables and constraints. The number of constraints is at least \smash{$\binom{|V_d|}{3} +  |V_0|\cdot \binom{|V_d|}{2}= O(n_0^3/\eps^3)$}, which is at least \smash{$(1-\eps)\binom{n}{3}$} for a small enough $\eps$. Moreover, the number of clauses each variable appears in, is at least $\binom{|V_d|}{2}\geq (1-\eps) \binom{n}{2}$ for a small enough $\eps$.

For any assignment of the original variables, one can easily satisfy all the dummy constraints by setting all the dummy variables to true. In the other direction, for any assignment of the new instance, changing all dummy variables to true will only satisfy more constraints, so it will possibly violate only the original constraints. 

Therefore, the optimal values of the two instances are the same.
\end{proof}
	
\end{document}